\newcounter{programCounter}
\crefname{programCounter}{Program}{Programs}
\DeclareRobustCommand{\Arrow}[1][]{%
	\check@mathfonts
	\if\relax\detokenize{#1}\relax
	\settowidth{\dimen@}{$\m@th\rightarrow$}%
	\else
	\setlength{\dimen@}{#1}%
	\fi
	\sbox\z@{\usefont{U}{lasy}{m}{n}\symbol{41}}%
	\begin{picture}(\dimen@,\ht\z@)
		\roundcap
		\put(\dimexpr\dimen@-.7\wd\z@,0){\usebox\z@}
		\put(0,\fontdimen22\textfont2){\line(1,0){\dimen@}}
	\end{picture}%
}
\newcommand{\pushright}[1]{\ifmeasuring@#1\else\omit\hfill$\displaystyle#1$\fi\ignorespaces}
\newcommand{\pushleft}[1]{\ifmeasuring@#1\else\omit$\displaystyle#1$\hfill\fi\ignorespaces}
\newcommand{\specialcell}[1]{\ifmeasuring@#1\else\omit$\displaystyle#1$\ignorespaces\fi}
\newtheorem{corollary}{Corollary}
\newtheorem{assumption}{Assumption}
\newtheorem{theorem}{Theorem}
\newtheorem{lemma}{Lemma}
\definecolor{mygreen}{rgb}{0.0, 0.5, 0.0}
\definecolor{myorange}{rgb}{1, 0.7, 0.1}
\newtheorem*{theorem-non}{Theorem}
\newtheorem{definition}[theorem]{Definition}
\newtheorem{claim}[theorem]{Claim}
\title{Agent-Designed Contracts: \\
	How to Sell Hidden Actions}
\author{
	Martino Bernasconi\\
	Bocconi University\\
	\texttt{martino.bernasconi@unibocconi.it}
	\And
	Matteo Castiglioni\\
	Politecnico di Milano\\
	\texttt{matteo.castiglioni@polimi.it}
	 \And
	Andrea Celli\\
	Bocconi University\\
	\texttt{andrea.celli2@unibocconi.it}
}
\begin{document}
	
\maketitle

\begin{abstract}
	We study the problem faced by a \emph{service provider} that has to sell services to a \emph{user}. In our model the service provider proposes various payment options (a menu) to the user which may be based, for example, on the quality of the service. Then, the user chooses one of these options and pays an amount to the service provider, contingent on the observed final outcome. 
	Users are not able to observe directly the action performed by the service provide to reach the final outcome. This might incentivize misconduct. Therefore, we propose a model that enforces trust through economics incentives.
	The problem has two crucial features: i) the service provider is responsible for \emph{both} formulating the contract and performing the action for which the user issues payments, and ii) the user is unaware of the true action carried out by the service provider, which is \emph{hidden}.
	We study this \emph{delegation problem} through the lens of contract design, with the overarching goal of enabling the computation of contracts that guarantee that the user can \emph{trust} the service provider, even if their action is hidden. 
	
	We study the Bayesian problem in which users are characterized by $n$ types. We start by showing that there is no polynomial algorithm that can approximate the optimal menu within any additive factor, \ie the problem is not in $\mathsf{APX}$. However, we find that the problem can be efficiently solved if the service provider is forced to provide a menu of constant size. 
	Our algorithm exploits a reduction to a related multi-item pricing problem with unit-demand and price-floors, and take advantage of the reduced dimensionality of this problem. 
	Then, we extend the computational problem in two directions. First, we study the case of continuous actions, which is crucial for applications. Surprisingly, the technical analysis of this setting also enables the computation of menus that are robust with respect to errors in the description of the input instance. Second, we extend the model to handle menus of randomized payments, and we show that optimal menus of randomized payments can be computed in polynomial time and provide at least $\Omega(n)$ times more utility than optimal deterministic menus.
\end{abstract}

\section{Introduction}

Contract design is a cornerstone of modern economic theory \cite{holmstrom1979moral,grossman1992analysis}. This framework describes how a principal can incentivize agents to perform favorable actions with moral hazard concerns. The computational study of this model has recently garnered significant attention \cite{dutting2019simple,dutting2021complexity,castiglioni2021bayesian,guruganesh2021contracts}.
One practical application in which contract design proved to be particularly relevant is to describe problems broadly related to machine learning pipelines, such as data collection~\cite{ananthakrishnan2023delegating}, data selling~\cite{chen2022selling}, and delegated classification~\cite{saig2023delegated}.

As machine learning models deployed across different sectors of society become larger and more complex, their training procedures become increasingly demanding. In this scenario, a party with limited computational resources (\ie the \emph{user}) may outsource the training procedure to a ``powerful'' untrusted party (\ie the \emph{service provider}). 
In this setting, the service provider is usually in charge of proposing to the user some payment options. This is in striking contrast with the standard \emph{principal-agent} set-up, in which the principal proposes a contract and the agent performs an action \cite{holmstrom1979moral}.
To draw a parallelism between the two settings, in our terminology the service provider corresponds to the agent as it's the party performing the action, while the user is the principal, as it is the one delegating the task and lacks observability on the performed action. 
In this paper, we study a contract design problem which encapsulates the main features of this kind of delegation problems. In particular, the service provider (\ie the agent) is the designer of the contract, being the dominant party in the interaction.

In our model, the service provider is responsible for \emph{both} formulating the contract and performing the action for which the user issues payments. A user has a type drawn from a finite set according to a known distribution. Since the service provider is not fully aware of user's preferences, they usually propose a \emph{menu} of possible options to adapt to all different user preferences. As an example, modern cloud providers that offer machine learning training services allow users to select their preferred ‘‘machine'' and desired features.\footnote{For instance, see the pricing schemes of \href{https://aws.amazon.com/it/sagemaker/pricing/?p=pm&c=sm&z=2}{Amazon SageMaker} or \href{https://cloud.google.com/vertex-ai/pricing?hl=it}{Googles's Vertex AI}.} In general, a menu is a set of payment schemes, each of them binding outcomes to a desired payment. 
For instance, in our machine learning training example, outcomes may be defined in terms of the test error achieved by the trained model. 
In this context, oftentimes there is no method for the user to verify the actions undertaken by the service provider.
Therefore, the risk of moral hazard is usually ignored by assuming that the service provider behaves in a trustworthy way. 
In our model, we eliminate this assumption and establish trust through the design of suitable menus of payment schemes. These menus are carefully crafted so that the service provider has no interest in deviating from the user's selected action. 
In order to achieve that, our model resorts to outcome-based payment schemes, in which payments are contingent on the final outcomes reached through the service provider's actions.

\subsection{Challenges, Results, and Techniques}

We provide a computational characterization of the problem faced by a service provider who has to design a contract under Bayesian uncertainty on the user's type. 

\xhdr{Negative result.} 
We start by focusing on the problem of computing optimal menus of \emph{deterministic payment schemes}. A deterministic payment scheme maps each action of the service provider to a payment vector that specifies a payment for each possible outcome. An optimal menu may consist of multiple distinct payment schemes.
 In \Cref{thm:negative}, we show that, unless $\mathsf{P}=\mathsf{NP}$,  for any constant $\rho>0$ it is impossible to compute in polynomial time a multiplicative approximation up to within a factor $\rho$ of the service provider's optimal expected utility for an optimal menu of deterministic payment schemes.

\xhdr{Optimal Simple Menus.} Motivated by the negative results of \Cref{thm:negative}, we study the problem of computing \emph{menus with low complexity}, meaning that they include only a few payment schemes. 
This is also motivated by practical scenarios where service provider typically presents to users only a limited number of options.
The rationale for this approach is twofold: first, it simplifies and streamlines the interaction between the user and the service provider. Second, it has been empirically observed in diverse scenarios that users may be negatively influenced by behavioral biases leading to reduced purchases when presented with large sets of options \cite{thaler2015misbehaving,oulasvirta2009more,iyengar2000choice,sethi2004much,schwartz2004paradox}.
This model includes, as a special case, the problem of computing an optimal contract, that is the case in which the menu contains a single payment scheme. 
In standard principal-agent problems, computing optimal contracts is known to be \NPHARD \cite{guruganesh2021contracts,castiglioni2021bayesian}.
Surprisingly, in the case of agent-designed menus, an optimal menu can be computed in polynomial time whenever its size $k$ is constant. 
In standard principal-agent problem, the problem is made computationally intractable by the large outcome space, and it becomes tractable when the number of outcomes is constant  \cite{guruganesh2021contracts,castiglioni2021bayesian}.
We show how to solve the problem by reducing it to a \emph{unit-demand pricing problem with price-floors}, which has a reduced dimensionality with respect to the original problem. In this pricing problem, there is one item for sale for each of the payment options available in the original problem. The seller sets a price for each of these items. Then, a unit-demand buyer with a Bayesian type buys the item maximizing their utility. Finally, the item chosen by the buyer is produced by the seller, thereby incurring in some production costs. Interestingly, we observe that the incentive-compatibility constraints of the service provider translate into lower bounds on the allowed prices for each item, which we call ``price-floors''. We show that, given a solution to the pricing problem, one can compute in polynomial time a solution to the original contract-design problem, and vice-versa. Then, we show that if the number of items is constant, we can solve the pricing problem in polynomial time. This result has a number of immediate consequences: there exists a polynomial-time algorithm for the problem of computing an optimal agent-designed menu when the complexity of the menu is constant, when the number of types is constant, and when the number of actions of the service provider is constant. 

\xhdr{Trade-off between utility and complexity.} Following up on the positive results on the computation of optimal menus of constant size $k$, we investigate the efficiency of such simple menus. In particular, we study the fraction of the optimal service provider's utility which can be extracted by a deterministic menu of size $k$. 
We show that a menu of size $k$ can extract at least a $k/\min(n,\ell)$ fraction of the expected utility of an optimal menu, where $n$ is the number of user's types, and $\ell$ is the number of actions available to the service provider. 
To complete the picture, we show that this ratio is tight in terms of the worst-case maximum utility that can be extracted by using a menu of size $k$.

\xhdr{Robustness.} In practical situations, it is unlikely that the service provider has perfect knowledge of the precise reward function linked to each user's type. However, the service provider may be able to compute reasonable estimates of these rewards, for instance utilizing historical data. Then, a natural question is what can be achieved by the service provider if they only have access to estimates of the users' rewards. 
In this setting, the service provider can compute an optimal menu using an estimate of users' reward matrices. Therefore, such menu is IC and IR for the user only with respect to these estimates.
We make the assumption that outcome distributions and costs of similar actions should be sufficiently similar.
We show that, if the $\ell_\infty$-norm between the true and estimated reward matrices is at most $\delta$, then we can use the menu computed on the estimated rewards to build a new menu with the following features: i) the service provider's expected utility is at most an additive factor $O(\sqrt{\delta})$ away from the  utility of the original menu; ii) the new menu is almost IC for the service provider. 
The key observation is that both property i) and ii) hold even when users are taking decisions (\ie best-responding) on the basis of their true utilities, although they remain unknown to the service provider.  
We achieve this result through a carefully crafted ``reimbursement strategy'', according to which the service provider slightly decreases the magnitude of payments, and pays back the user a small fraction of their earnings.

\xhdr{Continuous actions.} The set of actions available to the service provider may be either continuous or so large to the extent that it may be regarded as continuous for practical purposes. These scenarios include our motivating application of delegating the training procedure of complex machine learning models. In this context, the action set may be the computing time allocated to a certain task. 
Here, our analysis makes the natural assumption that each outcome can be reached with sufficiently high probability, if an adequate effort is devoted to the task.
We show that this setting can be addressed with a two-phase algorithm. First, we solve a discretized version of the continuous problem under relaxed user’s IC and IR constraints. Then, we prove that this solution gives an expected utility to the service provider which is close to the value of an optimal solution of the problem with continuous actions. However, this solution only approximately satisfies IC and IR constraints for the user. Starting from here, an IC and IR menu for the user can be computed through the techniques we developed for computing robust menus. 

\xhdr{Randomization.} Finally, we explore whether exploiting randomization within menus can offer any advantage for the service provider. A menu of \emph{randomized payment schemes} is a set of payment schemes which, in this case, are composed of a probability distribution over actions and an action-dependent payment rule associating a payment to each outcome. When the user selects a randomized payment scheme, the action performed by the service provider is sampled from their ``randomized action-plan''.
We show that introducing randomization yields two major advantages for the service provider: i) first, there exist instances where the service provider's expected utility that can be extracted through a randomized menu is at least a multiplicative factor $\Omega(n)$ greater than the utility attained through an optimal deterministic menu; ii) second, an optimal menu of randomized payment schemes can be computed in polynomial time. Therefore, randomization allows us to circumvent the computational hardness result for deterministic menus. To achieve this, we show that the problem can be written as a quadratic optimization problem which can be relaxed through a linear program with polynomial size. This is in stark contrast with what happens in classical contract design problems, where an optimal menu of randomized payment schemes may not even exist in general, and one can only approximately reconstruct a solution from the relaxed problem \cite{castiglioni23design,gan2022optimal}.

\subsection{Related Work}

The study of contract design from a computational perspective has received increasing attention in recent years.
Most of the research concentrates on single-agent contract design. In particular, \citet{babaioff2014contract} study the complexity of contracts in terms of the number of different payments that they specify, while \citet{dutting2019simple} analyze the efficiency of linear contracts with respect to optimal ones.
A recent line of works extends the analysis to Bayesian settings, in which the principal knows a probability distribution over agent's type~\cite{guruganesh2021contracts,castiglioni2021bayesian,alon2021contracts}.
\citet{castiglioni23design} introduce the class of menus of randomized contracts and show that they can be computed efficiently. \citet{gan2022optimal} study a generalization of the principal-agent problem to settings where the principal's action space may be infinite and subject to design constraints, and they prove that optimal randomized mechanisms can be computed efficiently.
Another line of research explored how to extend the single-agent model to scenarios with combinatorial actions \citep{dutting2021complexity,dutting2022combinatorial, dutting2024combinatorial, deo2024supermodular}. Finally, some works focus on settings with multiple agents~\cite{babaioff2006combinatorial,babaioff2012combinatorial,dutting2023multi,castiglioni23multi}.
Moreover, there is a line of works that analyzes robustness in classical contract design. In particular, \citet{carroll2015robustness, carroll2019robustness} studies the setting in which the principal only knows a small set of the actions (also called technologies in these works) available to the agent, and finds that linear contracts are worst-case optimal in this setting. \citet{dutting2019simple} studies a similar problem in which the principal has first-order ``moment information'' about the rewards of the actions, and performs a worst-case analysis in such setting. In the theoretical computer science literature, robustness was also considered in Stackelberg games \citep{gan2023robust} and Bayesian Persuasion \citep{babichenko2022regret,feng2024rationality}.

Similar to this paper, other works find inspiration in problems related with machine learning.
\citet{chen2022selling} consider the problem of selling data to a machine learner, who purchases data to train a machine learning model. They assume that the seller is the mechanism designer, and derive optimal and approximated selling mechanism for various settings.
\citet{ananthakrishnan2023delegating} and \citet{saig2023delegated} study how to delegate the training of a model. The key distinction from our model is that they both assume that the principal (i.e., the party delegating the task) is the contract designer. 
In particular,  \citet{ananthakrishnan2023delegating} study how to incentivize an agent to collect samples used to train a classifier, and \citet{saig2023delegated} study a similar problem under budget constraints.

\section{Model}\label{sec:prelim}

An instance of the \emph{delegation problem} is characterized by a tuple $\delegation\defeq (\Theta,\vxi,\Omega, \cA, \mF, \mR, \vc)$, where: $\Theta$ is a finite set of $n$ user's types and $\vxi\in\Delta(\Theta)$ is the distribution of users over such types; $\Omega$ is a finite set of $m$ outcomes; $\cA$ is a finite set of $\ell\coloneqq|\cA|$ actions available to the service provider; $\mF$ is an $m\times \ell$ left stochastic matrix specifying for each action $a \in\cA$ a probability distribution over the outcomes, that is each column is such that $\mF_a\in\Delta(\Omega)$ for all $a\in\cA$; $\mR$ is an $m\times n$ matrix specifying the user's reward for each pair outcome/user's type, \ie each columns $\mR_\theta\in[0,1]^m$ is the reward of an user of type $\theta\in\Theta$ for all $m$ outcomes; finally, the $\ell$-dimensional vector $\vc$ specifies the cost incurred by the service provider in performing each action, that is $c_a$ is the cost of performing action $a$.\footnote{In this work, for any finite set $\cS$, we denote with $\Delta(\cS)$ the set of possible distributions over $\cS$. For any $k\in\naturals$, we denote with $\range{k}$ the set $\{1,\ldots, k\}$.}

\subsection{Model of interaction and optimization problem}
In this work, the goal of the service provider is to maximize its own utility by designing a suitable payment mechanism for the user. We start by considering the following \emph{deterministic payment mechanism} implemented by a \emph{menu} of \emph{deterministic payment schemes}. A menu of deterministic payment schemes $\PS=\{(a_i,\vp_i)\}_{i\in\range{k}}$ is characterized by $k\in\naturals$ options presented to the user, where each option corresponds to an action $a_i\in\cA$ and a payment vector $\vp_i\in\reals^{m}_{\ge 0}$, defining a payment for each of the $m$ outcomes. We refer to each tuple $(a,\vp)\in\PS$ as a \emph{payment scheme}. Moreover, the user can also decide to disregard all presented options and decide to \emph{opt-out} from the mechanism. The user's choice to opt-out is denoted by the symbol $\varnothing$. We refer to the carnality $k$ of a menu $P\in\cP_k$ as \emph{menu complexity}.
The set of all deterministic menu payment schemes of complexity $k$ is denoted with $\cP_k$, while the set menus of deterministic payment schemes is $\cP=\cup_{k\in\mathbb{N}}\cP_k$.
In \Cref{sec:randomized} we will extend the mechanism to include menus of \emph{randomized} payment schemes.

The interaction between the user and the service provider proceeds as follows:

\begin{enumerate}[label={\roman*)}] 
	\item The {service provider} commits to a menu of deterministic payment schemes $\PS=\{(a_i, \vp_i)\}_{i\in \range{k}}$;
	\item The user's type $\theta\in\Theta$ is drawn according to the type distribution $\vxi$. The type $\theta$ is hidden from the {service provider}; 
	\item The user selects a payment scheme $(a, \vp)\in\PS$ or the opt-out option $\varnothing$;
\end{enumerate}
	 If the user selected a payment option $(a, \vp)$ at step iii) then:
	\begin{enumerate}[start=4, label={\roman*)}] 
	\item The {service provider} performs action $\hat a\in \cA$ and incurs in cost $c_{\hat a}$. The action $\hat a$ is hidden from the user. In principle, it may be the case that $\hat a\neq a$;

	\item An outcome $\omega$ is sampled according to the outcome distribution $F_{\hat a}$;

	\item The {service provider} receives payment of $p(\omega)$ from the user.

\end{enumerate}

\noindent Alternatively, if the user has chosen the opt-out option $\varnothing$, no further action occurs, resulting in zero utility for both the {service provider} and the user.

\xhdr{User's behavior.} 
By following the above interaction, the user's expected utility for selecting a payment scheme $(a,\vp)\in\PS$, assuming the service provider executes action $a$, can be written as $\mF_{a}^\top (\mR_\theta-\vp)$. The corresponding expected utility of the service provider can be computed as $\mF_{a}^\top\vp-c_{a}$.
Given a menu of deterministic payment scheme $\PS\in\cP_k$, a user of type $\theta$ selects the payment scheme that maximizes their own expected utility. Formally, we specify the option selected by the user through a selection function $\ui:\Theta\to\range{k}\cup\{\varnothing\}$, which is such that 
\begin{equation}\label{eq:selection func}
\ui(\theta\mid \PS)\defeq  \mleft\{\hspace{-1.25mm}\begin{array}{l}
	\displaystyle
	\varnothing \hspace{4cm} \textnormal{if }\,\, \mF_{a_i}^\top (\mR_\theta-\vp_i)< 0 \textnormal{ for all } i\in\range{k}\\[4mm]
	\displaystyle j \in \argmax_{ i\in I(\theta|\PS)}\left\{\mF_{a_i}^\top \vp_i-c_{a_i}\right\}\hspace{.5cm}\textnormal{otherwise}
\end{array}\mright.
\end{equation}
where $I(\theta\mid\PS)\coloneqq  \arg\max_{i\in\range{k}} \left\{\mF_{a_i}^\top (\mR_\theta-\vp_i)\right\}$, meaning that ties are resolved in favor of the {service provider}. Therefore, when none of the $k$ deterministic contracts prove to be profitable for the user, the selection function chooses $\varnothing$; otherwise, it returns one of the options that maximizes the expected user's utility. To simplify notation, we omit the dependence of the selection function on the payment scheme $\PS$ when it is clear from the context.

\xhdr{IC constraints.} 
When describing the user's best response, we assumed that the {service provider} behaves ``truthfully'' by performing the action $a\in\cA$ corresponding to the payment scheme $(a_i,\vp_i)$ selected by the user. However, the {service provider} could have incentives to play a different action $\hat a\neq a_i$ when payment scheme $(a_i,\vp_i)\in\PS$ is selected. Given that this realized action $\hat a$ remains concealed from the user, who has only partial information on the actions undertaken by the service provider, one important feature of each payment scheme is ensuring the user can trust the platform when delegating tasks. 
This trust is established by ensuring that the {service provider} is incentivized to perform the action corresponding to the user's ``contractual'' choice. As common in the literature, we refer to these constraint as {service provider}'s \emph{incentive compatibility constraints} (\icsp). We say that the payment scheme $(a,\vp)$ is IC if the action $a$ is a best response to the corresponding payment $\vp$. More precisely, we say that a menu of deterministic contracts is IC if, for all $i\in\range{k}$, we have that $a_i\in\argmax_{a\in \cA}\left\{\mF_{a}^\top\vp_i-c_a\right\}$.

\subsection{Optimization Problems under Deterministic Payment Schemes}\label{sec:opt prog}

First, we provide a description of the optimization problem for computing an optimal menu of deterministic payment schemes of a fix menu complexity $k$. Thus, initially, we are thinking about $k$ as a parameter of the problem, given by external constraints, not under the control of the service provider. In later sections we will discuss the dependency of the utility of menus of deterministic payments schemes as a function of $k$. 
Second, we describe the optimization problem for computing optimal direct menus of deterministic payment schemes. 

\xhdr{Optimal Menu of Deterministic Payment Schemes.} 
Now we describe the optimization problem faced by the service provider that has to find optimal menu of deterministic payment scheme of fixed complexity $k$.
To simplify the exposition --- with slight abuse of notation--- we introduce symbols $\vp_{\varnothing}=\vzero$, $c_{a_\varnothing}=0$, and $\mF_{a_\varnothing}=\vzero$.
Given an instance $(\Theta,\vxi,\Omega, \cA, \mF, \mR, \vc)$ of the delegation problem, and given a size $k$ for the menu, the {service provider}'s goal is to design a menu of $k$ contracts which is incentive compatible and maximizes its own utility:
\begin{equation}\label{prog:determ}
	\OPT_k\coloneqq
	\mleft\{\begin{array}{cll}
		\displaystyle
		\max_{\PS=\{(a_i,\vp_i)\}_{i\in\range{k}}\in\cP_k} &\displaystyle \sum_{\theta\in\Theta}\xi_\theta\left(\mF_{a_{\ui(\theta\mid\PS)}}^\top \vp_{\ui(\theta\mid\PS)}-c_{a_{\ui(\theta\mid\PS)}}\right) &  \\[5mm]
		\textnormal{s.t.}&   \mF_{a_i}^\top \vp_i-c_{a_i} \ge \mF_{a'}^\top \vp_i-c_{a'}& \forall a'\in\cA, \forall i\in\range{k}
	\end{array}\mright.
\end{equation}

\noindent 
We observe that in the aforementioned program, we did not explicitly impose any \emph{individual rationality} (IR) constraints on the {service provider}, as these are implicitly guaranteed by the fact that the payment mechanism is designed by the {service provider} itself. We refer to \Cref{app:IR} for a discussion on this.

\xhdr{Direct Menus of Deterministic Payment Schemes.}
An important class of menus of deterministic payment schemes is the one of \emph{direct} menus of deterministic payment schemes. Through standard revelation-principle-style arguments, it can be shown that the optimal value achievable through a menu of deterministic payment schemes can always be attained by a menu of complexity $k=n$ that is both direct and incentive compatible. 
This class is more powerful then the one of fixed complexity $k$ in terms of utility that can be achieved (we will discuss in \Cref{sec:classes} the fraction of service provider utility that can be harvested when the size $k$ is fixed.), however they can be reasonably employed in practice only when the number of types $n$ is small, as thus they have limited applicability in practice.

A direct menu can be tough of a menu in which each pair $(a,\vp)\in\PS$ as being associated to one specific type in $\Theta$. Then, a user of type $\theta$ has a corresponding deterministic payment scheme $(a_\theta,\vp_\theta)$, and this payment scheme is preferred by the user over the payments schemes $(a_{\theta'},\vp_{\theta'})$ associated to other types $\theta'\ne\theta$ and over the out-out option. 
Formally, let
\[
	\cM\defeq (\cA\times \R^m_{\ge 0}) \cup \{(a_\varnothing,\vp_\varnothing)\}
\]
be the set of possible deterministic payments schemes, extended with the fictitious pair $(a_\varnothing,\vp_\varnothing)$, corresponding to the case in which the service provider suggests to the user to abstain from participating in the mechanism at a cost of 0. Then, the service provider can compute an optimal direct menu of deterministic payment schemes as follows

\begin{subequations}\label{prog: optimal direct deterministic}
	\begin{empheq}[left=\OPT\defeq\empheqlbrace]{align}
		\max\limits_{\PS=\{(a_\theta,\vp_\theta)\in\cM\,:\,\theta\in\Theta\}} \hspace{.2cm}&\sum_{\theta\in\Theta}\xi_\theta\left(\mF_{a_{\theta}}^\top \vp_{\theta}-c_{a_{\theta}}\right)\nonumber\\
		\textnormal{s.t.}\hspace{1.3cm} & \mF_{a_\theta}^\top \vp_\theta-c_{a_\theta} \ge \mF_{a'}^\top \vp_\theta-c_{a'} &\forall a'\in\cA, \forall \theta\in\Theta,\label{eq:ICsp2}\\
	   \phantom{s.t.}\hspace{1cm}& \mF_{a_\theta}^\top (\mR_\theta-\vp_\theta) \ge \mF_{a_{\theta'}}^\top (\mR_{\theta}-\vp_{\theta'})&\forall \theta,\theta'\in\Theta,\label{eq:ICuser}\\
	   \phantom{s.t.}\hspace{1cm}& \mF_{a_\theta}^\top (\mR_\theta-\vp_\theta) \ge 0&\forall\theta\in\Theta.
	\end{empheq}
\end{subequations}

\noindent We observe that IC constraints \eqref{eq:ICuser} have an interpretation in terms of the choice function $\ui:\Theta\to\range{n}\cup\{\varnothing\}$. Indeed \Cref{eq:ICuser} can be stated as requiring $\ui(\theta\mid\PS)$ selects $\theta$ for each type $\theta$.

\section{Computing the optimal direct menu is NP-Hard}\label{sec:hardness}

In this section we present a negative result regarding the problem of finding an optimal direct menu of deterministic payment schemes. Formally, we show that the optimal menu of deterministic payment schemes cannot be approximated within any constant multiplicative factor. Then, the problem is \emph{not} in $\mathsf{APX}$. We prove this result by reduction from \IND. In particular, we use that, for any $\epsilon>0$, it is \NPHARD~to distinguish between an instance of \IND~in which the maximum independent set is of size $|V|^{1-\epsilon}$, and one in which all independent sets are of size $|V|^\epsilon$~\cite{hastad1999clique,Zuckerman2007linear}. 

\begin{theorem}\label{thm:negative}
	For every constant $\rho>0$, there exists no polynomial-time algorithm that approximates up to within a $\rho$ multiplicative factor the value of the service provider's expected utility under an optimal (direct) menu of deterministic payment schemes, unless $\mathsf{P}=\mathsf{NP}$.
\end{theorem}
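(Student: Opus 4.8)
The plan is to prove \Cref{thm:negative} by a gap-preserving reduction from maximum independent set, using the hardness regime recalled just above: for every fixed $\epsilon>0$ it is \NPHARD\ to decide whether a graph $G=(V,E)$ satisfies $\alpha(G)\ge|V|^{1-\epsilon}$ or $\alpha(G)\le|V|^{\epsilon}$, where $\alpha(G)$ denotes the size of a maximum independent set. Given such a graph $G$, I would build a delegation instance with one user type $\theta_v$ per vertex $v\in V$ (so $n=|V|$), drawn uniformly by $\vxi$; a ``signature'' outcome $\omega_v$ and a dedicated action $a_v$ for each vertex; and a null action of zero cost that deterministically produces a junk outcome (modelling ``do nothing''). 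The reward matrix $\mR$ and the cost vector $\vc$ are calibrated so that two structural properties hold: (i) the service provider can obtain strictly positive profit from type $\theta_v$ only by placing in the menu an IC payment scheme built on $a_v$ whose payment on $\omega_v$ is large enough, any other scheme yielding at most zero profit on $\theta_v$; and (ii) for every edge $uv\in E$, no IC menu can simultaneously contain a profitable scheme built on $a_u$ and a profitable scheme built on $a_v$. Property (ii) is the heart of the matter, and I would enforce it through the \emph{user's} IC constraints: the outcome distributions of $a_u,a_v$ and the reward columns $\mR_{\theta_u},\mR_{\theta_v}$ are chosen so that $\theta_u$ values the outcomes of $a_v$ (and symmetrically) just enough that, whenever both schemes are present and both profitable, the two user IC inequalities ``$\theta_u$ does not deviate to $v$'s scheme'' and ``$\theta_v$ does not deviate to $u$'s scheme'' become jointly infeasible. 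The intended effect is that the types from which the service provider extracts positive profit always form an independent set of $G$, each contributing $\Theta(1/|V|)$, so that the optimal menu value is $\Theta(\alpha(G)/|V|)$.

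The two directions then read as follows. For completeness, given an independent set $S$, I would exhibit the explicit IC menu that offers, for each $v\in S$, the scheme built on $a_v$ with the largest payment on $\omega_v$ compatible with the service provider's IC constraints, and lets every other type either opt out or take one of the offered schemes; since $S$ is independent, the obstruction of property (ii) never triggers, the menu satisfies all the service provider IC, user IC and user IR constraints, and a fixed positive amount is collected from each $v\in S$, so $\OPT\ge c_1\,|S|/|V|$ for an absolute constant $c_1>0$. For soundness, from any feasible menu of value $t$ let $S$ be the set of types whose contribution to the objective is strictly positive: by (i) each such type is served by a profitable scheme on its own action, by (ii) no two of these can correspond to adjacent vertices, so $S$ is independent, and since each type's contribution is at most $c_2/|V|$ (by the user's IR constraint and the boundedness of $\mR$), we get $\alpha(G)\ge|S|\ge t\,|V|/c_2$.

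Combining the two bounds, the case $\alpha(G)\ge|V|^{1-\epsilon}$ yields $\OPT\ge c_1|V|^{-\epsilon}$ while $\alpha(G)\le|V|^{\epsilon}$ yields $\OPT\le c_2|V|^{\epsilon-1}$, a multiplicative gap of $\Theta(|V|^{1-2\epsilon})$ in the optimal menu value. Fix any constant $\rho>0$ and choose $\epsilon<1/2$: then $(c_1/c_2)|V|^{1-2\epsilon}>\rho$ for all sufficiently large $|V|$, so a polynomial-time algorithm approximating the optimal menu value to within a multiplicative factor $\rho$ would separate the two cases and thus decide the gap independent set problem. Since the reduction is polynomial (the instance has $O(|V|)$ types, outcomes and actions), this would imply $\mathsf{P}=\mathsf{NP}$, proving \Cref{thm:negative}; moreover the argument rules out even approximating the optimal value, not just computing an optimal menu.

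I expect the main obstacle to be the gadget itself: choosing $\mR$, $\mF$ and $\vc$ so that property (ii) holds \emph{exactly} --- profitable schemes of adjacent vertices are mutually exclusive in any IC menu, while those of non-adjacent vertices coexist without interference --- and at the same time closing every escape route: serving a type through an action other than its dedicated one, extracting a sliver of profit by only partially screening a type's surplus, exploiting the extra slack of direct menus and the opt-out option so that a type is profitably served through another type's scheme, and dealing with degenerate (isolated or low-degree) vertices for which the conflict inequalities threaten to be vacuous. The delicate point is to arrange matters so that the service provider's \emph{own} IC constraints on a scheme force its payment vector into exactly the narrow region that makes the scheme attractive to the neighbouring types --- so that being profitable and being in conflict across an edge are two faces of the same inequality --- and making all of these numerical constraints line up simultaneously is where the real work lies.
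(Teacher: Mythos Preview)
Your high-level strategy---reduce from gap Independent Set and identify the profitably-served types with an independent set---matches the paper's. But the gadget you sketch (one type per vertex, uniform distribution, hard mutual exclusion along edges via user IC) cannot work, and the obstacle is structural rather than a matter of tuning parameters. Write the two user-IC inequalities for an edge $uv$ (``$\theta_u$ does not deviate to $v$'s scheme'' and vice versa) and add them: they are jointly infeasible iff $R_{\theta_u}(\omega_v)+R_{\theta_v}(\omega_u)>R_{\theta_u}(\omega_u)+R_{\theta_v}(\omega_v)$, i.e.\ cross-rewards must strictly dominate own-rewards along every edge. But once that holds, your property~(i) collapses: the service provider can post a \emph{single} scheme on some $a_u$ with price just below the large cross-reward and harvest strictly positive profit from every neighbour of $u$, none of whom is being served on its own action. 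On the complete graph this already gives $\OPT=\Theta(1)$ while $\alpha(G)=1$, so your soundness bound $\OPT\le c_2\,\alpha(G)/|V|$ is simply false. Conversely, whenever own-rewards weakly dominate, the two IC inequalities are compatible at $p_u=p_v$ and~(ii) fails: you can profitably offer schemes on all vertices at once. With one type per vertex there is no escape from this dilemma; the service-provider-IC ``narrow region'' you float cannot help either, since with deterministic signature actions the only payment that matters for scheme $v$ is $p_v(\omega_v)$, and the price floor only bounds it from below.

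The paper abandons both ``one type per vertex'' and ``hard exclusion''. Each vertex $v$ carries $N=|V|$ types $\theta_{v,1},\dots,\theta_{v,N}$ with geometrically scaled rewards $M^{-Mv-i}$ and probabilities $\propto M^{Mv+i}$, and the exclusion is \emph{soft}: extracting more than a fixed threshold $2\beta$ from $v$'s block forces the menu to use some $a_{v,\cdot}$ action, because otherwise all $N$ types of $v$ see identical rewards on every offered scheme and therefore all pay the \emph{same} price---and the geometric weights make that common-price revenue at most $2\beta$. Once an $a_{v,\cdot}$ scheme is present, its IR-bounded price is so tiny on the scale of any lower-indexed neighbour $w$ that all of $w$'s types can deviate to it, capping extraction from $w$ at $\beta$. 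The many geometrically weighted copies are precisely what turns ``all types of a block pay one common price'' into a negligible contribution; this price-discrimination leverage is unavailable with a single type per vertex. The resulting gap is $\beta M^{5/3}$ versus $2\beta M^{4/3}$ rather than a clean $\Theta(\alpha(G)/n)$, but it suffices for the inapproximability.
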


\section{Computing optimal menus with small complexity}\label{sec:pricing}

In this section, we present an algorithm for computing an optimal menu of deterministic payment schemes (Problem \eqref{prog:determ}) that runs in polynomial time when the menu complexity is ``small'', that is, the complexity of the menu is small, \ie fixed to a constant $k=O(1)$. 
In order to do that, we reduce the problem to a particular multi-item pricing problem with unit-demand and in which the price of each item has to be greater than or equal to a specific \emph{price-floor} (Problem \eqref{prog:pricing}). We show how to solve the new problem in polynomial time when $k=O(1)$, and how to recover an optimal solution to Problem \eqref{prog:determ} in polynomial time.

\subsection{The Unit-Demand Pricing Problem with Price-Floors}
We start by characterizing the set of \emph{feasible expected payments}, which will be essential in the pricing formulation. Indeed, this set let us encode the IC constraints, associated with the service provider and thus let us reduce to the price problem briefly discussed above. More precisely is this set that introduces the price floors on the items in the pricing view of our problem.

\begin{definition}[Feasible expected payments]\label{def:Q}
	Given an instance $\Gamma$ of the delegation problem, for each action $a\in \cA$ we define the set $Q_a\subseteq \reals_{\ge0}$ as the set of expected payments satisfying the service provider's IC constraints for action $a$ as per \Cref{eq:ICsp2}.
	Formally, 
	\[
	Q_a\defeq\left\{q=\mF_a^\top\vp:\, \mF_a^\top\vp-c_a\ge \max\limits_{a'\in\cA}\mF_a^\top\vp-c_{a'}\right\}.
	\]
\end{definition}

\noindent
Clearly, $Q_a$ is a polytope for every $a\in \cA$, since it is a linear transformation of the polytope of IC payments, which is defined as
\[
\Pi_a\defeq\left\{\vp:\, \mF_a^\top\vp-c_a\ge \max\limits_{a'\in\cA}\mF_{a'}^\top\vp-c_{a'}\right\}\subseteq \reals_{\ge 0}^m,
\]
for each action $a\in\cA$.
Then, since $Q_a$ is $1$-dimensional, it is an interval. Next, we show that we can further characterize $Q_a$. In particular, $Q_a$ is an interval $[l_a, +\infty)$ unbounded above. Intuitively, this signifies that an action of the service provider can be induced by all the expected payment greater than $l_a$.

\begin{restatable}{lemma}{shapeofQ}
	For each action $a\in \cA$, it holds $Q_a=[l_a, +\infty)\subseteq \reals_{\ge 0}$ for some $l_a\ge 0$.
\end{restatable}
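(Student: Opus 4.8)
The plan is to first establish that $Q_a$ is nonempty and unbounded above, and then show it is ``upward closed'' (i.e., if $q \in Q_a$ and $q' \ge q$, then $q' \in Q_a$); together with the already-noted fact that $Q_a$ is a $1$-dimensional polytope, hence a (possibly degenerate) interval, this forces $Q_a = [l_a, +\infty)$ with $l_a \coloneqq \inf Q_a \ge 0$.

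\textbf{Nonemptiness and unboundedness.} I would fix any action $a \in \cA$ and look for payment vectors $\vp$ that make $a$ a best response with a large expected payment. The natural construction is to put all the payment mass on outcomes that action $a$ reaches with relatively high probability compared to every other action. Concretely, since $\mF_a \in \Delta(\Omega)$, there is at least one outcome $\omega$ with $\mF_a(\omega) > 0$; consider the payment vector $\vp = t \cdot \mathbbm{1}_{\omega}$ for a scalar $t \ge 0$, where $\mathbbm{1}_\omega$ is the indicator of outcome $\omega$. Then $\mF_{a'}^\top \vp = t\, \mF_{a'}(\omega)$ for every $a'$, so the service provider's IC constraint at $\vp$ reads $t\, \mF_a(\omega) - c_a \ge t\, \mF_{a'}(\omega) - c_{a'}$ for all $a'$. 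If $\mF_a(\omega) \ge \mF_{a'}(\omega)$ for all $a'$ — which holds for at least one choice of $\omega$ among the $m$ outcomes, since the columns of $\mF$ are probability distributions and one of them must dominate at its own argmax — then this holds for all sufficiently large $t$. Letting $t \to \infty$ shows $\mF_a^\top \vp = t\,\mF_a(\omega) \to \infty$, so $Q_a$ contains arbitrarily large values; in particular $Q_a \neq \emptyset$ and $\sup Q_a = +\infty$.

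\textbf{Upward closure.} Suppose $q = \mF_a^\top \vp \in Q_a$, witnessed by some IC payment $\vp \in \Pi_a$, and let $q' > q$. I want an IC payment $\vp'$ with $\mF_a^\top \vp' = q'$. Take $\vp' = \vp + (q' - q)\, \mathbbm{1}_\omega$ where $\omega$ is the dominating outcome from the previous step ($\mF_a(\omega) \ge \mF_{a'}(\omega)$ for all $a'$, and $\mF_a(\omega) > 0$; rescale so that $\mF_a(\omega) = 1$ is not needed — just divide the bump by $\mF_a(\omega)$). Then $\vp' \ge \vp \ge \vzero$, the new expected payment under $a$ is exactly $q'$, and for any $a'$, $\mF_{a'}^\top \vp' - \mF_a^\top \vp' = (\mF_{a'}^\top \vp - \mF_a^\top \vp) + (q'-q)(\mF_{a'}(\omega)/\mF_a(\omega) - 1) \le \mF_{a'}^\top \vp - \mF_a^\top \vp$, so the IC slack only improves: $\mF_a^\top \vp' - c_a \ge \mF_{a'}^\top \vp' - c_{a'}$ for all $a'$. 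Hence $q' \in Q_a$. Combined with the interval property, $Q_a$ is an interval unbounded above, so $Q_a = [l_a, +\infty)$ or $Q_a = (l_a, +\infty)$; and since $Q_a$ is closed (it is the image of the closed polyhedron $\Pi_a$ under the continuous linear map $\vp \mapsto \mF_a^\top \vp$, and the image of a polyhedron under a linear map is a closed polyhedron), the infimum is attained, giving $Q_a = [l_a, +\infty)$ with $l_a = \inf\{\mF_a^\top \vp : \vp \in \Pi_a\} \ge 0$ since all payments are nonnegative.

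\textbf{Main obstacle.} The one step that needs care is the existence of an outcome $\omega$ with $\mF_a(\omega) \ge \mF_{a'}(\omega)$ simultaneously for \emph{all} competing actions $a'$ — this is not automatic. If no such single outcome exists, the clean ``bump one coordinate'' argument breaks, and one instead needs a supporting-hyperplane or Farkas-type argument: unboundedness of $Q_a$ above is equivalent to the existence of a recession direction $\vd \ge \vzero$ of $\Pi_a$ with $\mF_a^\top \vd > 0$, i.e., a nonnegative $\vd$ with $(\mF_a - \mF_{a'})^\top \vd \ge 0$ for all $a'$ and $\mF_a^\top \vd > 0$. I would argue this exists because $\vp \in \Pi_a$ is nonempty (the all-equal-to-some-large-constant payment, which makes every action's expected payment equal, satisfies IC whenever $c_a \le c_{a'}$ for the relevant $a'$; more robustly, one can first show $\Pi_a \neq \emptyset$ directly) and $\Pi_a$ cannot be a bounded polytope — if it were, then $a$ could only be incentivized with bounded payments, but scaling any interior IC payment shows otherwise. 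The cleanest route is probably to prove $\Pi_a \neq \emptyset$ and note that $\Pi_a$ is defined by constraints of the form $(\mF_{a'} - \mF_a)^\top \vp \le c_{a'} - c_a$, which are all satisfied in the recession direction $\vp \mapsto \vp + t\mathbbm{1}$ (since $(\mF_{a'} - \mF_a)^\top \mathbbm{1} = 1 - 1 = 0$), so $\mathbbm{1}$ is a recession direction of $\Pi_a$ with $\mF_a^\top \mathbbm{1} = 1 > 0$ — this handles unboundedness and upward closure in one stroke, sidestepping the dominating-outcome issue entirely.
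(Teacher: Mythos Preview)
Your final argument in the ``Main obstacle'' paragraph is correct and is exactly the paper's proof: the all-ones vector $\vone$ is a recession direction of $\Pi_a$ (since $(\mF_{a'}-\mF_a)^\top\vone=1-1=0$ for every $a'$) with $\mF_a^\top\vone=1>0$, so the one-dimensional polyhedral image $Q_a$ is a closed upward ray $[l_a,+\infty)$. The earlier ``dominating outcome'' constructions are, as you yourself diagnose, not guaranteed to work and can be dropped entirely---the $\vone$-recession argument handles nonemptiness-above, unboundedness, and upward closure in one line.
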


Now, we leverage the definition of the sets $Q_a$ to formulate a new problem. Solving this problem allows us to recover a solution for the original problem described in Program \eqref{prog:determ}, i.e.,  the problem of computing an optimal menu of size $k$ of deterministic payment schemes. For simplicity in notation, we let $q_\varnothing\defeq 0$. The new optimization problem reads as follows:
\begin{equation}\label{prog:pricing}
\mleft\{
\begin{array}{cll}
	\max\limits_{(a_i, q_i)_{i\in\range{k}}} &\displaystyle\sum_{\theta\in\Theta} \xi_\theta\left(q_{\ui(\theta)}-c_{a_{\ui(\theta)}}\right)\\
	\textnormal{s.t.}&\mF^\top_{a_{\ui(\theta)}}\mR_\theta-q_{\ui(\theta)}\ge \mF^\top_{a_{j}}\mR_{\theta}-q_{j},&\forall\theta\in\Theta, j\in\range{k}\cup\{\varnothing\}\\
	& q_{i}\in Q_{a_i},&\forall i\in\range{k}\\
	& a_i\in\cA, &\forall i\in\range{k}
\end{array}
\mright.
\end{equation}
where the user's selection function $i(\cdot)$ is defined as per \Cref{eq:selection func}. First, we observe that the new problem has just $O(k)$ variables, as opposed to the $O(km)$ variables of Problem \eqref{prog:determ}. Upon examining the program presented above, one can observe similarities with a closely related \emph{unit-demand pricing problem}. Indeed, Problem \eqref{prog:pricing} can be interpreted as follows: 
\begin{enumerate}[label={\roman*)}] 
\item There is a single buyer of type $\theta$, which is hidden from the {service provider}, and drawn from the type distribution $\vxi$;
\item The seller sets a price $q_i\in Q_{a_i}$ for each item $i\in\range{k}$;
\item The buyer has a unit demand and has utility $\mF_{a_i}^\top \mR_\theta$ for the $i$-th item;
\item The buyer selects a single item $\ui(\theta)$ that maximizes their utility;\footnote{We allow for a slight abuse of notation by letting $\ui:\Theta\to\range{k}\cup\{\varnothing\}$ as in \Cref{eq:selection func}, since the buyer's behavior in the new pricing problem is the same of the user's behavior described in \Cref{eq:selection func}. Indeed, $\ui(\theta)$ is the index of an item with the highest value for a buyer of type $\theta$. When items have equal value for the buyer, ties are resolved in favor of the seller.}
\item The seller produces the item selected by the buyer, and pays the production cost $c_{a_{\ui(\theta)}}$.
\end{enumerate}
We note that, under this interpretation, each item $i$ is associated with a \emph{price-floor} $l_{a_i}$, which gives a lower bound on the magnitude of the feasible prices.

\xhdr{Equivalence of Problem \eqref{prog:determ} and \eqref{prog:pricing}.}
In the following, we will prove that Problem \eqref{prog:determ} and Problem \eqref{prog:pricing} are equivalent to each other. Indeed, we will prove that, in polynomial time, one can build a solution for one problem from a solution to the other with the same objective value.

\begin{restatable}{lemma}{lemmapricingtodeterm}\label{lem:pricingtodeterm}
	Given a feasible solution $\{(a_i, q_i)\}_{i\in\range{k}}$ to the pricing Problem \eqref{prog:pricing}, it is possible to construct in polynomial time a feasible solution $\PS=\{(a_i, \vp_i)\}_{i\in\range{k}}$ for Problem \eqref{prog:determ} with the same objective value.
\end{restatable}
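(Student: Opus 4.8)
The plan is to exhibit an explicit map from prices to payment vectors and verify it preserves feasibility and objective value. Given a feasible solution $\{(a_i, q_i)\}_{i\in\range{k}}$ to Problem \eqref{prog:pricing}, I would for each $i\in\range{k}$ pick a payment vector $\vp_i$ achieving the expected payment $q_i$ under action $a_i$ while satisfying the service provider's IC constraints for $a_i$. Concretely, since $q_i\in Q_{a_i}$, by \Cref{def:Q} there exists some $\vp_i\in\Pi_{a_i}\subseteq\reals_{\ge 0}^m$ with $\mF_{a_i}^\top\vp_i=q_i$; such a $\vp_i$ can be found in polynomial time by solving a linear feasibility program (the defining inequalities of $\Pi_{a_i}$ together with the linear equality $\mF_{a_i}^\top\vp_i=q_i$). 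This is the crux of the construction: the one-dimensional ``price'' $q_i$ gets lifted back to an $m$-dimensional payment vector, and the set $Q_{a_i}$ was defined precisely so that this lift respects the service provider's IC constraints \eqref{eq:ICsp2}.

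Next I would check that the resulting menu $\PS=\{(a_i,\vp_i)\}_{i\in\range{k}}$ is feasible for Problem \eqref{prog:determ}. The only constraints in \eqref{prog:determ} are the service provider's IC constraints $\mF_{a_i}^\top\vp_i - c_{a_i}\ge \mF_{a'}^\top\vp_i - c_{a'}$ for all $a'\in\cA$ and all $i$; but these hold exactly because $\vp_i\in\Pi_{a_i}$ by construction. (Note the first inequality in the definition of $Q_a$ / $\Pi_a$ in the excerpt reads $\mF_a^\top\vp - c_a\ge \max_{a'}\mF_a^\top\vp - c_{a'}$, which is a typo for $\max_{a'}\mF_{a'}^\top\vp - c_{a'}$ as written in $\Pi_a$; I would use the correct version matching \eqref{eq:ICsp2}.) So $\PS\in\cP_k$ and is IC.

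Then I would argue the objective values coincide. The key point is that the user's selection function $\ui(\cdot\mid\PS)$ in Problem \eqref{prog:determ} makes exactly the same choice as the buyer's selection function $\ui(\cdot)$ in Problem \eqref{prog:pricing}. Indeed, the user's utility for payment scheme $(a_i,\vp_i)$, assuming the service provider plays $a_i$, is $\mF_{a_i}^\top(\mR_\theta-\vp_i)=\mF_{a_i}^\top\mR_\theta - q_i$, which is precisely the buyer's utility for item $i$ at price $q_i$; and the tie-breaking rule (in favor of the service provider/seller, comparing $\mF_{a_i}^\top\vp_i - c_{a_i}=q_i - c_{a_i}$) also matches. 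Hence $\ui(\theta\mid\PS)=\ui(\theta)$ for every $\theta$, and the service provider's realized utility on type $\theta$ is $\mF_{a_{\ui(\theta)}}^\top\vp_{\ui(\theta)} - c_{a_{\ui(\theta)}} = q_{\ui(\theta)} - c_{a_{\ui(\theta)}}$, so the two objectives agree termwise after weighting by $\xi_\theta$. Here one must handle the $\varnothing$ option consistently, using the conventions $\vp_\varnothing=\vzero$, $\mF_{a_\varnothing}=\vzero$, $c_{a_\varnothing}=0$ and $q_\varnothing=0$, so that opting out yields utility $0$ on both sides.

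The main obstacle, such as it is, is the lifting step: one needs to be sure that $q_i\in Q_{a_i}$ genuinely guarantees the existence of a nonnegative IC payment vector $\vp_i$ with $\mF_{a_i}^\top\vp_i=q_i$ — but this is immediate from \Cref{def:Q}, since $Q_{a_i}$ is by definition the image of $\Pi_{a_i}$ under $\vp\mapsto\mF_{a_i}^\top\vp$, so every element of $Q_{a_i}$ has a preimage in $\Pi_{a_i}$. The polynomial-time claim then follows because finding such a preimage is a linear program of size polynomial in the input. Everything else is bookkeeping: matching the selection functions, matching the tie-breaks, and matching the opt-out conventions.
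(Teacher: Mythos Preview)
Your proposal is correct and follows essentially the same approach as the paper: lift each $q_i\in Q_{a_i}$ to a $\vp_i\in\Pi_{a_i}$ with $\mF_{a_i}^\top\vp_i=q_i$ via a linear feasibility program, then observe feasibility from $\vp_i\in\Pi_{a_i}$ and equality of objective values from $\mF_{a_i}^\top\vp_i=q_i$. If anything, you are more careful than the paper in explicitly verifying that the user's selection function and tie-breaking coincide across the two formulations; the paper's proof simply asserts that the objective depends only on the expected payments.
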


\noindent Similarly, it is possible to show that the converse is also true.

\begin{restatable}{lemma}{lemmapaymenttopricing}\label{lem:payment to pricing}
	Given a feasible solution $\{(a_i, \vp_i)\}_{i\in\range{k}}$ to Problem \eqref{prog:determ}, it is possible to construct in polynomial time a feasible solution $\{(a_i, q_i)\}_{i\in\range{k}}$ to Problem \eqref{prog:pricing} with the same value.
\end{restatable}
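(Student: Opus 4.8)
The plan is to use the natural map that sends each payment vector to the expected payment it induces, which is essentially the inverse of the construction in \Cref{lem:pricingtodeterm}. Given a feasible solution $\PS=\{(a_i,\vp_i)\}_{i\in\range{k}}$ to Problem \eqref{prog:determ}, I would set $q_i\coloneqq\mF_{a_i}^\top\vp_i$ for every $i\in\range{k}$ and keep the actions $a_i$ unchanged, obtaining the candidate $\{(a_i,q_i)\}_{i\in\range{k}}$ for Problem \eqref{prog:pricing}. The proof then amounts to checking four things: (i) each $q_i$ lies in $Q_{a_i}$; (ii) the user's selection function of \eqref{eq:selection func} is identical under $\PS$ and under $\{(a_i,q_i)\}$; (iii) the remaining explicit constraints of \eqref{prog:pricing} hold; and (iv) the two objective values coincide.

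For (i), the service-provider IC constraint of \eqref{prog:determ} for scheme $i$, namely $\mF_{a_i}^\top\vp_i-c_{a_i}\ge\mF_{a'}^\top\vp_i-c_{a'}$ for all $a'\in\cA$, is exactly the defining inequality of $\Pi_{a_i}$, so $q_i=\mF_{a_i}^\top\vp_i\in Q_{a_i}$ by \Cref{def:Q}; moreover $q_i\ge 0$ since $\vp_i\ge\vzero$ and $\mF_{a_i}\in\Delta(\Omega)$. For (ii), a user of type $\theta$ facing $\PS$ values scheme $i$ by $\mF_{a_i}^\top(\mR_\theta-\vp_i)=\mF_{a_i}^\top\mR_\theta-q_i$, which is precisely the value assigned to item $i$ in the pricing problem; the opt-out branch of \eqref{eq:selection func} (every scheme gives strictly negative utility) and the seller-favoring tie-break (maximization of $\mF_{a_i}^\top\vp_i-c_{a_i}=q_i-c_{a_i}$) likewise depend on $\vp_i$ only through the scalar $\mF_{a_i}^\top\vp_i$, using the conventions $q_\varnothing=0$, $\mF_{a_\varnothing}=\vzero$, $c_{a_\varnothing}=0$. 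Hence $\ui(\theta\mid\PS)=\ui(\theta\mid\{(a_i,q_i)\})$ for every $\theta$; write $i(\theta)$ for this common value.

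For (iii), the only non-trivial explicit constraint of \eqref{prog:pricing} is $\mF_{a_{i(\theta)}}^\top\mR_\theta-q_{i(\theta)}\ge\mF_{a_j}^\top\mR_\theta-q_j$ for all $\theta\in\Theta$ and $j\in\range{k}\cup\{\varnothing\}$; when $i(\theta)\ne\varnothing$ it holds because, by step (ii), $i(\theta)$ maximizes $\mF_{a_i}^\top\mR_\theta-q_i$ over $i\in\range{k}$ and this maximum is nonnegative, while when $i(\theta)=\varnothing$ both sides are handled by the conventions together with the opt-out condition; the constraints $a_i\in\cA$ are immediate. For (iv), the objective of \eqref{prog:determ} at $\PS$ equals $\sum_{\theta\in\Theta}\xi_\theta\bigl(\mF_{a_{i(\theta)}}^\top\vp_{i(\theta)}-c_{a_{i(\theta)}}\bigr)=\sum_{\theta\in\Theta}\xi_\theta\bigl(q_{i(\theta)}-c_{a_{i(\theta)}}\bigr)$, which is exactly the objective of \eqref{prog:pricing} at $\{(a_i,q_i)\}$. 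Every step is polynomial time (one inner product per scheme). I expect no genuine obstacle here; the one point requiring care is that the selection function of \eqref{eq:selection func} — including its opt-out branch and its seller-favoring tie-breaking rule — must transfer verbatim under $\vp_i\mapsto q_i$, which works precisely because both the user's objective and the tie-break objective see $\vp_i$ only through $\mF_{a_i}^\top\vp_i$.
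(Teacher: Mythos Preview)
Your proposal is correct and follows exactly the same construction as the paper: set $q_i=\mF_{a_i}^\top\vp_i$, observe that $q_i\in Q_{a_i}$ from the service-provider IC constraint, and note that both the selection function and the objective depend on $\vp_i$ only through $\mF_{a_i}^\top\vp_i$. The paper's proof is a three-sentence sketch; you have essentially written out the details (including the careful treatment of the opt-out branch and tie-breaking) that the paper leaves implicit.
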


The two lemmas we just described show that Problem \eqref{prog:pricing} reduces to Problem \eqref{prog:determ}, and vice versa. 
In particular, in the upcoming discussion we will leverage the following result, which can be directly derived from the preceding lemmas. 
\begin{corollary}\label{cor:from pricing to contract}
	Given a polynomial-time algorithm for Problem~\eqref{prog:pricing}, we can design a polynomial-time algorithm for Problem \eqref{prog:determ}.
\end{corollary}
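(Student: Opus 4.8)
The plan is to observe that the two preceding lemmas already package the reduction in both directions, so the proof amounts to stitching them together while making sure every intermediate step is polynomial-time computable. Concretely, given an instance $\Gamma$ of the delegation problem together with the target menu size $k$, I would first assemble the data needed to instantiate Problem~\eqref{prog:pricing}: for each action $a \in \cA$ compute the price-floor $l_a$ such that $Q_a = [l_a, +\infty)$, as guaranteed by the characterization of the sets $Q_a$ established above. Each $l_a$ is the optimum of the linear program $\min \{\mF_a^\top \vp : \vp \in \Pi_a\}$ over the polytope $\Pi_a$ of IC payments for $a$ (this LP is feasible and its optimum is attained, precisely because $Q_a$ is a nonempty interval unbounded above), so all the $l_a$ --- and hence a complete description of Problem~\eqref{prog:pricing} --- can be produced in time polynomial in the size of $\Gamma$.

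Next I would run the assumed polynomial-time algorithm for Problem~\eqref{prog:pricing} on this instance, obtaining an optimal solution $\{(a_i, q_i)\}_{i \in \range{k}}$ together with its value. Applying \Cref{lem:pricingtodeterm} to this solution yields, in polynomial time, a feasible menu $\PS = \{(a_i, \vp_i)\}_{i \in \range{k}}$ for Problem~\eqref{prog:determ} whose objective value equals the optimum of the pricing problem; in particular $\OPT_k$ is at least the optimum of the pricing problem. Conversely, \Cref{lem:payment to pricing} shows that every feasible menu for Problem~\eqref{prog:determ} maps to a feasible solution of Problem~\eqref{prog:pricing} of the same value, so the optimum of the pricing problem is at least $\OPT_k$. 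Hence the two optima coincide, and the menu $\PS$ produced above is in fact optimal for Problem~\eqref{prog:determ}. The whole procedure composes polynomial-time steps --- computing the price-floors that describe the sets $Q_a$, one call to the pricing algorithm, one application of the conversion lemma --- so it runs in polynomial time.

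The main thing to watch, rather than a genuine obstacle, is the interface between the two problems. First, the price-floors $l_a$ defining the feasible price sets $Q_{a_i}$ in Problem~\eqref{prog:pricing} must themselves be efficiently computable, which is exactly why the characterization $Q_a = [l_a, +\infty)$ (building on \Cref{def:Q}) is invoked at this point. Second, ``a polynomial-time algorithm for Problem~\eqref{prog:pricing}'' should be understood as returning an optimal argument $\{(a_i, q_i)\}$, not merely the optimal value; if only the value were available, the equality of optima above would still give the value of Problem~\eqref{prog:determ}, but recovering an optimal menu would require a separate argument. Finally, it is worth making explicit that the conversion of \Cref{lem:pricingtodeterm} outputs payment vectors $\vp_i$ of polynomial bit-size (for instance, by placing the entire expected payment $q_i$ on a single suitably chosen outcome of $a_i$), so that the claimed polynomial running time holds end to end.
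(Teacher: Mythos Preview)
Your proposal is correct and follows essentially the same approach as the paper, which treats the corollary as an immediate consequence of \Cref{lem:pricingtodeterm} and \Cref{lem:payment to pricing}; you simply spell out the instance-construction and optimality-equivalence steps that the paper leaves implicit. One minor caveat: your parenthetical suggestion of recovering $\vp_i$ by ``placing the entire expected payment $q_i$ on a single suitably chosen outcome'' would not in general yield $\vp_i \in \Pi_{a_i}$ (the service provider's IC constraints can fail for such single-outcome payments); the paper's proof of \Cref{lem:pricingtodeterm} instead solves the LP feasibility problem $\{\vp \ge 0 : \mF_{a_i}^\top \vp = q_i,\ \vp \in \Pi_{a_i}\}$, which is what guarantees both feasibility and polynomial bit-size.
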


Now we will turn our attention to build such a polynomial algorithm for the pricing problem with unit-demand and price-floors.

\subsection{Polynomial-Time Algorithm for Optimal Contracts with Small Menu Complexity} 
We present an algorithm for Problem \eqref{prog:pricing} that runs in polynomial time when the menu complexity is a constant $k=O(1)$. The algorithm exploits the existence of an optimal solution to the pricing problem which is an extreme point of the constraints set of Problem \eqref{prog:pricing}. The crux of the algorithm is leveraging the low dimensionality of the auxiliary pricing problem. Indeed, the reformulation allows us to consider only $k$ hyperplanes at a time when searching for an optiaml solution among the extreme points of the feasible set. This is possible since the dimensionality of Program \eqref{prog:pricing} is $k$ when the choices of actions $a_i\in\cA$ for each $i\in\range{k}$ are fixed. The same approach would not be feasible under the original formulation of the problem (see Problem \eqref{prog:determ}), since the dimensionality of the problem would be $km$, and the total number of hyperplanes to consider would be exponential also in the number of outcomes $m$.

Let $\va=(a_1,\ldots,a_k)$ be a fixed choice of actions, one for each of the $k$ payment schemes in the menu. For a fixed choice of actions $\va$, Problem \eqref{prog:pricing} is a linear programming problem with variables $\vq=(q_1,\ldots,q_k)\in\reals^k_{\ge 0}$.
Consider the first familiy of constraints in Problem \eqref{prog:pricing} for a fixed $\va$. For any $i,j\in\range{k}$ and type $\theta\in\Theta$, the corresponding constraint is defined by an hyperplane 
\begin{equation}\label{eq:hyper1}
	H^{i, j, \theta}_{\va}\coloneqq\left\{\vq\in\reals^k:\, q_i-q_j=(\mF_{a_i}-\mF_{a_j})^\top\mR_\theta\right\}.
\end{equation}
Moreover, for $i\in\range{k}$, the second famility of constraints of Problem \eqref{prog:pricing} is determined by an hyperplane 
\begin{equation}\label{eq:hyper2}
	\tH^{i}_{\va}\coloneqq\left\{ \vq\in\reals^k:\,q_i=l_{a_i}\right\}.
\end{equation}

\Cref{alg:pricing} outlines the primary steps of the procedure used to address Problem \eqref{prog:pricing}. The algorithm iterates over the possible combinations of actions $\va\in\cA^k$. For each $\va$, the algorithm iteratively builds a set of candidate points $\cF_{\va}\subseteq \reals^k_{\ge 0}\cup\{\varnothing\}$. Then, these sets are used to compute the final solution by maximizing the original objective of the pricing problem. For each tuple $\va$, the algorithm initializes $\cF_{\va}$ to contain the opt-out option $\varnothing$. Then, the algorithm enumerates over the possible combinations of $k$ linearly independent hyperplanes from the set $\cH_{\va}$, which is the union of all hyperplanes defined according to \Cref{eq:hyper1} and \Cref{eq:hyper2} given $\va$. For each combination of $k$ linearly independent hyperplanes, the algorithm computes a candidate point $\vq\in\reals^k$ as the intersection of those $k$ hyperplanes. If the resulting point is $\vq\ge 0$, then the algorithm adds it to $\cF_{\va}$. 

The following result shows that when $k=O(1)$ the algorithm succesfully computes an optimal solution to the unit-demant pricing problem with price-floors in polynomial time.

\begin{algorithm}
	\begin{algorithmic}
	\caption{Solving the unit-demand pricing problem with price-floors}\label{alg:pricing}
	\FOR{$\va=(a_1,\ldots,a_k)\in\cA^k$}{
		\STATE Initialize $\cF_{\va}\gets\{\varnothing\}$
		\STATE Define the set of hyperplanes: 
		\(\cH_{\va}\coloneqq \left\{H^{i,j,\theta}_{\va}\right\}_{i,j\in\range{k},\theta\in\Theta}\cup\left\{\tH^{i}_{\va}\right\}_{i\in\range{k}}\)
		\FOR{every set $H_1,\ldots,H_k$ of $k$ linearly independent hyperplanes from $\cH_{\va}$}{
			\STATE Compute $\vq=\bigcap_{i\in\range{k}}H_i$
			\IF{$\vq\in\reals^k_{\ge0}$}
			{
				\STATE $\cF_{\va}\gets \cF_{\va}\cup\{\vq\}$
			}
			\ENDIF
		}
		\ENDFOR
	}
	\ENDFOR
	\RETURN{\(
		\{(a_i, q_i)\}_{i\in\range{k}}\in\argmax_{\va\in\cA^k, \vq\in \cF_{\va}}\,\,\sum_{\theta\in\Theta}\xi_\theta\left(q_{\ui(\theta)}-c_{a_{\ui(\theta)}}\right)\)}
		\end{algorithmic}
\end{algorithm}

\begin{restatable}{theorem}{polyalgo}\label{thm:pricing polytime}
	\Cref{alg:pricing} finds an optimal solution to Problem \eqref{prog:pricing} in polynomial time whenever $k=O(1)$.
\end{restatable}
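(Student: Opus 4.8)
The plan is to prove the running-time bound and the optimality guarantee separately, reducing the latter to a statement about vertices of the arrangement $\cH_\va$. For the running time: when $k=O(1)$ the outer loop has $|\cA|^k=\ell^k$ iterations, which is polynomial; for each fixed $\va$ the arrangement $\cH_\va$ has $O(k^2 n)$ hyperplanes --- the $H^{i,j,\theta}_\va$ of \Cref{eq:hyper1} (including the $j=\varnothing$ instances of the first family of constraints of Problem~\eqref{prog:pricing}, i.e.\ the individual-rationality hyperplanes $\{q_i=\mF_{a_i}^\top\mR_\theta\}$) together with the $k$ price-floor hyperplanes $\tH^i_\va$ of \Cref{eq:hyper2} --- so there are $\binom{O(k^2 n)}{k}$ candidate $k$-subsets, polynomially many for constant $k$. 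Testing linear independence, solving a $k\times k$ linear system for the intersection point, checking feasibility ($q_i\ge l_{a_i}$ for all $i$, which also forces $\vq\ge 0$ since $l_{a_i}\ge 0$), and evaluating $\sum_\theta\xi_\theta(q_{\ui(\theta)}-c_{a_{\ui(\theta)}})$ --- which only requires each type's best response --- are all polynomial, so the whole procedure runs in polynomial time.

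Fix $\va$. As observed just before the algorithm, once $\va$ is fixed the first family of constraints of Problem~\eqref{prog:pricing} is automatically met by the definition of $\ui(\cdot)$, so solving Problem~\eqref{prog:pricing} for this $\va$ amounts to maximizing $\phi_\va(\vq)\defeq\sum_{\theta\in\Theta}\xi_\theta\bigl(q_{\ui(\theta\mid\vq)}-c_{a_{\ui(\theta\mid\vq)}}\bigr)$ over the polyhedron $R_\va\defeq\{\vq\in\reals^k:\ q_i\ge l_{a_i}\ \forall i\in\range{k}\}$, which is pointed because $R_\va\subseteq\reals^k_{\ge 0}$. The structural fact I would use is that on the relative interior of each face of $\cH_\va$, every type's set of utility-maximizing options, every option's individual-rationality status, and hence the seller-favorable selection $\ui(\cdot)$ are all constant; therefore $\phi_\va$ restricted to the relative interior of each such face is affine.

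I would then show $\max_{\vq\in R_\va}\phi_\va$ is attained at a vertex of $\cH_\va$ inside $R_\va$. Take any maximizer $\vq^\star$ and let $\mathcal G$ be the closed face of the arrangement (restricted to $R_\va$) whose relative interior $\mathcal G^\circ$ contains it; on $\mathcal G^\circ$, $\phi_\va$ coincides with an affine function $\bar\phi$. This $\bar\phi$ is bounded above on $\mathcal G$: every type selecting an option $i\neq\varnothing$ along $\mathcal G^\circ$ must find it individually rational, forcing $q_i\le\mF_{a_i}^\top\mR_\theta\le 1$ throughout $\mathcal G$, so $\bar\phi$ depends only on bounded coordinates. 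Since $\mathcal G$ is pointed, $\bar\phi$ attains its maximum over $\mathcal G$ at a vertex $v$, with $\bar\phi(v)\ge\bar\phi(\vq^\star)=\OPT$. A short case analysis shows $\phi_\va(v)\ge\bar\phi(v)$ --- at $v$ the seller-favorable tie-breaking and the opt-out rule can only raise each type's contribution relative to its limiting value along $\mathcal G^\circ$ --- so $\phi_\va(v)\ge\OPT$; since $v\in R_\va$ is feasible for Problem~\eqref{prog:pricing} we also have $\phi_\va(v)\le\OPT$, hence $\phi_\va(v)=\OPT$. Being a vertex of $\mathcal G$ --- whose facet-defining hyperplanes, together with the facets of $R_\va$ (the $\tH^i_\va$), all lie in $\cH_\va$ --- $v$ is the intersection of $k$ linearly independent hyperplanes of $\cH_\va$ and lies in $R_\va$, hence is one of the candidate points enumerated by \Cref{alg:pricing}. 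Since every feasible candidate the algorithm evaluates has objective at most $\OPT$, its output has value exactly $\OPT$; ranging over $\va\in\cA^k$ recovers the optimal value of Problem~\eqref{prog:pricing}.

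The main obstacle is exactly the ``tie-break only helps'' case analysis, which is needed because $\ui(\cdot)$ resolves ties in the seller's favor and opting out is triggered by a strict inequality, so $\phi_\va$ is genuinely discontinuous. Moving from $\mathcal G^\circ$ to $v$ can push a type onto a newly-binding indifference between two options --- then the seller switches to the better of the two, raising that type's contribution --- or onto an option's IR boundary; the only problematic sub-case is a type that strictly opts out along $\mathcal G^\circ$ but at $v$ lands exactly on some hyperplane $\{q_i=\mF_{a_i}^\top\mR_\theta\}$ and is thereby forced onto item $i$ with $q_i<c_{a_i}$. I would dispose of this by choosing the direction of travel within $\mathcal G$ (equivalently, by a vanishing perturbation of $v$) so that such types keep opting out --- harmless because they contribute the constant $0$ to $\bar\phi$, so this never decreases the objective --- or, alternatively, by first establishing the auxiliary fact that some optimal solution of Problem~\eqref{prog:pricing} gives the seller non-negative utility from every type, which eliminates the sub-case altogether.
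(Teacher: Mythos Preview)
Your route differs from the paper's. The paper parametrizes by an arbitrary selection map $f:\Theta\to\range{k}\cup\{\varnothing\}$ and observes that, for fixed $(f,\va)$, Problem~\eqref{prog:pricing} becomes a linear program $\LP(f,\va)$ whose optimum sits at a vertex of its feasible polytope; the key point is that the facet-defining hyperplanes of $\LP(f,\va)$ are always drawn from $\cH_\va$ \emph{regardless of $f$}, so the union over all $f$ of these vertices is contained in $\cF_\va$ and one can simply evaluate the true objective at every candidate. You instead attack $\phi_\va$ directly as a piecewise-affine function on the arrangement and push a maximizer to a vertex of its face. The paper's linearization-by-$f$ is slicker precisely because it never has to reason about how $\ui(\cdot)$ changes across faces, whereas your approach must confront the discontinuity of $\phi_\va$ head-on. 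To your credit, you isolate the one genuinely delicate case --- a type strictly opting out on $\mathcal G^\circ$ that at $v$ lands on an IR hyperplane and is forced onto a negative-profit item --- and you are more explicit about it than the paper (which never checks that $\phi_\va(v)$, evaluated via the actual $\ui(\cdot\mid v)$, matches the $\LP(f^\star,\va)$ value at $v$). That said, your fixes are only sketches: fix~(b) is a true auxiliary fact (pricing every unprofitable option out of reach can only help), but it does not close the gap by itself, since moving from that cleaned optimum to the vertex of its face can again drive some option's profit negative; and fix~(a) as written (perturb $v$) sacrifices the vertex property. Both your argument and the paper's therefore share the same residual subtlety at this boundary case; you have simply made it visible.
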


The main result of this section follows from a direct application of \Cref{thm:pricing polytime} and \Cref{cor:from pricing to contract}.

\begin{theorem}\label{th:opt_k}
	If $k$ is a constant, the problem of finding a menu of size $k$ of deterministic payment schemes with utility $\OPT_k$ can be solved in polynomial time.
\end{theorem}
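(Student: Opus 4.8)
The plan is to simply chain together the two ingredients already established in this section. By \Cref{thm:pricing polytime}, \Cref{alg:pricing} computes an optimal solution $\{(a_i,q_i)\}_{i\in\range{k}}$ to the pricing Problem \eqref{prog:pricing} in polynomial time, under the hypothesis $k=O(1)$. By \Cref{lem:pricingtodeterm}, from such a solution one constructs in polynomial time a feasible solution $\PS=\{(a_i,\vp_i)\}_{i\in\range{k}}$ to Problem \eqref{prog:determ} with the same objective value; and by \Cref{lem:payment to pricing} (equivalently, by \Cref{cor:from pricing to contract}) the optimal values of the two programs coincide. Hence the menu $\PS$ so obtained attains the objective value $\OPT_k$, and the whole pipeline runs in polynomial time. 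This is exactly the claim.

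First I would make explicit the one point that is slightly more than bookkeeping: the hyperplanes $\tH^i_{\va}$ used by \Cref{alg:pricing} reference the price-floors $l_{a_i}$, so to run the algorithm we need each $l_a$ in hand. By the lemma characterizing $Q_a$ we know $Q_a=[l_a,+\infty)$, and $l_a=\min\{\mF_a^\top\vp : \vp\in\Pi_a\}$ is the optimum of a linear program in the $m$ variables $\vp$ with $\ell$ constraints; this LP is solvable in time polynomial in the instance size. So all data fed to \Cref{alg:pricing} are computed in polynomial time, and for constant $k$ the enumeration over $\va\in\cA^k$ (that is $\ell^k$ tuples) and over $k$-subsets of the $O(k^2 n + k)$ hyperplanes in $\cH_{\va}$ is polynomial, each candidate point being obtained by solving a $k\times k$ linear system.

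Then I would put the pieces together: run \Cref{alg:pricing} to get $(a_i,q_i)_{i\in\range{k}}$; apply the construction of \Cref{lem:pricingtodeterm} to lift it to $(a_i,\vp_i)_{i\in\range{k}}\in\cP_k$; the IC constraints of Problem \eqref{prog:determ} hold because $q_i\in Q_{a_i}$ encodes precisely \Cref{eq:ICsp2}, and the objective is preserved because the user's selection function $\ui(\cdot)$ depends on the payment scheme only through the expected payments $\mF_{a_i}^\top\vp_i=q_i$ and values $\mF_{a_i}^\top\mR_\theta$, which are identical in the two problems. Optimality follows from \Cref{cor:from pricing to contract}: no menu of size $k$ can beat $\OPT_k$, and this value is achieved.

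\textbf{Expected main obstacle.} There is no real obstacle here — the theorem is a corollary of \Cref{thm:pricing polytime} and \Cref{cor:from pricing to contract}, both proved above. The only thing demanding a sentence of care is confirming that the price-floors $l_a$ (and hence the input to \Cref{alg:pricing}) are themselves polynomial-time computable, which reduces to solving $\ell$ linear programs; everything else is the dimensionality argument already carried out in the proof of \Cref{thm:pricing polytime}.
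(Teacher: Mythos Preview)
Your proposal is correct and matches the paper's own treatment, which simply states that the theorem ``follows from a direct application of \Cref{thm:pricing polytime} and \Cref{cor:from pricing to contract}'' without giving any further argument. Your extra remark that the price-floors $l_a$ are themselves computable in polynomial time via a small LP is a detail the paper leaves implicit, but it does not change the approach.
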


This result is extremely important as it allows the service provider to compute optimal menus in most cases of real interest. Indeed, many situations require a limited and sufficiently small contract complexity $k$, since these are the options presented to users of the service. For instance, as a direct corollary, we have that the optimal single contract, \emph{i.e.}, menu of size $k=1$, can be found in polynomial time.

In general, the optimal utility is reached through direct menus, that is, when $k=n$ and each type has a dedicated payment scheme (see Problem \eqref{prog: optimal direct deterministic}). This observation yields the following corollary:
\begin{corollary}
	When the number of types $n$ is constant, there exists a polynomial time algorithm to compute a menu of deterministic payment schemes with utility $\OPT$.
\end{corollary}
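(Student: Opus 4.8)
The plan is to obtain the statement as a direct corollary of \Cref{th:opt_k}, by identifying the unrestricted optimum $\OPT$ with $\OPT_n$, the optimum over menus of the fixed complexity $k=n$.

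First I would recall the revelation-principle-style fact noted in \Cref{sec:opt prog}: the supremum of the service provider's utility over the class $\cP=\bigcup_{k\in\naturals}\cP_k$ of all menus of deterministic payment schemes is attained by a \emph{direct}, incentive-compatible menu of complexity exactly $n$, and equals $\OPT$ (the optimum of Problem~\eqref{prog: optimal direct deterministic}). From this I would deduce $\OPT=\OPT_n$: since $\cP_n\subseteq\cP$, every menu in $\cP_n$ has utility at most $\OPT$, hence $\OPT_n\le\OPT$; conversely, a direct menu of complexity $n$ is in particular an element of $\cP_n$, so the optimal direct menu witnesses $\OPT\le\OPT_n$. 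The only bookkeeping here is that an element of $\cP_n$ may contain repeated payment schemes, or schemes dominated by the opt-out option; thus a direct menu in which some types opt out still lies in $\cP_n$ after padding it, e.g., with duplicated schemes or with a zero-cost, zero-payment scheme, without changing the service provider's utility.

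Second, I would invoke \Cref{th:opt_k} with $k:=n$. By hypothesis $n$ is constant, so $k=n=O(1)$, and \Cref{th:opt_k} yields a polynomial-time algorithm that outputs a menu of complexity $n$ whose service provider utility is $\OPT_n$; by the previous paragraph this utility equals $\OPT$, which is exactly what the corollary asks for. Unwinding the dependencies, the procedure is \Cref{alg:pricing} run with $k=n$ followed by the reconstruction of a menu of payment schemes from the resulting pricing solution (\Cref{cor:from pricing to contract}, via \Cref{lem:pricingtodeterm}); its running time is the polynomial bound of \Cref{thm:pricing polytime}, which is polynomial precisely because $k=n$ is a constant.

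I expect no genuinely hard step: the statement is essentially immediate once \Cref{th:opt_k} is in hand. The two points deserving a line of justification are (i) the identity $\OPT=\OPT_n$, which rests on the revelation-principle argument together with the harmlessness of padding a menu to complexity $n$; and (ii) the observation that the algorithm underlying \Cref{th:opt_k} is constructive --- it returns an optimal menu, not merely the value $\OPT_n$ --- so it can be used as the claimed polynomial-time procedure when $n=O(1)$.
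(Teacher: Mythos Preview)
Your proposal is correct and follows essentially the same approach as the paper: the paper simply observes that optimal utility is achieved by a direct menu of complexity $n$ (so $\OPT=\OPT_n$) and then invokes \Cref{th:opt_k} with $k=n$ constant. Your additional remarks on padding and constructivity are harmless elaborations of the same argument.
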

\noindent This is not the only case in which an optimal menu can be computed efficiently. Indeed, if the number of actions is smaller then the number of types, one can build menus of deterministic payment schemes with complexity $k=\ell:=|\cA|$ that give utility $\OPT$ to the service provider.
\begin{restatable}{lemma}{lemmaminactions}\label{lem:smallactions}
	Given an instance of the delegation problem $\delegation$ with $\ell\le n$, there always exits a menu of complexity $k=\ell$ with service provider's utility $\OPT$, \ie $\OPT_\ell=\OPT$.
\end{restatable}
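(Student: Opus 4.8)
The claim is that when the service provider has at most as many actions as types ($\ell \le n$), a menu of complexity $k = \ell$ suffices to match the optimal utility $\OPT$, which is a priori achieved by a direct menu of complexity $n$ (Problem~\eqref{prog: optimal direct deterministic}). The natural strategy is to start from an optimal direct menu $\PS^\star = \{(a_\theta, \vp_\theta)\}_{\theta \in \Theta}$ and argue that we never actually need more distinct payment schemes than there are distinct actions. First I would observe that in any solution of Problem~\eqref{prog:pricing} (equivalently, via \Cref{lem:payment to pricing} and \Cref{lem:pricingtodeterm}, of Problem~\eqref{prog:determ} or \eqref{prog: optimal direct deterministic}), what matters for the objective and for both families of constraints is only the pair $(a_i, q_i)$ of action and \emph{expected payment} $q_i = \mF_{a_i}^\top \vp_i$, not the full payment vector. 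So it is cleanest to work in the pricing formulation.

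**Key steps.** (1) Take an optimal direct menu and pass to its pricing representation $\{(a_\theta, q_\theta)\}_{\theta \in \Theta}$ with $q_\theta \in Q_{a_\theta} = [l_{a_\theta}, +\infty)$ (using \Cref{lem:payment to pricing} and the shape-of-$Q_a$ lemma). (2) For each action $a \in \cA$ that is used by at least one type, among all types $\theta$ with $a_\theta = a$ let $q^\star_a \defeq \min\{q_\theta : a_\theta = a\}$ be the smallest expected payment attached to that action; note $q^\star_a \ge l_a$, so $(a, q^\star_a)$ is still a feasible price for item ``$a$''. (3) Build a new menu with one item per used action: $\{(a, q^\star_a) : a \in \mathrm{range}\}$, which has complexity $\le \ell$. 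Pad with dummy items if one insists on exactly $k=\ell$, or just note $\OPT_\ell \ge \OPT_{k'}$ for the $k' \le \ell$ actually used and then $\OPT_\ell \le \OPT$ trivially, giving equality. (4) The crucial verification: show that under the new menu, each type $\theta$ still (weakly) prefers to select the item corresponding to its original action $a_\theta$, and obtains at least the same utility, so the service-provider objective does not decrease. Here we use that lowering the price of an item can only make the buyer's chosen bundle weakly cheaper: type $\theta$'s utility for item $a_\theta$ in the new menu is $\mF_{a_\theta}^\top \mR_\theta - q^\star_{a_\theta} \ge \mF_{a_\theta}^\top \mR_\theta - q_\theta \ge 0$, so it does not opt out; and for any other used action $a'$, the user's utility for that item went from $\mF_{a'}^\top \mR_\theta - q_{\theta'}$ (for the type $\theta'$ that originally owned it) to $\mF_{a'}^\top \mR_\theta - q^\star_{a'}$ — but we need to compare against the \emph{right} alternative. (5) Objective accounting: after possibly re-routing some types to cheaper items, argue the service provider's total expected payment net of cost does not fall; combined with $\OPT_\ell \le \OPT$ this yields $\OPT_\ell = \OPT$.

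**The main obstacle.** Step~(4)–(5) is where the real work is, because lowering prices changes the buyers' best responses in a coupled way: a type $\theta$ might now prefer a \emph{different} action's item than it did before, and the re-routed choice could in principle carry a smaller net payment $q - c$ to the service provider. The fix is to be careful about which price we keep per action. Rather than always taking the minimum $q_\theta$, the cleaner argument is: for each used action $a$, among the types mapped to $a$ by the optimal direct menu, they may be paying different expected amounts, but IC for the \emph{user} in the original direct menu already forces a consistency condition. Concretely, if $a_\theta = a_{\theta'} = a$ but $q_\theta \ne q_{\theta'}$, then in Problem~\eqref{prog: optimal direct deterministic} type $\theta$ weakly prefers $(a_\theta,\vp_\theta)$ over $(a_{\theta'},\vp_{\theta'})$ in expectation — but since both carry the same action $a$, the comparison $\mF_a^\top(\mR_\theta - \vp_\theta) \ge \mF_a^\top(\mR_\theta - \vp_{\theta'})$ reduces to $\mF_a^\top \vp_\theta \le \mF_a^\top \vp_{\theta'}$, i.e.\ $q_\theta \le q_{\theta'}$, and symmetrically $q_{\theta'} \le q_\theta$, so in fact $q_\theta = q_{\theta'}$. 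Hence \emph{all types mapped to the same action already pay the same expected amount}, there is nothing to minimize, the collapsed menu is literally the original pricing solution with duplicate items removed, every type's best response and utility are unchanged, and the objective is exactly preserved. Together with the trivial inequality $\OPT_\ell \le \OPT$, this gives $\OPT_\ell = \OPT$, completing the proof.
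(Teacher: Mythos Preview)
Your proposal is correct and takes essentially the same approach as the paper: both arguments ultimately hinge on the observation that if two types $\theta,\theta'$ are assigned the same action $a$ in the optimal direct menu, then the user IC constraints \eqref{eq:ICuser} force $\mF_a^\top\vp_\theta=\mF_a^\top\vp_{\theta'}$, so the direct menu already contains at most $\ell$ \emph{distinct} payment schemes and collapsing duplicates preserves the objective exactly. The only cosmetic difference is that you phrase everything in the pricing variables $q_\theta$ via \Cref{lem:payment to pricing}, whereas the paper works directly with the payment vectors $\vp_\theta$ and only invokes this equal-payment observation at the end of its argument.
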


\noindent \Cref{lem:smallactions} and \Cref{th:opt_k} yield the following corollary:
\begin{corollary}\label{cor:constantactions}
	When the number of actions $\ell$ is constant, there exists a polynomial time algorithm to compute a menu of deterministic payment schemes with utility $\OPT$.
\end{corollary}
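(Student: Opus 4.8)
The plan is to obtain the corollary as a direct combination of \Cref{lem:smallactions} and \Cref{th:opt_k}, run with the menu complexity
\[
k \defeq \min(\ell, n).
\]
Since $\ell = O(1)$ by hypothesis, we have $k \le \ell = O(1)$, so $k$ is a constant and \Cref{th:opt_k} applies: it yields a polynomial-time algorithm that returns a menu of deterministic payment schemes of size $k$ whose service provider's utility equals $\OPT_k$. Hence the whole statement reduces to checking that, for this particular choice of $k$, no utility is lost relative to arbitrary (direct) menus, i.e., that $\OPT_k = \OPT$.

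I would verify $\OPT_{\min(\ell,n)} = \OPT$ by a short case distinction. If $\ell \le n$, then $k = \ell$, and this is exactly the content of \Cref{lem:smallactions}. If instead $n < \ell$, then $k = n$, and the revelation-principle-style argument recalled in \Cref{sec:opt prog} --- which states that the optimal value over all menus of deterministic payment schemes is already attained by a direct, incentive-compatible menu of complexity $n$ --- gives $\OPT_n = \OPT$. In both cases $\OPT_k = \OPT$, so \Cref{th:opt_k} computes in polynomial time a menu of size $k = \min(\ell,n)$ with service provider's utility $\OPT$, which proves the corollary.

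The argument is essentially bookkeeping, so there is no substantive obstacle; the only point requiring care is that the complexity parameter handed to \Cref{th:opt_k} must be a constant in \emph{every} instance, which is precisely why we take $k = \min(\ell, n)$ rather than $k = n$ (the latter need not be constant when only $\ell$ is assumed to be). It is also worth noting that the polynomial-time guarantee of \Cref{th:opt_k}, i.e., of \Cref{alg:pricing}, remains valid as the remaining input parameters grow: for fixed $k$ the outer enumeration over action tuples $\va \in \cA^k$ contributes a factor $\ell^k$, and the enumeration over $k$-subsets of the $O(k^2 n)$ hyperplanes in $\cH_{\va}$ contributes a factor $n^{O(k)}$, both polynomial; a fortiori this holds here, where $\ell$ is itself constant.
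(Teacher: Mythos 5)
Your proof is correct and follows the paper's own route: the paper derives this corollary directly from \Cref{lem:smallactions} and \Cref{th:opt_k}, exactly as you do, with your explicit case distinction on $\ell \le n$ versus $n < \ell$ (via $\OPT_n = \OPT$ from the revelation-principle observation) merely making precise what the paper leaves implicit. The choice $k = \min(\ell,n)$ to keep the complexity parameter constant in every instance is the right bookkeeping detail.
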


In the next section, we characterize how much utility can be extracted by a menu of complexity $k$.

\section{Trade Off between Utility and Complexity of Menus}\label{sec:classes}

In this section, we study the dependency of the {service provider}'s utility on the complexity $k$ of the proposed menu of deterministic payment schemes, with respect to the utility collected by an optimal direct menu solving Problem \eqref{prog: optimal direct deterministic}. We show that a menu of complexity $k\le \min(n,\ell)$ can extract a $k/\min(n,\ell)$ fraction of the total utility in the worst case, and that this bound is tight.
Interestingly, this factor does not depend on the number of outcomes $m$ of the delegation instance.

We start by providing a lower bound on $\OPT_k$ in terms of the fraction of $\OPT$ that it can guarantee in the worst case. The former quantity is the expected utility for a menu of size $k$ obtained as the solution to Problem \eqref{prog:determ}, and the latter is the optimal objective value of Problem \eqref{prog: optimal direct deterministic}.
\begin{restatable}{theorem}{theoremsingleGood}\label{prop:singleGood}
	Given an instance of the delegation problem $\delegation$, the expected utility of the {service provider} obtained through an optimal menu of complexity $k$ is at least a $k/\min(n,\ell)$ fraction of the expected utility of the optimal direct menu of complexity $n$, \ie \[\OPT_k\ge \frac{k}{\min(n,\ell)}\OPT.\]
\end{restatable}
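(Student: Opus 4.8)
The plan is to start from an optimal direct menu $\PS^\star = \{(a_\theta, \vp_\theta)\}_{\theta\in\Theta}$ achieving $\OPT$, and to show that some sub-menu of complexity $k$ recovers at least a $k/\min(n,\ell)$ fraction of its value. First I would observe that, by \Cref{lem:smallactions}, we may assume without loss of generality that $\min(n,\ell) = n$: if $\ell < n$, replace the optimal direct menu by an equivalent one of complexity $\ell$ (which exists and attains $\OPT$), and relabel so the ``number of types'' in the argument below is really $\min(n,\ell)$. So it suffices to prove $\OPT_k \ge (k/n)\,\OPT$ when the optimal direct menu has one payment scheme per type. Write $u_\theta \defeq \mF_{a_{i(\theta\mid\PS^\star)}}^\top \vp_{i(\theta\mid\PS^\star)} - c_{a_{i(\theta\mid\PS^\star)}} \ge 0$ for the service provider's (nonnegative, by IR-for-the-provider / IC) conditional utility from a user of type $\theta$ under $\PS^\star$, so that $\OPT = \sum_\theta \xi_\theta u_\theta$.

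The key idea is an averaging argument over size-$k$ sub-collections of the $n$ payment schemes of $\PS^\star$. For a subset $S\subseteq\Theta$ with $|S|=k$, let $\PS_S$ be the menu consisting of the $k$ payment schemes $\{(a_\theta,\vp_\theta) : \theta\in S\}$. The crucial monotonicity observation is that each such $\PS_S$ is a feasible menu of complexity $k$ for Problem \eqref{prog:determ}: the service provider's IC constraints \eqref{eq:ICsp2} only involve one payment scheme at a time, so they are inherited from $\PS^\star$. Moreover, removing options from a menu can only help each retained type weakly: if a user of type $\theta\in S$ was selecting its own scheme $(a_\theta,\vp_\theta)$ in the full menu, it still finds that scheme individually rational in $\PS_S$ (the IR constraint is preserved), and — this is the point that needs care — the service provider's tie-breaking still yields at least $u_\theta$ from that user. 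So restricted to the $k$ types in $S$, the menu $\PS_S$ guarantees the provider a conditional utility of at least $u_\theta$ for each $\theta\in S$, hence the value of $\PS_S$ is at least $\sum_{\theta\in S}\xi_\theta u_\theta$. Therefore $\OPT_k \ge \max_{|S|=k}\sum_{\theta\in S}\xi_\theta u_\theta \ge \mathbb{E}_{S}\big[\sum_{\theta\in S}\xi_\theta u_\theta\big]$, where the expectation is over $S$ chosen uniformly among the $\binom{n}{k}$ size-$k$ subsets; by symmetry each $\theta$ appears in a $k/n$ fraction of these subsets, so the average equals $(k/n)\sum_\theta \xi_\theta u_\theta = (k/n)\,\OPT$, giving the bound.

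The step I expect to be the main obstacle is justifying the monotonicity claim that the provider's conditional utility from each retained type $\theta\in S$ does not drop below $u_\theta$ when we pass from $\PS^\star$ to the sub-menu $\PS_S$. In the full direct menu, IC constraint \eqref{eq:ICuser} guarantees type $\theta$ weakly prefers $(a_\theta,\vp_\theta)$; after deletion of other options type $\theta$ still (weakly) prefers $(a_\theta,\vp_\theta)$ among the survivors and still finds it IR, so it either keeps selecting $(a_\theta,\vp_\theta)$ or switches to another surviving scheme $(a_{\theta'},\vp_{\theta'})$, $\theta'\in S$, only when it is indifferent — but then, by the tie-breaking rule in \eqref{eq:selection func} that resolves ties in the provider's favor, the provider collects at least $\max$ over the tied options of its own utility, which is $\ge u_\theta$ only if... here one must be slightly careful: indifference for the user among options $(a_\theta,\vp_\theta)$ and $(a_{\theta'},\vp_{\theta'})$ does not a priori mean the provider's utilities from them are ordered. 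The clean fix is to not rely on what the user actually picks, but to lower-bound $\OPT_k$ directly: define the size-$k$ menu $\PS_S$ and note that whatever the user of type $\theta\in S$ selects, it selects something that is IR for it and is a best response among $\PS_S$, so the provider's realized utility from type $\theta$ is $\ge$ the provider's utility on the \emph{worst} user-optimal option in $\PS_S$; to make this at least $u_\theta$ it is cleanest to instead use the standard trick of slightly perturbing payments (or, equivalently, to invoke that $\OPT_k$ is defined with provider-favorable tie-breaking and argue that the value of the configuration where type $\theta$ is steered to $(a_\theta,\vp_\theta)$ is achievable in the limit / by a lexicographic argument). I would handle this by a small $\epsilon$-perturbation of the payments $\vp_\theta$ in $\PS_S$ that strictly separates type $\theta$'s preference for its own scheme while changing the objective by at most $O(\epsilon)$, then take $\epsilon\to 0$; the remaining bookkeeping (feasibility of the perturbed menu, continuity of the objective) is routine.
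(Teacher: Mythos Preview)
Your overall strategy is correct and close to the paper's: start from an optimal menu of size $\eta=\min(n,\ell)$ (using \Cref{lem:smallactions}), keep a well-chosen size-$k$ sub-menu, and use provider IR of every retained scheme to drop the contribution of types whose original choice was removed. The paper simply picks the top-$k$ schemes ranked by their contribution $\sum_{\theta:i(\theta|\PS)=i}\xi_\theta(\mF_{a_i}^\top\vp_i-c_{a_i})$ and uses $\max\ge\text{average}$ directly, whereas you average over uniformly random size-$k$ subsets; these are the same bound.

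The one place you get tangled is the tie-breaking step, and the $\epsilon$-perturbation is unnecessary. Here is the clean argument you are missing. Suppose $\theta\in S$ and $(a_\theta,\vp_\theta)$ is the scheme selected by type $\theta$ in the full (direct) menu $\PS^\star$. By the very definition of the selection function in \eqref{eq:selection func}, $(a_\theta,\vp_\theta)$ is provider-optimal among \emph{all} user-optimal options for $\theta$ in $\PS^\star$. When you pass to the sub-menu $\PS_S\ni(a_\theta,\vp_\theta)$, the set $I(\theta\mid\PS_S)$ of user-optimal options is a subset of $I(\theta\mid\PS^\star)$ that still contains $(a_\theta,\vp_\theta)$; hence $(a_\theta,\vp_\theta)$ remains provider-optimal on this smaller set, and the tie-breaking in \eqref{eq:selection func} yields provider utility exactly $u_\theta$ from type $\theta$. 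So your monotonicity claim holds with equality, without any perturbation. (You should also state explicitly that types $\theta\notin S$ contribute nonnegatively because every retained scheme, being part of an optimal menu that is chosen by some type, can be taken to satisfy provider IR; the paper invokes this in the same way.)
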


Crucially, this result shows that the menu complexity is the only parameter that controls the utility of the service provider. This 

Now we show that $k/n$ is tight in terms of the worst-case maximum utility that the {service provider} can extract by using a menu of size $k$, \ie that there are instances in which all menus of size $k$ can extract \emph{at most} a $k/\min(n,\ell)$ fraction of the optimal utility.

\begin{restatable}{proposition}{propositionsingleBad}\label{prop:singleBad}
	For each $k\ge 0$ and $n\ge 0$, there exists an instance with $n$ types and actions such that the  expected utility of the {service provider} by using a menu of deterministic payments schemes of size $k$ yields most a $k/n$ fraction of the optimal direct menu of size $n$, \ie $\OPT_k\le \frac{k}{n}\OPT$.
\end{restatable}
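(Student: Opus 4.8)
The plan is to produce one symmetric instance that is simultaneously hard for every menu complexity $k$. Take $n$ types $\theta_1,\dots,\theta_n$, $n$ outcomes $\omega_1,\dots,\omega_n$, and $n$ actions $b_1,\dots,b_n$ (so $\ell=n$, matching ``$n$ types and actions''), with uniform type distribution $\xi_{\theta_i}=1/n$, zero costs $\vc=\vzero$, action $b_i$ producing outcome $\omega_i$ with probability one (so $\mF_{b_i}$ is the $i$-th standard basis vector), and reward matrix $\mR$ equal to the $n\times n$ identity, i.e.\ type $\theta_i$ has reward $1$ for $\omega_i$ and $0$ for every other outcome. The guiding intuition is that each type has a \emph{private} action whose deterministic outcome it alone values, so a menu carrying $k$ payment schemes can be profitable for at most $k$ types.

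\textbf{Step 1: $\OPT=1$.} I would first lower-bound $\OPT$ by the direct menu assigning type $\theta_i$ the scheme $(b_i,\vp_i)$, where $\vp_i$ pays $1$ on $\omega_i$ and $0$ elsewhere. The provider's IC constraints \eqref{eq:ICsp2} hold since $\mF_{b_i}^\top\vp_i-c_{b_i}=1\ge 0=\mF_{b_j}^\top\vp_i-c_{b_j}$ for $j\neq i$; the user's IC constraints \eqref{eq:ICuser} hold since type $\theta_i$ gets utility $0$ from its own scheme and $-1$ from every other scheme; and user IR holds with equality. Hence $\OPT\ge\sum_i\tfrac1n=1$. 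Conversely, in any menu the user selects $\ui(\theta)$ only if it is individually rational, whence $q_{\ui(\theta)}=\mF^\top_{b_{\ui(\theta)}}\vp_{\ui(\theta)}\le\mF^\top_{b_{\ui(\theta)}}\mR_\theta\le 1$, and costs are nonnegative, so $\OPT\le 1$; therefore $\OPT=1$.

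\textbf{Step 2: $\OPT_k\le k/n$.} If $k\ge n$ this is immediate from $\OPT_k\le\OPT$, so assume $k<n$ and fix any menu $\PS=\{(a_i,\vp_i)\}_{i\in\range{k}}$ of Problem \eqref{prog:determ}. Let $S\subseteq\Theta$ be the set of types whose private action occurs among $a_1,\dots,a_k$; since $\PS$ uses at most $k$ distinct actions and distinct types have distinct private actions, $|S|\le k$. If $\theta\notin S$ selects some scheme $i$, then $a_i$ is the private action of a type other than $\theta$, so $\mF_{a_i}^\top\mR_\theta=0$; individual rationality forces $\mF_{a_i}^\top\vp_i\le 0$, hence $\mF_{a_i}^\top\vp_i=0$ (payments are nonnegative), and the provider's revenue from $\theta$ is $q_i-c_{a_i}=0$ (also $0$ if $\theta$ opts out). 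Each $\theta\in S$ contributes at most $\xi_\theta\cdot\mF^\top_{a_{\ui(\theta)}}\mR_\theta\le 1/n$ by the IR bound of Step 1, and at least $0$ since costs vanish. Summing, the provider's expected utility is at most $|S|/n\le k/n=(k/n)\OPT$, i.e.\ $\OPT_k\le(k/n)\OPT$.

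\textbf{Main obstacle.} The one substantive worry is whether a clever menu could make a single profitable scheme serve several types; the construction is designed precisely to rule this out, since a scheme commits to one action whose deterministic outcome is valued at all by a unique type under $\mR=\mathbb{I}$, so IR pins every other type's payment for that scheme to zero. Beyond this I would only sanity-check the degenerate cases $k=0$ (all types opt out, both sides are $0$) and $k\ge n$ (the bound is vacuous), and note that the tie-breaking built into $\ui$ grants the provider nothing, because ties arise only among schemes of equal user value while the IR cap on the collected payment holds for whichever scheme the user picks.
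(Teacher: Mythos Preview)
Your proof is correct and follows essentially the same approach as the paper: the identical instance (identity reward matrix, deterministic outcomes, zero costs, uniform types) and the same two-part argument showing $\OPT=1$ via the direct menu and $\OPT_k\le k/n$ via the IR bound on types whose private action is absent from the menu. The only cosmetic slip is writing $\mF^\top_{b_{\ui(\theta)}}$ in the upper-bound line of Step~1 where you mean the action of the selected scheme, i.e.\ $\mF^\top_{a_{\ui(\theta)}}$; the intended inequality $\mF_{a_{\ui(\theta)}}^\top\vp_{\ui(\theta)}\le\mF_{a_{\ui(\theta)}}^\top\mR_\theta\le 1$ is of course what makes the argument go through.
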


\section{Continuous actions and robustness}\label{sec:continuous}

In the previous sections, we focused on scenarios with finite sets of actions. However, numerous real-world problems feature action sets that are continuous. For instance, this is the case in the earlier motivating example of delegating the training of a machine learning model. In this case, actions could represent the time allocated for training, which is modeled naturally with a continuous variable. Moreover, the action set may be so large to the extent that, for all practical purposes, it can be regarded as continuous. Therefore, the goal of the section is to address cases in which the action set is $\mathcal{A} = [0,1]$.

We assume the following mild and natural regularity condition, requiring the dependence of the outcome distribution and costs on the service provider's action to be sufficiently smooth (\ie outcome distributions and costs for similar actions should be similar).
\begin{assumption}[Action regularity]\label{ass:Lip}
	The outcome distribution is $L_{\mF}$-Lipschitz continuous with respect to action $a$, that is, for any $a,a' \in \mathcal{A}$:
	\[
	\|\mF_a-\mF_{a'}\|_1\le L_{\mF}|a-a'|.
	\]
	Similarly, the costs are $L_{c}$-Lipschitz continuous with respect to action $a$: for all $a,a' \in \mathcal{A}$,
	\[
	|c_a-c_{a'}|\le L_{c}|a-a'|.
	\]
\end{assumption}
\noindent
To simplify the exposition, we assume that $\mF$ and $c$ are both $1$-Lipschitz continuous.

Following a reasoning analogous to that of \Cref{sec:pricing}, it is possible to rewrite the problem of computing an optimal direct menu of deterministic payment schemes as a unit-demand pricing problem with price-floors. In particular, we obtain the following optimization problem:
\begin{equation}\label{prog:pricingcnt}
\mleft\{\begin{array}{cll}
	\sup\limits_{\{(a_\theta, q_\theta)\}_{\theta\in\Theta}} &\sum_{\theta\in\Theta} \xi_\theta\left(q_{\theta}-c_{a_{\theta}}\right) \\
	\textnormal{s.t.} & \mF^\top_{a_{\theta}}\mR_\theta-q_{\theta}\ge \mF^\top_{a_{\theta'}}\mR_\theta-q_{\theta'}&\forall\theta,\theta'\in\Theta\\
	&\mF^\top_{a_{\theta}}\mR_\theta-q_{\theta}\ge0&\forall\theta\in\Theta\\
	& q_{\theta}\in Q_{a_{\theta}}&\forall \theta\in\Theta\\
	& a_\theta\in[0,1]&\forall\theta\in\Theta
\end{array}\mright.
\end{equation}
While this program seems similar to the one solved in \Cref{sec:pricing}, it hides many non-trivial challenges. %

\xhdr{Technical Challenges and Solution.}
A natural approach to solve the problem with continuous actions is to discretize the action set $\cA$, and solve such a problem as seen in \Cref{sec:pricing}. However, such direct approach alone is not sufficient to guarantee a good approximation of the service provider's utility in the original continuous problem. 
Indeed, it is easy to see that, considering slight modifications of the actions would lead to significant consequences in the service provider's utility.
Therefore, we need to design a suitable algorithm to solve the continuous-action problem. We propose an algorithm which is composed of two main steps. First, we solve a discretized version of the continuous problem under \emph{relaxed} user's IC and IR constraints.
Then, we prove that this solution gives an expected utility to the service provider which is close to the value of an optimal solution of the problem with continuous actions. However, this solution only approximately satisfies IC and IR constraints for the user. Therefore, we need to suitably adjust the solution of the discretized and relaxed problem, so that the user's IC and IR constraints are fully satisfied.
We achieve this goal by a clever ``reimbursement'' technique which slightly decreases the magnitude of payments, and pays back the user a small fraction of the service provider's earnings.
As we discuss in the following section, this procedure allows the service provider to build \emph{robust} contracts.

\subsection{Computing Robust Contracts}

In order to provide a solution for continuous-action settings, we will first address the following problem: the service provider would like to compute an optimal contract $\PS$, but they can only rely on estimates $\tR$ of the user's reward matrix $\mR$. %
The true reward matrix $\mR$ is unknown to the service provider. 
We show that, given an IC and IR (both for the user and the service provider) menu $\tPS$ computed over an estimated $\tR$ such that $\|\tR - \mR\|_{\infty}\le \delta$, we can build a new menu achieving an expected service provider's utility close to that extracted by $\tPS$, while guaranteeing a small violation in the {service provider}'s incentive compatibility. The interesting part of these guarantees is that they apply even if the service provider cannot exactly predict the behavior of the user, since the user is best-responding according to an unknown reward function.

We need to introduce a few concepts to handle approximate rationality constraints for both the service provider and the user.
First, we introduce a relaxed version of the set $Q_a$. For any $a\in\cA$ and $\epsilon\ge0$, let $Q_a^\epsilon$ be the set of expected payments that ``$\epsilon$-induce'' action $a$. Formally:
\[
Q^\epsilon_a\defeq\left\{q=\mF_a^\top\vp:\, \mF_a^\top\vp-c_a\ge \max\limits_{a'\in\cA}\mF_{a'}^\top\vp-c_{a'}-\epsilon\right\}.
\]
This allows us to formalize approximate IC constraints for the service provider.
\begin{definition} Given any $\epsilon>0$, we say that a direct menu of deterministic payment schemes $P\coloneqq\{(a_\theta,q_\theta)\}_{\theta\in\Theta}$ is $\epsilon$-IC for the service provider if $q_{\theta}\in Q_{a_\theta}^\epsilon$ for all $\theta$.
\end{definition}
\noindent The next result shows that $Q_\epsilon^a$ is ``well-behaved'' in $\epsilon$, meaning that if we subtract a quantity $\epsilon'$ to an expected payment that was $\epsilon$-inducing action $a$, then we obtain an expected payment that $(\epsilon'+\epsilon)$-induces action $a$.
\begin{restatable}{lemma}{lemmaQeps}\label{lem:Qeps}
	For any $\epsilon\ge0$, $a \in \cA$, $q\in Q_a^\epsilon$, and $0\le\epsilon'\le q$, it holds $(q-\epsilon')\in Q_a^{\epsilon+\epsilon'}$. 
\end{restatable}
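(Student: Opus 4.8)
The plan is to unwind the definition of $Q_a^\epsilon$ in terms of an underlying payment vector and then track how subtracting a scalar from the \emph{expected} payment can be realized by rescaling the payment vector itself. First I would fix $q \in Q_a^\epsilon$; by definition there exists a payment vector $\vp \in \reals_{\ge 0}^m$ with $q = \mF_a^\top \vp$ and
\[
\mF_a^\top \vp - c_a \ge \mF_{a'}^\top \vp - c_{a'} - \epsilon \qquad \text{for all } a' \in \cA.
\]
The goal is to exhibit a payment vector $\vp'$ with $\mF_a^\top \vp' = q - \epsilon'$ that $(\epsilon+\epsilon')$-induces $a$. The natural candidate is a scaled-down copy $\vp' = \lambda \vp$ with $\lambda = (q-\epsilon')/q \in [0,1]$ (which is well-defined and nonnegative precisely because $0 \le \epsilon' \le q$; the degenerate case $q = 0$ forces $\epsilon' = 0$ and is trivial). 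Then $\vp' \ge 0$ and $\mF_a^\top \vp' = \lambda q = q - \epsilon'$, so the expected-payment equality holds.

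Next I would verify the relaxed IC inequality for $\vp'$. For any $a' \in \cA$, since $\mF_{a'}$ is a probability distribution and $\vp \ge 0$, we have $\mF_{a'}^\top \vp \ge 0$, hence $\mF_{a'}^\top \vp' = \lambda\, \mF_{a'}^\top \vp \le \mF_{a'}^\top \vp$ because $\lambda \le 1$. Combining this with the original inequality for $\vp$:
\[
\mF_a^\top \vp' - c_a \;=\; \lambda\, \mF_a^\top \vp - c_a \;=\; \mF_a^\top \vp - c_a - (1-\lambda) q \;=\; \mF_a^\top \vp - c_a - \epsilon'.
\]
Using $\mF_a^\top \vp - c_a \ge \mF_{a'}^\top \vp - c_{a'} - \epsilon \ge \mF_{a'}^\top \vp' - c_{a'} - \epsilon$ (the last step by $\mF_{a'}^\top \vp' \le \mF_{a'}^\top \vp$), we conclude
\[
\mF_a^\top \vp' - c_a \;\ge\; \mF_{a'}^\top \vp' - c_{a'} - \epsilon - \epsilon' \qquad \text{for all } a' \in \cA,
\]
which is exactly the condition $q - \epsilon' = \mF_a^\top \vp' \in Q_a^{\epsilon + \epsilon'}$.

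The only subtle point — and the one I would be most careful about — is the bookkeeping of the cost term: scaling the payment vector by $\lambda$ scales $\mF_a^\top \vp$ but leaves $c_a$ untouched, so the left-hand side of the IC inequality loses exactly $(1-\lambda)q = \epsilon'$ rather than something proportional to $c_a$, and this is what lets the slack degrade by precisely $\epsilon'$. One should also double-check the boundary case $q = 0$ separately (then $Q_a^\epsilon \ni 0$ means $a$ is an $\epsilon$-best response to the zero payment, and the claim for $\epsilon' = 0$ is immediate). Everything else is a routine substitution, so no further obstacles are expected.
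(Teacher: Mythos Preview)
Your proof is correct and follows essentially the same route as the paper's: exhibit a nonnegative payment vector realizing the reduced expected payment $q-\epsilon'$ and check that the IC slack worsens by at most $\epsilon'$. The paper phrases this via subtraction---it picks some $\vp'\ge 0$ with $\vp-\vp'\ge 0$ and $\mF_a^\top\vp'=\epsilon'$, then sets $\vp^\star=\vp-\vp'$---whereas you make the explicit choice $\vp'=\lambda\vp$ with $\lambda=(q-\epsilon')/q$; these are the same construction (your $\lambda\vp$ is the paper's $\vp-\vp'$ for $\vp'=(\epsilon'/q)\vp$), and your write-up is in fact slightly more careful in spelling out why $\mF_{a'}^\top\vp'\le\mF_{a'}^\top\vp$.
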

Now, we turn our attention to the approximate incentives of the user, and we define a relaxed versions of its IC and IR constraints.
\begin{definition}
Given $\delta\ge 0$, we say that a direct menu $\PS:=\{(\tilde a_\theta, \tilde q_\theta)\}_{\theta\in\Theta}$ is $\delta$-IC for the user if:
\[
\mF^\top_{\tilde  a_\theta}\mR_\theta-\tilde q_\theta\ge\mF^\top_{\tilde a_{\theta'}}\mR_\theta-\tilde  q_{\theta'}-\delta\quad\forall\theta,\theta'\in\Theta,
\]
and $\delta$-IR for the user if:
\[
\mF^\top_{\tilde a_\theta}\mR_\theta-\tilde  q_\theta\ge-\delta \quad\forall\theta,\theta'\in\Theta.
\]
\end{definition}
The following result shows that it is possible to build menus which are robust to small variations in the user's utility matrix. In particular, suppose we compute a menu $\tPS$ for $\tR$, but the true underlying matrix is $\mR$ such that $\|\tR-\mR\|_{\infty}\le \delta$. Then, it is easy to see that the menu $\tPS$ computed over $\tR$ would be $\delta$-IC and $\delta$-IR with respect to the true constraints given by $\mR$.
However, it is undesirable that the burden of uncertainty falls on the user, as we, as the designer of the mechanism, cannot control their behavior. Thus, we show how, in the presence of uncertainty, we can move the approximate rationality from the user to the service provider.
More formally, we show that, starting from $\tPS$, we can build a new menu which is almost IC for the service provider and which guarantees a good service provider's utility against a user who behaves according to the unknown rewards. Therefore, $\tPS$ is robust with respect to the strategic behavior of the user which cannot be fully anticipated by the service provider, since the user is best-responding according to the unknown reward matrix $\mR$.

\begin{restatable}{theorem}{lemmadeltaIC}\label{th:deltaIC}
	Let $\tPS:=\{(\tilde a_\theta, \tilde \vp_\theta)\}_{\theta\in\Theta}$ be an $\epsilon$-IC menu for the service provider, \ie $\mF^\top_{\tilde a_\theta}\tilde\vp_{\theta} \eqqcolon \tilde q_\theta\in Q^\epsilon_{\tilde  a_\theta}$ for each $\theta$, and $\delta$-IC and $\delta$-IR for the user. Let $\widetilde{\OPT}$ be the expected service provider's utility extracted by $\tPS$. 
	Then, there exist a polynomial time algorithm that computes an indirect menu $\PS$ that is a function only of $\tPS$ and $\delta$ such that
	\begin{OneLiners}
		\item the utility extracted by $\PS$ is greater than or equal to $\widetilde{\OPT}-O(\sqrt{\delta})$;
		\item $\PS$ is $O(\epsilon+\sqrt{\delta})$-IC for the service provider.
	\end{OneLiners}
\end{restatable}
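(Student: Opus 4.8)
The plan is to obtain $\PS$ from $\tPS$ by a single ``scale-and-reimburse'' operation. We may assume $\delta\le 1$, since otherwise the bound $\widetilde{\OPT}-O(\sqrt\delta)$ is vacuous: rewards lie in $[0,1]$, so $\widetilde{\OPT}=O(1)$, and one can just return the all-zero menu. Set $\mu\coloneqq\sqrt\delta$ and let the $\theta$-th option of $\PS$ have nominal action $\tilde a_\theta$ and payment vector $\vp_\theta\coloneqq(1-\mu)\,\tilde\vp_\theta+\mu\,c_{\tilde a_\theta}\,\mathbf{1}$; operationally, the user pays the original $\tilde p_\theta(\omega)$ and the service provider hands back the $\mu$-fraction $\mu\bigl(\tilde p_\theta(\omega)-c_{\tilde a_\theta}\bigr)$ of its nominal earnings. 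Since $\mu\le 1$ and $c_{\tilde a_\theta}\ge 0$ we get $\vp_\theta\ge\mathbf{0}$, so $\PS$ is a valid (indirect) menu, depends only on $\tPS$ and $\delta$, and is polynomial-time computable. The computation that drives the proof is the following: writing $q_\theta\coloneqq\mF_{\tilde a_\theta}^\top\vp_\theta=(1-\mu)\tilde q_\theta+\mu c_{\tilde a_\theta}$ and using $\mathbf{1}^\top\mF_a=1$, a user of type $\theta$ evaluating option $j$ (under the model's convention that the provider plays the nominal $\tilde a_j$) obtains
\[
\mF_{\tilde a_j}^\top\mR_\theta-q_j=\bigl(\mF_{\tilde a_j}^\top\mR_\theta-\tilde q_j\bigr)+\mu\bigl(\tilde q_j-c_{\tilde a_j}\bigr),
\]
i.e.\ ``old user utility of $j$'' $+\ \mu\cdot$``old provider utility of $j$''. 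Hence the user's selection $j^\star(\theta)$ under $\PS$ maximizes this blended objective over $j\in\range{n}\cup\{\varnothing\}$, which is precisely how the reimbursement ties the user's incentives to the provider's.

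For provider incentive compatibility I would substitute $\vp_\theta$ into the definition of $Q^{\epsilon'}_{\tilde a_\theta}$. Using $\mathbf{1}^\top\mF_{a'}=1$ for all $a'$, the hypothesis $\tilde q_\theta\in Q^\epsilon_{\tilde a_\theta}$ (i.e.\ $\tilde q_\theta-c_{\tilde a_\theta}\ge\max_{a'}\mF_{a'}^\top\tilde\vp_\theta-c_{a'}-\epsilon$), and $(1-\mu)\mF_{a'}^\top\tilde\vp_\theta\le\mF_{a'}^\top\tilde\vp_\theta$, one gets $\mF_{\tilde a_\theta}^\top\vp_\theta-c_{\tilde a_\theta}=(1-\mu)(\tilde q_\theta-c_{\tilde a_\theta})\ge\max_{a'}\bigl(\mF_{a'}^\top\vp_\theta-c_{a'}\bigr)-\bigl(\epsilon+\mu\tilde q_\theta\bigr)$, hence $q_\theta\in Q^{\epsilon+\mu\tilde q_\theta}_{\tilde a_\theta}$; the same follows from \Cref{lem:Qeps} applied to $q_\theta=\tilde q_\theta-\mu(\tilde q_\theta-c_{\tilde a_\theta})$. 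By $\delta$-IR of $\tPS$, $\tilde q_\theta\le\mF_{\tilde a_\theta}^\top\mR_\theta+\delta\le 2$, so $\PS$ is $(\epsilon+2\sqrt\delta)$-IC, i.e.\ $O(\epsilon+\sqrt\delta)$-IC, for the service provider.

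For the utility bound I would split on $j^\star(\theta)$, recalling that $\widetilde{\OPT}=\sum_\theta\xi_\theta(\tilde q_\theta-c_{\tilde a_\theta})$. If $j^\star(\theta)=\varnothing$, the honest option $\theta$ has negative blended value, so $\mu(\tilde q_\theta-c_{\tilde a_\theta})<-(\mF_{\tilde a_\theta}^\top\mR_\theta-\tilde q_\theta)\le\delta$ by $\delta$-IR, whence $\tilde q_\theta-c_{\tilde a_\theta}<\sqrt\delta$, so type $\theta$ contributes at most $\xi_\theta\sqrt\delta$ to $\widetilde{\OPT}$ and $0$ to $\PS$. If $j^\star(\theta)=j$ for some $j\in\range{n}$, comparing the blended values of $j$ and of the honest option $\theta$ and invoking $\delta$-IC of $\tPS$ for the user gives $\mu\bigl((\tilde q_j-c_{\tilde a_j})-(\tilde q_\theta-c_{\tilde a_\theta})\bigr)\ge(\mF_{\tilde a_\theta}^\top\mR_\theta-\tilde q_\theta)-(\mF_{\tilde a_j}^\top\mR_\theta-\tilde q_j)\ge-\delta$; moreover the provider's realized utility when the user picks $j$ is at least $q_j-c_{\tilde a_j}=(1-\mu)(\tilde q_j-c_{\tilde a_j})$ (its realized action is a best response to $\vp_j$, which only helps), which is at least $(\tilde q_\theta-c_{\tilde a_\theta})-O(\sqrt\delta)$ after using $\tilde q_j-c_{\tilde a_j}\ge(\tilde q_\theta-c_{\tilde a_\theta})-\sqrt\delta$, the bound $\tilde q_j-c_{\tilde a_j}\le 2$ when it is positive, and $(1-\mu)x\ge x$ when $x<0$. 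Weighting by $\xi_\theta$ and summing over $\theta$ yields that $\PS$ extracts utility at least $\widetilde{\OPT}-O(\sqrt\delta)$.

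The main obstacle is the quantitative trade-off that pins $\mu$ to $\sqrt\delta$: scaling payments down by a factor $1-\mu$ costs $\Theta(\mu)$ both in the provider's utility and in its IC slack, whereas the residual misalignment inherited from the user's $\delta$-slack lets the user migrate to an option worth $\Theta(\delta/\mu)$ less to the provider, so balancing these forces $\mu=\Theta(\sqrt\delta)$ and hence the $\sqrt\delta$ rate. A secondary subtlety is the opt-out case, which is harmless only because it arises exactly for those types whose original contribution to $\widetilde{\OPT}$ is already $O(\sqrt\delta)$; there one must keep careful track of signs, in particular that $\tilde q_\theta-c_{\tilde a_\theta}$ may itself be negative, in which case the $1-\mu$ scaling helps rather than hurts.
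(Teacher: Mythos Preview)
Your proposal is correct and follows essentially the same reimbursement idea as the paper: replace each expected payment $\tilde q_\theta$ by $\tilde q_\theta-\sqrt{\delta}\,(\tilde q_\theta-c_{\tilde a_\theta})$, so that the user's utility for option $j$ becomes ``old user utility of $j$'' plus a $\sqrt{\delta}$-fraction of ``old provider utility of $j$''---this is exactly the content of the paper's key Claim. The only notable difference is that the paper additionally drops all options whose provider surplus is below $\sqrt{2\delta}$ (so that \Cref{lem:Qeps} applies with a nonnegative shift), whereas you keep every option and handle the possibly-negative-surplus case by a direct sign argument; your route is a touch cleaner, but note that your parenthetical ``the same follows from \Cref{lem:Qeps}'' requires $\tilde q_\theta\ge c_{\tilde a_\theta}$, so lean on your direct IC computation there.
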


\noindent 
We observe that the ``robust'' menu $P$ is indirect. Indeed, a direct menu cannot be computed since the service provider is unaware of the parameters of the instance (\eg the true reward matrix $\mR$), making it impossible to predict which payment scheme will be chosen by each user's type.
However, despite being indirect, the service provider can still safely deploy the robust menu since it is $0$-IR and $0$-IC by definition, as the users are free to select whichever payment scheme they prefer according to their true utilities.

Along the same line, note that the property that $P$ is independent of $\mR$ is fundamental to design robust contracts. Indeed, from \cref{th:deltaIC} it directly follows:

\begin{corollary}
	Consider an IC and IR (both for the user and the service provider) menu $\tPS$ computed over an estimated $\tR$ such that $\|\tR - \mR\|_{\infty}\le \delta$. Let $\widetilde{\OPT}$ be the expected service provider's utility extracted by $\tPS$. 
	Then,  there exist a polynomial time algorithm that computes an non-direct menu $\PS$ that is a function only of $\tPS$ and $\delta$ such that
		\begin{OneLiners}
		\item the utility extracted by $\PS$ is greater than or equal to $\widetilde{\OPT}-O(\sqrt{\delta})$;
		\item $\PS$ is $O(\sqrt{\delta})$-IC for the service provider.
	\end{OneLiners}
\end{corollary}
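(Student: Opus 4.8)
The plan is to derive this as an immediate corollary of \Cref{th:deltaIC}, so the only work is to check that its hypotheses hold for $\tPS$ with service-provider relaxation parameter $\epsilon=0$ and a user-side relaxation parameter of order $\delta$, and then read off the conclusion.

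First I would note that the service provider's IC constraints \eqref{eq:ICsp2} --- equivalently, membership of $\tilde q_\theta = \mF_{\tilde a_\theta}^\top \tilde\vp_\theta$ in $Q_{\tilde a_\theta}$ --- depend only on $\mF$ and $\vc$, not on the reward matrix at all. Hence the assumption that $\tPS$ is IC for the service provider (on the estimated instance) already makes it $0$-IC for the service provider on the true instance: $\tilde q_\theta \in Q_{\tilde a_\theta} = Q^0_{\tilde a_\theta}$ for every $\theta$, so we may take $\epsilon = 0$ in \Cref{th:deltaIC}.

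Next I would convert the entrywise bound $\|\tR - \mR\|_\infty \le \delta$ into additive slack in the user's constraints. Since each column $\mF_a$ is a probability distribution over $\Omega$, we have $\|\mF_a\|_1 = 1$, and therefore $|\mF_a^\top(\mR_\theta - \tR_\theta)| \le \|\mR_\theta - \tR_\theta\|_\infty \le \delta$ for all $a\in\cA$ and $\theta\in\Theta$. Substituting $\mF^\top\mR$ for $\mF^\top\tR$ in the user's IC and IR inequalities that $\tPS$ satisfies with respect to $\tR$ then shows that, with respect to the true $\mR$, $\tPS$ is $2\delta$-IC for the user (the perturbation enters on the two payment schemes being compared) and $\delta$-IR for the user; in particular it is $2\delta$-IC and $2\delta$-IR for the user.

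Finally, I would apply \Cref{th:deltaIC} to $\tPS$ with $\epsilon = 0$ and user-relaxation parameter $2\delta$. The theorem returns, in polynomial time and as a function of $\tPS$ and $\delta$ only, an indirect menu $\PS$ with expected service provider's utility at least $\widetilde{\OPT} - O(\sqrt{2\delta}) = \widetilde{\OPT} - O(\sqrt{\delta})$ that is $O(\epsilon + \sqrt{2\delta}) = O(\sqrt{\delta})$-IC for the service provider, which is exactly the claim. There is no genuine obstacle here: the only point needing (minor) care is the $\ell_1$--$\ell_\infty$ bound above, together with the bookkeeping observation that, because $\tPS$ is \emph{exactly} IC for the service provider, the $\epsilon$ term vanishes and the resulting IC violation is governed purely by $\sqrt{\delta}$.
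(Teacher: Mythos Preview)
Your proposal is correct and follows the same route as the paper, which simply states that the corollary ``directly follows'' from \Cref{th:deltaIC}; you have correctly filled in the verification that the hypotheses of \Cref{th:deltaIC} hold with $\epsilon=0$ (since the service-provider IC constraints involve only $\mF$ and $\vc$) and with a user-side relaxation of order $\delta$ via the $\ell_1$--$\ell_\infty$ bound, after which the conclusion is immediate.
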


\subsection{Continuous Actions: Discretizing and Relaxing}

The results on computing robust menus play a key role in the computation of menus for the case of continuous actions. In order to address this problem, we start by making the following natural assumption on the structure of outcome probabilities. 

\begin{definition}[$c$-smoothness]
	An instance of the delegation problem $\delegation$ is $c$-smooth if, for each outcome $\omega \in \Omega$, there exists an action $a\in\cA$ such that $\mF_{a}(\omega)\ge c$
\end{definition}

\begin{assumption}\label{ass:smooth}
	There exists $c\ge 0$ such that the delegation problem $\delegation$ is $c$-smooth. 
\end{assumption}

\noindent
This assumption  can be interpreted as requiring that any outcome can be reached with sufficiently high probability, if an adequate amount of effort is devoted to the task. This assumption is natural in general, and it is especially applicable in our motivating example of delegated computation. Our final result will be parametrized in the smoothness parameter. 
On a more technical side, this implies that we do not need to force extremely large payments on single outcomes for incentivizing a particular action. Formally, we can prove the following:

\begin{restatable}{lemma}{lemmacsmooth}\label{lem:csmooth}
	For any $c$-smooth instance, any $a\in \cA$, and any $q\in Q_a$, there exists a $\vp\in\Pi_a$ with $\mF_a^\top\vp=q$ and $\|\vp\|_\infty\le (1+q)/c$.
\end{restatable}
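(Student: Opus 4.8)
The plan is to directly construct the desired payment vector $\vp$ from a given IC payment, using the $c$-smoothness to ``rebalance'' the payment onto an outcome that every action reaches with probability at least $c$. Fix an action $a \in \cA$ and an expected payment $q \in Q_a$. By definition of $Q_a$, there exists some $\vp' \in \Pi_a$ with $\mF_a^\top \vp' = q$; the issue is only that the entries of $\vp'$ may be large. The key structural observation I would exploit is that the IC constraints defining $\Pi_a$ are invariant under adding a constant to \emph{every} coordinate of $\vp$ shifted by the outcome distributions --- more precisely, what matters for IC is how $\mF_{a'}^\top \vp$ compares across actions $a'$, and these comparisons are governed by differences $(\mF_a - \mF_{a'})^\top \vp$. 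This suggests replacing $\vp'$ by a ``flat'' payment that puts all its mass proportionally.

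First I would consider the candidate $\vp$ defined as follows: by $c$-smoothness there is an outcome $\omega^\star$ and an action with $\mF_{a}(\omega^\star)$ large --- but I actually want an outcome that $a$ itself hits with decent probability, or more carefully, I want to put payment only where it helps incentivize $a$ over the alternatives. The cleanest route: show that the \emph{constant} payment $\vp \equiv q$ on all outcomes (i.e. $p(\omega) = q$ for every $\omega$) satisfies $\mF_a^\top \vp = q$ trivially, and check whether it lies in $\Pi_a$. It does not in general, because a constant payment gives $\mF_{a'}^\top\vp = q$ for all $a'$, so the IC inequality $\mF_a^\top\vp - c_a \ge \mF_{a'}^\top\vp - c_{a'}$ becomes $-c_a \ge -c_{a'}$, which may fail. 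So a constant payment is not enough; I need to concentrate extra reward on an outcome that distinguishes $a$ favorably.

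The right construction uses $c$-smoothness applied to action $a$ after a normalization step, or alternatively: take the original large $\vp' \in \Pi_a$, and pick an outcome $\omega^\star$ with $\mF_a(\omega^\star) \ge c$ (guaranteed because $\mF_a \in \Delta(\Omega)$ has some coordinate $\ge 1/m \ge$ ... no --- this is where $c$-smoothness of the \emph{instance} is needed, but smoothness only guarantees \emph{some} action hits each outcome; I need it for $a$). The actual argument, I expect, is: reduce to the subinstance/perturbation where the relevant bound holds, then set $\vp$ to be zero everywhere except $p(\omega^\star) = q/\mF_a(\omega^\star) \le q/c$, and verify $\mF_a^\top \vp = q$ while the single-outcome concentration can only help the IC inequalities for $a$ if $\omega^\star$ is chosen to maximize $\mF_a(\omega^\star) - \mF_{a'}(\omega^\star)$ appropriately --- this requires a careful choice, possibly a convex combination of such single-outcome payments, and the $+1$ in the bound $(1+q)/c$ presumably absorbs a cost-compensating additive term $c_{a'} - c_a \le 1$.

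The main obstacle will be verifying that the rebalanced, low-magnitude payment still lies in $\Pi_a$, i.e. still incentivizes $a$ over \emph{every} competing action $a'$ simultaneously. Moving payment mass around changes $\mF_{a'}^\top \vp$ for all $a'$ at once, so I cannot treat the constraints one at a time in an obvious way; I expect the argument to go through a linear-programming / extreme-point or LP-duality characterization of $\Pi_a$, or through an explicit averaging over outcomes weighted by $\mF_a$, showing that the ``spread-out'' version of any feasible $\vp'$ with the same expectation $q$ is still feasible and has $\ell_\infty$-norm at most $(1+q)/c$ because each coordinate is a $c$-smoothed average of quantities bounded in terms of $q$ and the unit cost range. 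Handling the additive $1$ cleanly --- tracing exactly where the cost differences $c_{a'} - c_a \in [-1,1]$ enter and why they contribute only $1/c$ to the norm --- is the step I would be most careful about.
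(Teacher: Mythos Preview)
Your plan has a genuine gap: you are trying to \emph{construct} a bounded payment vector from a potentially unbounded one, and you correctly identify that maintaining all IC constraints simultaneously while ``rebalancing'' mass is the hard part. But this construction is never needed. The paper's proof shows the much stronger fact that \emph{every} $\vp\in\Pi_a$ with $\mF_a^\top\vp=q$ already satisfies $\|\vp\|_\infty\le(1+q)/c$; there is nothing to rebalance.

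The argument is a two-line use of the IC constraint in the direction you did not try. Fix any outcome $\omega$. By $c$-smoothness there is an action $a_\omega$ with $F_{a_\omega}(\omega)\ge c$ (note: smoothness gives, for each \emph{outcome}, some action that hits it --- you flipped the quantifiers when you wrote ``I need it for $a$''). Since $\vp\in\Pi_a$, the action $a$ is weakly preferred to $a_\omega$, so
\[
q=\mF_a^\top\vp \;\ge\; \mF_a^\top\vp - c_a \;\ge\; \mF_{a_\omega}^\top\vp - c_{a_\omega} \;\ge\; F_{a_\omega}(\omega)\,p(\omega) - 1 \;\ge\; c\,p(\omega)-1,
\]
using $c_a\ge 0$, nonnegativity of the other coordinates of $\vp$, and $c_{a_\omega}\le 1$. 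Rearranging gives $p(\omega)\le(1+q)/c$ for every $\omega$, which is exactly the claimed bound and also explains the ``$+1$'' you were worried about: it comes from the single cost term $c_{a_\omega}\le 1$. The IC constraints do not need to be re-verified because you never changed $\vp$; they are what \emph{produce} the bound.
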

\noindent
As we will see in the following, bounded payments are a necessary condition to bound the variation in expected payment between two actions that induced similar distributions, and hence to exploit \cref{ass:Lip}.

Now, we can use \Cref{th:deltaIC} to shift our focus from solving Problem \ref{prog:pricingcnt} with continuous actions to solving a relaxed problem in which we can allow slight violations of the IC and IR constraints of the user.
In order to do this, for any $\delta>0$, let $\cA_\delta\defeq\{0,\delta,\ldots,1-\delta,1\}$ be the discretization of the continuous action set $\cA=[0,1]$.
Moreover, we define the sets $ Q^{\epsilon}_a(\cA_\delta)$, $a \in \cA_\delta$, as the set of expected payments that can $\epsilon$-induce action $a$ when considering only actions in $\cA_\delta$. Formally, 
\[
Q^\epsilon_a (\cA_\delta)\defeq\left\{q=\mF_a^\top\vp:\, \mF_a^\top\vp-c_a\ge \max\limits_{a'\in\cA_\delta}\mF_{a'}^\top\vp-c_{a'}-\epsilon\right\}.
\]
Then, we define the following discretized and relaxed program parametrized by the discretization  $\delta$:

\begin{equation}\label{prog:pricingcntdiscrete}
	\mleft\{\begin{array}{cll}
	\max\limits_{\{(a_\theta, q_\theta)\}_{\theta\in\Theta}} &\sum\limits_{\theta\in\Theta} \xi_\theta\left(q_{\theta}-c_{a_{\theta}}\right) &\textnormal{s.t.}\\
	&\overline{\mR}^\theta_{a_\theta}-q_{\theta}\ge \underline{\mR}^\theta_{a_{\theta'}}-q_{\theta'}&\forall\theta,\theta'\in\Theta \\
	&\overline{\mR}^\theta_{a_\theta}-q_{\theta}\ge0&\forall\theta,\in\Theta \\
	& q_{\theta}\in Q^{\delta(1+\frac{2}{c})}_{a_{\theta}}(\cA_\delta)&\forall \theta\in\Theta\\
	& a_\theta\in\cA_\delta&\forall\theta\in\Theta
	\end{array}\mright.
\end{equation}
where we defined $\overline{\mR}^\theta_a\defeq\mF_a^\top\mR_\theta+\delta $ and $\underline{\mR}^\theta_a\defeq\mF_a^\top\mR_\theta-\delta $ for every $a\in\cA_\delta$.

Clearly, a feasible solution of Program \ref{prog:pricingcntdiscrete} is $O(\delta)$-IC and $O(\delta)$-IR for the user.
Also, notice that Program~\ref{prog:pricingcntdiscrete} can be solved efficiently whenever $|\cA_\delta|=O(1/\delta)$ is constant, as it fits the same formulation of the problem solved in \Cref{sec:pricing}, and in particular \Cref{cor:constantactions}.
More precisely, we can solve the program in polynomial time whenever $\delta$ is constant.

We prove two useful statements about $\overline{\mR}^\theta_a$, $\underline{\mR}^\theta_a$, and $Q_a^\epsilon$.
First, $\overline{\mR}^\theta_a$ and $\underline{\mR}^\theta_a$ are upper and lower bounds, respectively, of the reward of playing an action ``close'' to $a$. Formally:

\begin{restatable}{lemma}{lemmaCB}\label{lm:CB}
	For all $a\in\cA_\delta$, the following holds:
	\[
	\mF_{a'}^\top\mR_\theta\le\overline\mR_a^\theta\quad\forall a'\textnormal{ s.t. }|a-a'|\le\delta,
	\]
	and similarly:
	\[
	\mF_{a'}^\top\mR_\theta\ge\underline\mR_a^\theta\quad\forall a'\textnormal{ s.t. }|a-a'|\le\delta.
	\]
\end{restatable}

Moreover, we prove that all payments that incentivize an action $a$, also $O(\epsilon/c)$-incentivize all actions $a'$ which are $\epsilon$ away from $a$. This result crucially depends on \Cref{ass:smooth} through \Cref{lem:csmooth}. Indeed, if that assumption is dropped, even small perturbations of the outcome distribution could have enormous consequences on the expected payment required to incentivize certain actions. Formally, we can prove the following:
\begin{restatable}{lemma}{lemmaQChangeA}\label{lem:QChangeA}
	For any $a \in \cA$, $a' \in \cA_\delta$ such that $|a-a'|\le\delta$, it holds that \[Q_a \cap [0,1] \subseteq Q_{a'}^{\delta(1+\frac{2}{c})}(\cA_\delta)  \cap [0,1]  \subseteq Q^{2\delta (1+2/c)}_{a'}  \cap [0,1] .\]
\end{restatable}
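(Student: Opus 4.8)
The plan is to prove the two inclusions separately; both rest on the same \emph{incentive-slack transfer} principle: replacing an action $b$ by an action $b'$ with $|b-b'|\le\delta$ changes $\mF_{b}^\top\vp$ by at most $\|\mF_{b}-\mF_{b'}\|_1\,\|\vp\|_\infty\le\delta\|\vp\|_\infty$ and changes $c_b$ by at most $\delta$ (\Cref{ass:Lip}), so a payment vector $\vp$ that exactly induces $b$ also $\bigl(\delta(1+\|\vp\|_\infty)\bigr)$-induces $b'$. \Cref{ass:smooth}, through \Cref{lem:csmooth}, lets us take the witness payment with $\|\vp\|_\infty\le(1+q)/c\le 2/c$ whenever $q\le 1$, so each such hop costs only $\delta(1+2/c)$; set $\epsilon_1\defeq\delta(1+2/c)$. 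Throughout I take $c\le 1$ and assume the maxima in the definitions of the various $Q$-sets are attained (true by compactness of $[0,1]$ and continuity under \Cref{ass:Lip}); I also use that each $Q$-set is the image of a polytope of payment vectors under a linear map, hence an interval, and is upward-closed because adding a nonnegative constant to every coordinate of a payment shifts all quantities $\mF_{b}^\top\vp$ by that constant and leaves every incentive-slack unchanged.

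\emph{First inclusion.} Let $q\in Q_a\cap[0,1]$; by \Cref{lem:csmooth} pick $\vp\in\Pi_a$ with $\mF_a^\top\vp=q$ and $\|\vp\|_\infty\le 2/c$. For each $a''\in\cA_\delta\subseteq\cA$, combining $\mF_a^\top\vp-c_a\ge\mF_{a''}^\top\vp-c_{a''}$ with $|\mF_a^\top\vp-\mF_{a'}^\top\vp|\le 2\delta/c$ and $|c_a-c_{a'}|\le\delta$ gives $\mF_{a'}^\top\vp-c_{a'}\ge\mF_{a''}^\top\vp-c_{a''}-\epsilon_1$; hence $q'\defeq\mF_{a'}^\top\vp$ lies in $Q_{a'}^{\epsilon_1}(\cA_\delta)$, and $|q'-q|\le 2\delta/c$. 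It remains to pass from $q'$ to $q$. If $q\ge q'$, this is immediate by upward-closedness of $Q_{a'}^{\epsilon_1}(\cA_\delta)$. If $q<q'$, use the scaled witness $\tfrac{q}{q'}\vp\ge 0$: its expected payment under $\mF_{a'}$ is exactly $q$, and since $\mF_{a''}^\top\!\bigl(\tfrac{q}{q'}\vp\bigr)\le\mF_{a''}^\top\vp\le q-c_a+c_{a''}$ for every $a''\in\cA_\delta$ (again from $\vp\in\Pi_a$), its incentive-slack at $a'$ is at least $(q-c_{a'})-(q-c_a)=c_a-c_{a'}\ge-\delta\ge-\epsilon_1$. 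Either way $q\in Q_{a'}^{\epsilon_1}(\cA_\delta)$, and $q\in[0,1]$.

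\emph{Second inclusion.} Let $q\in Q_{a'}^{\epsilon_1}(\cA_\delta)\cap[0,1]$ with a witness $\vp$ — so $\mF_{a'}^\top\vp=q$ and $\vp$ $\epsilon_1$-induces $a'$ over $\cA_\delta$ — which (see below) we may take with $\|\vp\|_\infty\le 2/c$. I claim $\vp$ also witnesses $q\in Q_{a'}^{2\epsilon_1}$. Fix any $a''\in\cA$ and choose $\hat a''\in\cA_\delta$ with $|a''-\hat a''|\le\delta$; \Cref{ass:Lip} gives $\mF_{a''}^\top\vp-c_{a''}\le\mF_{\hat a''}^\top\vp-c_{\hat a''}+\delta\|\vp\|_\infty+\delta$, and the relaxed incentive condition over $\cA_\delta$ gives $\mF_{\hat a''}^\top\vp-c_{\hat a''}\le\mF_{a'}^\top\vp-c_{a'}+\epsilon_1$. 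Chaining these and using $\delta\|\vp\|_\infty+\delta\le\epsilon_1$ yields $\mF_{a''}^\top\vp-c_{a''}\le\mF_{a'}^\top\vp-c_{a'}+2\epsilon_1$ for every $a''\in\cA$, i.e.\ $q=\mF_{a'}^\top\vp\in Q_{a'}^{2\epsilon_1}$, and $q\in[0,1]$ is preserved.

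The step I expect to be the crux is the claim, used in the second inclusion, that a witness of membership in the \emph{relaxed, sub-grid} set $Q_{a'}^{\epsilon_1}(\cA_\delta)$ may be chosen with $\|\vp\|_\infty\le 2/c$, since \Cref{lem:csmooth} is stated only for the exact sets $Q_a$. I would settle this either by checking that the mass-redistribution argument behind \Cref{lem:csmooth} goes through verbatim when the service-provider incentive constraints are only $\epsilon$-relaxed and only imposed over $\cA_\delta$ (that argument only reshapes $\vp$ so as to keep $\mF_{a'}^\top\vp$ fixed while shrinking $\|\vp\|_\infty$, which can only help a relaxed or sub-grid incentive condition), or by arguing directly from $c$-smoothness that \emph{every} witness is automatically bounded: for each outcome $\omega$ there is a $c$-heavy action, hence (\Cref{ass:Lip}) some $\hat a_\omega\in\cA_\delta$ with $\mF_{\hat a_\omega}(\omega)\ge c-\delta$, so a too-large $\vp(\omega)$ would make $\mF_{\hat a_\omega}^\top\vp-c_{\hat a_\omega}\ge(c-\delta)\vp(\omega)-1$ overshoot the $\epsilon_1$-slack budget at $a'$ (which is bounded because $q\le 1$). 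The latter route yields $\|\vp\|_\infty=O(1/c)$ but with a constant slightly worse than $(1+q)/c$, so one preliminary clipping is needed to land on exactly $2\epsilon_1$ rather than a larger multiple of $\delta$; modulo this constant-bookkeeping, everything reduces to the two elementary Lipschitz hops above, the interval/upward-closed structure of the $Q$-sets, and a single rescaling of a payment vector.
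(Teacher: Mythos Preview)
Your proposal follows the same approach as the paper --- both inclusions come from Lipschitz-transferring incentive slack (\Cref{ass:Lip}) with $\|\vp\|_\infty$ controlled via $c$-smoothness (\Cref{lem:csmooth}) --- but you are actually more careful than the paper on two points. First, in the first inclusion the paper takes a witness $\vp$ with $\mF_a^\top\vp=q$, shows it $\epsilon_1$-induces $a'$, and then simply writes ``Hence, $q\in Q_{a'}^{\epsilon_1}$,'' silently identifying $q=\mF_a^\top\vp$ with $\mF_{a'}^\top\vp$; you correctly separate the cases $q\ge q'$ and $q<q'$ and repair the latter with the rescaled witness $\tfrac{q}{q'}\vp$, which works. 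Second, the gap you flag as ``the crux'' in the second inclusion is real and is present in the paper too: the paper invokes \Cref{lem:csmooth} on a witness for membership in $Q_{a'}^{\epsilon_1}(\cA_\delta)$ even though that lemma is stated only for the exact set $Q_{a}$. Your direct route (a $c$-heavy action in $\cA$ becomes $(c-\delta)$-heavy on the grid by Lipschitz, then use the relaxed IC over $\cA_\delta$) is the right patch and yields $\|\vp\|_\infty\le(1+q+\epsilon_1)/(c-\delta)$; as you note this is slightly worse than $2/c$, so the final slack is $O(\delta/c)$ rather than exactly $2\delta(1+2/c)$. This is harmless for the downstream application, which only needs an $O(\delta/c)$ bound, but neither your argument nor the paper's quite delivers the exact constant in the lemma as stated.
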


Now, we show that relaxing the constraints in Program~\ref{prog:pricingcntdiscrete} has the desired advantages. Indeed, we can prove that a solution to the relaxed program yields larger utility than the payment scheme obtained by solving Program \ref{prog:pricingcnt}. This result follows by \cref{lm:CB}, and \cref{lem:QChangeA}.

\begin{restatable}{lemma}{lemmaRealxCNT}\label{th:relax}
	Let $\PS^\star\defeq\{(a^\star_\theta,q^\star_\theta)\}_{\theta\in\Theta}$ be an optimal solution to Program~\ref{prog:pricingcnt}, and $\tPS:=\{(\tilde a_\theta,\tilde q_\theta)\}_{\theta\in\Theta}$ be an optimal solution to Program \ref{prog:pricingcntdiscrete}. Then:
	\[
	V(\tPS)\ge V(\PS^\star)-\delta,
	\]
	where, for any menu $\PS:=\{(a_\theta,q_\theta)\}_{\theta\in\Theta}$, we defined $V(\PS)\coloneqq\sum_{\theta\in\Theta}\xi_\theta\left(q_\theta-c_{a_\theta}\right)$.
\end{restatable}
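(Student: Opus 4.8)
The plan is to show that the optimal solution $\PS^\star$ of Program~\ref{prog:pricingcnt}, after rounding each continuous action to the nearest grid point, becomes a feasible solution of the relaxed discretized Program~\ref{prog:pricingcntdiscrete} whose objective value has dropped by at most $\delta$. Concretely, I would first fix an optimal $\PS^\star = \{(a^\star_\theta, q^\star_\theta)\}_{\theta\in\Theta}$ for Program~\ref{prog:pricingcnt}. Without loss of generality I may assume all $q^\star_\theta \le 1$ (if some expected payment exceeds $1$, the corresponding user IR constraint $\mF^\top_{a^\star_\theta}\mR_\theta - q^\star_\theta \ge 0$ would be violated since $\mF^\top_{a^\star_\theta}\mR_\theta \le 1$; so $q^\star_\theta \in Q_{a^\star_\theta} \cap [0,1]$). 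For each $\theta$, let $\tilde a_\theta \in \cA_\delta$ be a grid point with $|\tilde a_\theta - a^\star_\theta| \le \delta$ (such a point exists by construction of the uniform grid), and keep $\tilde q_\theta := q^\star_\theta$ unchanged.

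Next I would verify that $\{(\tilde a_\theta, \tilde q_\theta)\}_{\theta\in\Theta}$ is feasible for Program~\ref{prog:pricingcntdiscrete}. For the price-floor constraint $\tilde q_\theta \in Q^{\delta(1+2/c)}_{\tilde a_\theta}(\cA_\delta)$: since $q^\star_\theta \in Q_{a^\star_\theta} \cap [0,1]$ and $|\tilde a_\theta - a^\star_\theta| \le \delta$, \Cref{lem:QChangeA} gives exactly $Q_{a^\star_\theta} \cap [0,1] \subseteq Q^{\delta(1+2/c)}_{\tilde a_\theta}(\cA_\delta) \cap [0,1]$, so $\tilde q_\theta$ lies in the required relaxed set. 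For the user's relaxed IC constraints $\overline{\mR}^\theta_{\tilde a_\theta} - \tilde q_\theta \ge \underline{\mR}^\theta_{\tilde a_{\theta'}} - \tilde q_{\theta'}$: by \Cref{lm:CB} we have $\overline{\mR}^\theta_{\tilde a_\theta} \ge \mF^\top_{a^\star_\theta}\mR_\theta$ and $\underline{\mR}^\theta_{\tilde a_{\theta'}} \le \mF^\top_{a^\star_{\theta'}}\mR_\theta$ (using $|\tilde a_\theta - a^\star_\theta| \le \delta$ and $|\tilde a_{\theta'} - a^\star_{\theta'}| \le \delta$), so the original IC inequality $\mF^\top_{a^\star_\theta}\mR_\theta - q^\star_\theta \ge \mF^\top_{a^\star_{\theta'}}\mR_\theta - q^\star_{\theta'}$ from Program~\ref{prog:pricingcnt} implies the relaxed one. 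The same sandwiching argument handles the relaxed IR constraint $\overline{\mR}^\theta_{\tilde a_\theta} - \tilde q_\theta \ge 0$.

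Finally I would bound the objective loss. We have
\[
V(\tPS) - V(\PS^\star) = \sum_{\theta\in\Theta}\xi_\theta\bigl((\tilde q_\theta - c_{\tilde a_\theta}) - (q^\star_\theta - c_{a^\star_\theta})\bigr) = \sum_{\theta\in\Theta}\xi_\theta\bigl(c_{a^\star_\theta} - c_{\tilde a_\theta}\bigr),
\]
since $\tilde q_\theta = q^\star_\theta$. By \Cref{ass:Lip} (with the $1$-Lipschitz simplification) and $|\tilde a_\theta - a^\star_\theta| \le \delta$, each term satisfies $|c_{a^\star_\theta} - c_{\tilde a_\theta}| \le \delta$, so $V(\tPS) - V(\PS^\star) \ge -\delta \sum_{\theta}\xi_\theta = -\delta$. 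Since $\tilde\PS$ (the optimizer of Program~\ref{prog:pricingcntdiscrete}) is at least as good as the feasible point we constructed, $V(\tPS) \ge V(\PS^\star) - \delta$, as claimed.

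The main obstacle I anticipate is the price-floor feasibility step: it is essential that the expected payments in the optimal continuous solution can be taken to lie in $[0,1]$, because \Cref{lem:QChangeA} only controls the transfer of the $Q_a$ membership within the $[0,1]$ box. I would need to make sure the reduction to $q^\star_\theta \le 1$ is airtight — this relies on the user IR constraint being present in Program~\ref{prog:pricingcnt} and on $\mR_\theta \in [0,1]^m$ so that $\mF^\top_{a}\mR_\theta \le 1$. Everything else is a routine application of the two lemmas and the Lipschitz assumption; there are no approximation-of-a-supremum subtleties here because we argue directly from a fixed optimal (or near-optimal) point of the continuous program rather than passing to a limit.
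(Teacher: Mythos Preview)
Your proposal is correct and follows essentially the same approach as the paper: round each continuous action to the nearest grid point while keeping the expected payments fixed, use \Cref{lm:CB} for the relaxed user IC/IR constraints, use \Cref{lem:QChangeA} (together with $q^\star_\theta\le 1$ from IR) for the relaxed price-floor constraint, and use the Lipschitz cost assumption for the $\delta$ objective loss. The only presentational slip is that you reuse the symbols $\tilde a_\theta,\tilde q_\theta$ both for the rounded feasible point you construct and for the optimizer $\tPS$ of Program~\ref{prog:pricingcntdiscrete}; introduce a separate name (e.g.\ $\hat\PS^\star$) for the constructed point to avoid the overload.
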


Then, we prove the main theorem of this section, which can be easily derived by combining the results discussed above.

\begin{theorem}
	Consider the Program~\ref{prog:pricingcnt} and let $\OPT$ be its optimum. Then, for any constant $\delta>0$, there exists an polynomial-time algorithm which finds an indirect menu that is $O(\sqrt{\delta}/c)$-IC for the service provider and obtains utility at least $\OPT-O(\sqrt{\delta})$.
\end{theorem}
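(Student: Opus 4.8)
The plan is to chain together the three structural results already established in this section: Lemma~\ref{th:relax}, which controls the gap between the discretized-relaxed program and the continuous program, and Theorem~\ref{th:deltaIC}, which converts an approximately-IC/IR menu into a robust indirect menu with only a $O(\sqrt{\delta})$ utility loss and a controlled IC violation for the service provider. First I would solve Program~\ref{prog:pricingcntdiscrete} exactly: since $\delta$ is a constant, $|\cA_\delta| = O(1/\delta) = O(1)$, so by the reduction of \Cref{sec:pricing} (specifically \Cref{cor:constantactions}, after recasting the discretized problem in the unit-demand-pricing-with-price-floors form) this can be done in polynomial time. Call the resulting optimal solution $\tPS = \{(\tilde a_\theta, \tilde q_\theta)\}_{\theta\in\Theta}$.

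Next I would verify that $\tPS$ satisfies the hypotheses of Theorem~\ref{th:deltaIC} with the right parameters. By construction of Program~\ref{prog:pricingcntdiscrete}, the constraints $\overline{\mR}^\theta_{a_\theta} - q_\theta \ge \underline{\mR}^\theta_{a_{\theta'}} - q_{\theta'}$ and $\overline{\mR}^\theta_{a_\theta} - q_\theta \ge 0$, together with the definitions $\overline{\mR}^\theta_a = \mF_a^\top \mR_\theta + \delta$ and $\underline{\mR}^\theta_a = \mF_a^\top \mR_\theta - \delta$, directly imply $\tPS$ is $O(\delta)$-IC and $O(\delta)$-IR for the user. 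For the service provider side, the constraint $\tilde q_\theta \in Q^{\delta(1+2/c)}_{\tilde a_\theta}(\cA_\delta)$ only guarantees $\epsilon$-inducement relative to the discrete action set; by the second inclusion in Lemma~\ref{lem:QChangeA} (taking $a = a' = \tilde a_\theta \in \cA_\delta$, so the hypothesis $|a-a'|\le\delta$ is trivially met, or more directly by a one-line argument that a best response among $\cA_\delta$ that is $\epsilon$-optimal is $(\epsilon + \delta\cdot\max_{a'}\|\mF_{a'}\text{-shift}\|)$-optimal among all of $\cA$, using the $1$-Lipschitz continuity of $\mF$ and $c$ and the payment bound $\|\vp\|_\infty \le (1+q)/c$ from Lemma~\ref{lem:csmooth}) we get $\tilde q_\theta \in Q^{O(\delta/c)}_{\tilde a_\theta}$ with respect to the full continuous action set. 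So $\tPS$ is $\epsilon$-IC for the service provider with $\epsilon = O(\delta/c)$.

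Then I would apply Theorem~\ref{th:deltaIC} to $\tPS$ with these parameters $\epsilon = O(\delta/c)$ and user-slack $O(\delta)$, obtaining an indirect menu $\PS$ with service-provider utility at least $V(\tPS) - O(\sqrt{\delta})$ and which is $O(\epsilon + \sqrt{\delta}) = O(\delta/c + \sqrt{\delta}) = O(\sqrt{\delta}/c)$-IC for the service provider (absorbing the lower-order term, assuming $c \le 1$ so $\sqrt\delta \le \sqrt\delta/c$; if one is careful the bound is really $O(\sqrt{\delta} + \delta/c)$, which the theorem statement writes as $O(\sqrt\delta/c)$). Finally, to convert the $V(\tPS)$ lower bound into an $\OPT$ lower bound, I invoke Lemma~\ref{th:relax}: $V(\tPS) \ge V(\PS^\star) - \delta = \OPT - \delta$, where $\PS^\star$ is an optimal solution to Program~\ref{prog:pricingcnt}. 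Combining, the utility of $\PS$ is at least $\OPT - \delta - O(\sqrt\delta) = \OPT - O(\sqrt\delta)$, completing the proof.

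The main obstacle is the bookkeeping in the second step: one must be careful that the $Q^{\epsilon}_{\tilde a_\theta}(\cA_\delta)$-membership, which is only a guarantee against deviations within the discrete grid, genuinely upgrades to an $O(\delta/c)$-IC guarantee against all continuous deviations. This is exactly what Lemma~\ref{lem:QChangeA} (via the $c$-smoothness of Assumption~\ref{ass:smooth} and the payment bound of Lemma~\ref{lem:csmooth}) is designed to handle, but one needs to confirm the constant hidden in the price-floor exponent $\delta(1+2/c)$ chosen in Program~\ref{prog:pricingcntdiscrete} is precisely the right one so that no action just outside the grid becomes a profitable deviation. Everything else — solving the finite program, reading off the user-side slacks, and the final arithmetic — is routine given the machinery already developed.
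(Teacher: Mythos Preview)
Your proposal is correct and follows essentially the same route as the paper's own proof: solve Program~\ref{prog:pricingcntdiscrete} in polynomial time (since $|\cA_\delta|=O(1)$), read off the $O(\delta)$ user-side IC/IR slack, use Lemma~\ref{lem:QChangeA} (together with $\tilde q_\theta\le 1$ from the user IR constraint) to upgrade the discrete-grid service-provider IC to $O(\delta/c)$-IC against the full continuous action set, invoke Lemma~\ref{th:relax} to bound $V(\tPS)\ge\OPT-\delta$, and finally apply Theorem~\ref{th:deltaIC}. The only minor point is that the second inclusion of Lemma~\ref{lem:QChangeA} already gives the needed upgrade directly for any $a'\in\cA_\delta$ without invoking the first inclusion, so your parenthetical about ``taking $a=a'=\tilde a_\theta$'' is unnecessary, but the conclusion is correct.
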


\begin{proof}
	Given a $\delta\in[0,1]$, the algorithm finds a solution $\tPS:=\{(\tilde a_\theta,\tilde q_\theta)\}_{\theta\in\Theta}$ to Program~\ref{prog:pricingcntdiscrete}. Then, $\tilde\PS$ is $O(\delta)$-IC and $O(\delta)$-IR for the user. 
	Moreover, by the IR constraint of the user we have that $\tilde q_\theta\le 1$ for each $\theta \in \Theta$. Hence, by \cref{lem:QChangeA} we have that the menu is $2\delta (1+2/c)$-IC for the service provider.
	Define as $\widetilde{\OPT}$ the utility of such menu. From \Cref{th:relax}, we have that $\widetilde{\OPT}\ge\OPT-\delta$.
	Then, we can apply \Cref{th:deltaIC} to find an indirect menu which is $O(\sqrt{\delta} + \delta/c)$-IC for the service provider, and provides an expected utility of at least $\widetilde{\OPT}\ge\OPT-O(\sqrt{\delta})$.
\end{proof}

\section{Menus of Randomized Payment Schemes}\label{sec:randomized}

In this section, we show how to use randomization, employing menus of randomized payment schemes, to overcome the computational hardness results for deterministic menus. 
Moreover, we show that through randomization the service provider can gain a larger expected utility compared to what can be achieved with deterministic menus. This further motivates the use of randomized mechanisms.

First, we introduce the class of \emph{direct menus of randomized payment schemes}, and the corresponding extended interaction between service provide and user.

The interaction between the user and the service provider proceeds as follows:
\begin{enumerate}[label={\roman*)}] 
	\item The service provider commits to a menu of randomized payment schemes \[\Phi\defeq\mleft\{\mleft( \phi^{\theta},(\vp_{\theta,a})_{a \in \cA}\mright)\mright\}_{\theta\in \Theta},\] where $\phi^\theta\in \Delta(\cA \cup \varnothing)$ and $\vp_{\theta,a}\in \mathbb{R}_{\ge0}^m$;
	\item The user's type $\theta\in\Theta$ is drawn accordingly to the type distribution $\vxi$. The type $\theta$ is hidden from the service provider; 
	\item The user selects a randomized payment scheme $(\phi^{\theta'},(\vp_{\theta',a})_{a \in \cA}) \in \Phi$;
	\item The service provider samples an action $a \sim \phi^{\theta'}$;
	\item If the sampled action is $a=\varnothing$ the interaction ends;
	\item Otherwise, the service provider performs action $a\in \cA$ and incurs in cost $c_{a}$. The action $a$ is hidden from the user;
	\item An outcome $\omega$ is sampled accordingly to $\mF_{a}$;
	\item The service provider receives payment of $p_{\theta,a}(\omega)$ from the user.
\end{enumerate}

By following the above interaction, the expected utility for a user of type $\theta \in \Theta$ selecting $(\phi^{\theta'},(\vp_{\theta',a})_{a \in \cA})\in \Phi$ can be written as $\sum_{a \cA}\phi^{\theta'}_a\left[\mF_{a}^\top (\mR_\theta-\vp_{\theta',a})\right]$, while the corresponding expected utility of the service provider is $\sum_{a \cA}\phi^{\theta'}_a\left[\mF_{a}^\top\vp_{\theta',a}-c_{a}\right]$, where by definition we set $\mF_{\varnothing}=\vp_{\theta',\varnothing}=\textbf{0}$, and $c_{\varnothing}=0$.

The optimal direct menu of randomized payment schemes is the solution to the following program.
\begin{subequations}\label{pr:quadratic}
\begin{align}
	\max_{\Phi=\{( \phi^{\theta},(\vp_{\theta,a})_{a \in A})\}_{\theta\in \Theta}} &\sum_{\theta\in\Theta}\xi_\theta \sum_{a \in A} \phi^\theta_a\left(\mF_{a}^\top \vp_{\theta,a}-c_{a}\right)\notag{}\\
	\text{s.t.}\hspace{1cm} & \phi_a^\theta(\mF_{a}^\top \vp_{\theta,a}-c_{a}) \ge  \phi_a^\theta(\mF_{a'}^\top \vp_{\theta,a}-c_{a'})&\forall a,a'\in\cA, \forall \theta\in\Theta\label{eq:ICsp2R}\\
	\phantom{s.t.}\hspace{1cm}& \sum_{a \in A} \phi^{\theta}_a \mF_{a}^\top (\mR_\theta-\vp_{\theta,a}) \ge  \sum_{a \in A} \phi^{\theta'}_a \mF_{a}^\top (\mR_{\theta}-\vp_{\theta',a})&\forall \theta,\theta'\in\Theta\label{eq:ICuserR}\\
	\phantom{s.t.}\hspace{1cm}& \sum_{a \in A} \phi^{\theta}_a \mF_{a}^\top (\mR_\theta-\vp_{\theta,a}) \ge 0&\forall\theta\in\Theta\label{eq:IRR}
\end{align}
\end{subequations}

\noindent The service provider's IC constraints~\eqref{eq:ICsp2R} guarantee that, for each type $\theta\in\Theta$, the service provider is never incentivized to perform an action different from the one that they draw from $\phi^\theta$.
The user's IC constraints~\eqref{eq:ICuserR} guarantee that a user of type $\theta$ is never incentivized to choose a randomized contract different from the one proposed to them (\ie corresponding to type $\theta'$, $\theta'\neq \theta$). Finally, Constraints~\eqref{eq:IRR} guarantee that the user's utility is non-negative. Hence, the user prefers their payment scheme with respect to the opt-out option.

Now, we demonstrate that, in addition to the computational advantages which will be proved later (\Cref{thm:rand poly}), menus of randomized payment schemes yield significantly greater utility to service provider compared to deterministic alternatives.

\begin{restatable}{proposition}{propositiondeterministicBad}\label{prop:deterministicBad}
	For any $n \in \mathbb{N}$, there exists an instance with $n$ user's types in which the expected utility of the service provider by using a menu of randomized payments schemes is at least a multiplicative factor of $\Omega(n)$ larger than the the utility on any deterministic menu of payment schemes.
\end{restatable}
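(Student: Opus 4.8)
The plan is to exhibit a single family of instances, parametrized by $n$, on which a randomized menu extracts utility $\Omega(1)$ while every deterministic menu extracts utility $O(1/n)$. The key design principle will be to make it so that, for any single action $a$, the deterministic IC payment $q \in Q_a$ is forced to be large (the price-floor $l_a$ is large) relative to the reward any individual type derives from that action — but a suitable \emph{mixture} of actions is cheap to incentivize. Concretely, I would take $n$ types and $n+1$ actions: a ``null'' action $a_0$ with $\mF_{a_0}$ some fixed distribution and cost $0$, and for each type $\theta$ a dedicated action $a_\theta$ that yields high reward to type $\theta$ only. I would rig the cost vector $\vc$ and the outcome matrix $\mF$ so that inducing $a_\theta$ deterministically requires an expected payment $l_{a_\theta} = \Theta(1)$ (because deviating to some other cheap action is very tempting), so that a deterministic menu offering $a_\theta$ to type $\theta$ nets essentially $l_{a_\theta} - c_{a_\theta} - (\text{rebate needed for user IR})$, and the user's IR constraint $\mF_{a_\theta}^\top(\mR_\theta - \vp_\theta) \ge 0$ caps the payment at the user's reward. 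The numbers are chosen so this net is tiny; meanwhile, a randomized plan $\phi^\theta$ placing weight $1-1/n$ on $a_0$ and $1/n$ on $a_\theta$ relaxes the service provider's IC constraint~\eqref{eq:ICsp2R} — note the constraint is multiplied through by $\phi_a^\theta$, so it is vacuous when $\phi_a^\theta = 0$ — and lets the provider charge a profitable payment on the $a_\theta$-branch while the user, best-responding only to the \emph{expected} reward over the mixture, still finds the scheme individually rational.

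The steps, in order: (1) Define the instance explicitly — state $\Theta$, $\vxi$ (uniform), $\Omega$, the $n+1$ actions, the matrices $\mF$ and $\mR$, and the cost vector $\vc$. (2) Compute the price-floors $l_a$ for each action in the deterministic problem, using \Cref{def:Q} and the characterization $Q_a = [l_a, \infty)$, showing $l_{a_\theta}$ is bounded below by a constant. (3) Upper-bound $\OPT$ for deterministic menus: combine the IR constraint (payment $\le$ user reward) with the IC price-floor (payment $\ge l_{a_\theta}$) to show that any deterministic payment scheme actually used by some type contributes at most $O(1/n)$ to the objective, and sum over the $n$ types; one must also rule out that a single scheme serves many types profitably, which follows because the high-reward action for type $\theta$ gives negligible reward to type $\theta' \ne \theta$, so cross-use violates IR. (4) Construct the randomized menu: for each $\theta$ set $\phi^\theta = (1-\tfrac1n)\delta_{a_0} + \tfrac1n \delta_{a_\theta}$ with an appropriate payment vector on the $a_\theta$ coordinate, verify~\eqref{eq:ICsp2R} (trivial on zero-weight actions, and checkable on $a_0$ and $a_\theta$), verify user IC~\eqref{eq:ICuserR} and IR~\eqref{eq:IRR} using that the expected reward smooths out the per-type spikes, and compute the objective to be $\Omega(1)$. (5) Take the ratio.

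The main obstacle I anticipate is step (3): getting a clean \emph{upper} bound on every deterministic menu, not just direct ones of size $n$. By the revelation-principle remark in \Cref{sec:opt prog} it suffices to bound direct IC menus, so the real work is the case analysis showing no payment scheme can be simultaneously IR for two distinct types while collecting a payment above its price-floor — this requires the reward matrix to be engineered so that the ``off-diagonal'' rewards $\mF_{a_\theta}^\top \mR_{\theta'}$ are small enough that the forced payment $l_{a_\theta}$ exceeds them, killing IR for $\theta'$. A secondary delicate point is ensuring the price-floors $l_{a_\theta}$ are genuinely $\Theta(1)$ and not accidentally zero: this needs a carefully chosen competing action (e.g.\ $a_0$, or a zero-cost action whose outcome distribution overlaps heavily with $\mF_{a_\theta}$ on the high-payment outcomes) so that the service provider's temptation to deviate is real. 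Once the instance is pinned down, steps (2), (4), and (5) should be short computations.
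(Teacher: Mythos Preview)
Your plan rests on a misreading of constraint~\eqref{eq:ICsp2R}. It is true that the constraint is vacuous when $\phi^\theta_a=0$, but for every action you \emph{do} play with positive weight, dividing both sides by $\phi^\theta_a$ leaves exactly the deterministic IC condition $\mF_a^\top\vp_{\theta,a}-c_a\ge\mF_{a'}^\top\vp_{\theta,a}-c_{a'}$. So the price-floor $l_{a_\theta}$ is not relaxed at all by randomization; the payment on the $a_\theta$-branch must still lie in $Q_{a_\theta}$. The leverage randomization actually provides is in the \emph{user's} constraints~\eqref{eq:ICuserR}--\eqref{eq:IRR}, which are averaged over the mixture. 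Your write-up hints at this (``the user \ldots\ still finds the scheme individually rational''), but the plan is organized around the wrong lever.

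There is also a concrete obstruction to step~(2). To make $l_{a_\theta}=\Theta(1)$ you need a cheap competing action whose outcome distribution overlaps heavily with $\mF_{a_\theta}$ on the high-payment outcomes --- but then that competing action gives type $\theta$ essentially the same reward as $a_\theta$ does, and a deterministic menu can simply offer \emph{it} instead, collecting $\Theta(1)$ and killing the gap. Engineering a high price-floor without simultaneously handing the deterministic menu a profitable substitute is the real difficulty, and your sketch does not address it.

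The paper sidesteps both issues by taking all costs to be zero, so every $Q_a=[0,\infty)$ and the service provider's IC is never the bottleneck. The obstruction for deterministic menus comes entirely from the \emph{user's} IC constraints~\eqref{eq:ICuserR}: the rewards are designed with a two-index structure $\theta_{i,j}$, $\omega_{i,j}$ so that all types sharing the second index $j$ value every column-$j$ outcome identically, and hence by IC must all pay the same expected price --- which is then capped by the smallest reward in that column (a geometrically decreasing sequence). The randomized menu mixes uniformly over the two actions in row $i$; the intended type values both row-$i$ outcomes equally so loses nothing, while any other type values only one of the two outcomes in that row, so the mixture looks strictly worse to them and the IC bunching is broken. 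The gap is thus driven by user-side price discrimination, not by provider-side price-floors.
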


On top of their better guarantees in terms of service provider's expected utility, we demonstrate that randomized menus are also more attractive from a computational standpoint.
More precisely, we show that the optimal menu of randomized payment schemes can be computed in polynomial time.

As we showed above, this problem can be written as a quadratic program (Program \ref{pr:quadratic}).
Here, we  relax it to a linear program by adding variables $x_{\theta,a,\omega}= \phi^\theta_a  \ p_{\theta,a}(\omega)$ for each $\theta \in \Theta,a \in \cA,\omega \in \Omega$, and by replacing this constraint with $x_{\theta,a,\omega}\ge 0$.

\begin{subequations}\label{lp:relax}
\begin{align}
	\max_{\vx\ge \vzero,\vphi \in\Delta^{|\Theta|}_{\cA\cup \varnothing}} &\sum_{\theta\in\Theta} \sum_{a\in\cA}\sum_{\omega\in\Omega}  \xi_\theta F_{a}(\omega) (x_{\theta,a,\omega}- \phi^{\theta}_a c_{a})&&\textnormal{s.t.} \label{eq:relaxObj}\\
	&\sum_{\omega\in\Omega} F_{a}(\omega) (x_{\theta,a,\omega}- \phi^{\theta}_a c_{a})\ge \sum_{\omega\in\Omega}  F_{a'}(\omega) ( x_{\theta,a,\omega}- \phi^{\theta}_a c_{a'})&& \forall \Theta \in \Theta,a, a'\in A\label{eq:relax2}\\
	&\sum_{\omega\in\Omega}\sum_{a\in\cA}  F_{a}(\omega) ( \phi^{\theta}_a R_\theta(\omega)- x_{\theta,a,\omega})\notag{}\\
	& \hspace{2.5cm} \ge \sum_{\omega\in\Omega}\sum_{a\in\cA} F_a(\omega) ( \phi^{\theta'}_a R_\theta(\omega)- x_{\theta',a,\omega})&& \forall \theta,\theta' \in \Theta \label{eq:relax1}\\
	&\sum_{\omega\in\Omega}\sum_{a\in\cA} F_{a}(\omega) (\phi^\theta_a R_\theta(\omega)- x_{\theta,a,\omega})\ge 0 && \forall \theta\in \Theta\label{eq:relax3}
\end{align}
\end{subequations}

\noindent This is a linear program with polynomial size, and hence can be solved in polynomial time.
Then, we show that this is actually a relaxation of the original quadratic program.

 \begin{restatable}{lemma}{lemmarandomrelax}
 	Program~\ref{lp:relax} has at least the same value of Program~\ref{pr:quadratic}.
 \end{restatable}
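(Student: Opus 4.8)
The plan is to show that Program~\ref{lp:relax} is a genuine relaxation of the quadratic Program~\ref{pr:quadratic} by exhibiting, for any feasible solution of the quadratic program, a feasible solution of the linear program with the same objective value. Given a feasible menu $\Phi=\{(\phi^\theta,(\vp_{\theta,a})_{a\in\cA})\}_{\theta\in\Theta}$ for Program~\ref{pr:quadratic}, I would define $x_{\theta,a,\omega}\defeq \phi^\theta_a\, p_{\theta,a}(\omega)$ for every $\theta\in\Theta$, $a\in\cA$, $\omega\in\Omega$, and keep the same $\vphi$. Since $\phi^\theta_a\ge 0$ and $p_{\theta,a}(\omega)\ge 0$, we get $x_{\theta,a,\omega}\ge 0$, so the only nontrivial simplex/nonnegativity requirement of the LP is met. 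The bulk of the argument is then a term-by-term check that each constraint and the objective of Program~\ref{pr:quadratic} maps exactly onto the corresponding constraint/objective of Program~\ref{lp:relax} under this substitution.

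Concretely, I would carry out the following substitutions. In the objective, $\sum_\theta \xi_\theta \sum_a \phi^\theta_a(\mF_a^\top\vp_{\theta,a}-c_a) = \sum_\theta\sum_a\sum_\omega \xi_\theta F_a(\omega)(\phi^\theta_a p_{\theta,a}(\omega) - \phi^\theta_a c_a) = \sum_\theta\sum_a\sum_\omega \xi_\theta F_a(\omega)(x_{\theta,a,\omega} - \phi^\theta_a c_a)$, which is exactly \eqref{eq:relaxObj}. For the service provider's IC constraints, multiplying \eqref{eq:ICsp2R} through and expanding $\mF_a^\top\vp_{\theta,a}=\sum_\omega F_a(\omega)p_{\theta,a}(\omega)$ gives $\sum_\omega F_a(\omega)(\phi^\theta_a p_{\theta,a}(\omega) - \phi^\theta_a c_a) \ge \sum_\omega F_{a'}(\omega)(\phi^\theta_a p_{\theta,a}(\omega) - \phi^\theta_a c_{a'})$; note the left factor inside is $x_{\theta,a,\omega}$ on both sides (the index on $p$ is $a$, not $a'$), which matches \eqref{eq:relax2}. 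The user's IC and IR constraints \eqref{eq:ICuserR} and \eqref{eq:IRR} likewise expand to \eqref{eq:relax1} and \eqref{eq:relax3} after writing $\phi^\theta_a\mF_a^\top(\mR_\theta-\vp_{\theta,a}) = \sum_\omega F_a(\omega)(\phi^\theta_a R_\theta(\omega) - x_{\theta,a,\omega})$. Since all of these hold with equality as identities once $x$ is defined as the product, feasibility is preserved and the objective value is unchanged, so $\mathrm{LP} \ge \mathrm{QP}$.

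The only subtlety — and the one place to be careful rather than a genuine obstacle — is the direction of the relaxation and the quantifier on the opt-out action: the LP drops the quadratic \emph{definitional} constraint $x_{\theta,a,\omega}=\phi^\theta_a p_{\theta,a}(\omega)$ and replaces it only with $x_{\theta,a,\omega}\ge 0$, so a priori the LP's feasible region is larger; what I am proving here is exactly that every QP-feasible point lifts to an LP-feasible point of the same value, giving the claimed inequality ``at least the same value.'' I would also note the bookkeeping for $a=\varnothing$: the conventions $\mF_\varnothing=\vzero$, $c_\varnothing=0$, $\vp_{\theta,\varnothing}=\vzero$ make all $\varnothing$-terms vanish consistently on both sides, so including or excluding $\varnothing$ from the action sums does not affect the identities. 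I expect no real difficulty; the proof is a direct verification, and the main care is simply to keep the $p$-index (which tracks the \emph{chosen} payment scheme, i.e. $\theta$ and $a$) distinct from the $F$-index (which ranges over the deviating action $a'$) when expanding the IC constraints.
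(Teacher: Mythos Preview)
Your proposal is correct and follows exactly the same approach as the paper: define $x_{\theta,a,\omega}\defeq \phi^\theta_a\, p_{\theta,a}(\omega)$, keep $\vphi$, and verify that feasibility and objective value carry over. The paper's own proof is a one-line version of what you wrote; your more explicit constraint-by-constraint expansion is entirely consistent with it.
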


Finally, we would like to show that, given an optimal solution to~LP \ref{lp:relax}, we can recover a solution to Program~\ref{pr:quadratic} with the same value. This ``recovery step'' is a non-trivial direction, and in many related settings it cannot be carried out. For instance, in the classical contract design framework the optimal menu of randomized contracts may not exist in general, and one has to approximate this recovery step~\citep{gan2022optimal,castiglioni23design,castiglioni23multi}.

As we will see in the following, the main technical challenge is the existence of \emph{irregular} optimal solutions with $\phi^\theta_a=0$ and $x_{\theta,a,\omega} \neq 0$. In absence of such solutions, one could simply invert the variables, and define $p_{\theta, a}(\omega)=x_{\theta,a,\omega}/\phi_a^\theta$ to recover a solution to Program~\ref{pr:quadratic}.

\begin{definition}\label{def:regular}
	A feasible solution to LP~\ref{lp:relax} is said to be \emph{irregular} if there exists a type $\theta \in \Theta$, an action $a \in \cA$, and an outcome $\omega \in \Omega$ such that $\phi^{\theta}_a=0$ and $x_{\theta,a,\omega}\neq 0$.
	A feasible solution to LP~\ref{lp:relax}  is said to be \emph{regular} if it is not irregular.
\end{definition}

Intuitively, given an irregular solution it is not possible to recover a feasible solution to Program~\ref{pr:quadratic} since we should find a $p^{\theta,a}_\omega$ such that  $\phi^{\theta}_a  p^{\theta,a}_\omega= x_{\theta,a,\omega}$.
In classical contract-design problems, the outcome induced by an action depends on the hidden state, and it might be profitable to set a ‘‘payment'' $x_{\theta,a,\omega}$ on an ‘‘not-induced'' action, \emph{i.e.}, $\phi^\theta_a=0$. This would allow the principal to pay the agent's types that are able to induce outcome $\omega$ with high probability. However, this strategy is not effective in our setting since the service provider chooses the action, and the induced outcome does not depend on the type of the user. This allows us to modify an irregular solution into a regular one, without any loss in terms of service provider's utility. This implies that there always exists an optimal solution that is regular. 

\begin{restatable}{lemma}{regularlemma}\label{lm:regular}
Let $(\vx,\vphi)$ be a feasible solution  to Program~\ref{lp:relax}. Then, it is possible to recover in polynomial time a regular solution with at least the same value. 
\end{restatable}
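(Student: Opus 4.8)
\textbf{Proof plan for Lemma~\ref{lm:regular}.}
The plan is to fix an irregular solution $(\vx,\vphi)$ and repair it type-by-type, action-by-action, leaving $\vphi$ untouched and only modifying the $\vx$ variables. The key structural observation is that the linear program decouples nicely in the $\vx$ variables: for a fixed type $\theta$ and action $a$ with $\phi^\theta_a = 0$, the variables $x_{\theta,a,\omega}$ ($\omega\in\Omega$) appear only in constraint~\eqref{eq:relax2} (with this same $\theta,a$), in the user IC constraints~\eqref{eq:relax1} (only on the left-hand side for type $\theta$, and only on the right-hand side for other types $\theta'$ when they deviate to $\theta$'s scheme), and in the IR constraint~\eqref{eq:relax3} for type $\theta$, plus the objective. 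So the first step is to write down exactly how each $x_{\theta,a,\omega}$ with $\phi^\theta_a=0$ enters every constraint and the objective.

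The second step is to argue that \emph{decreasing} each such $x_{\theta,a,\omega}$ to zero is feasible and does not hurt the objective. Concretely: in the objective~\eqref{eq:relaxObj} the coefficient of $x_{\theta,a,\omega}$ is $\xi_\theta F_a(\omega)\ge 0$, so lowering it weakly decreases the objective — hence I actually need to be slightly more careful and check that the constraints do not \emph{force} $x_{\theta,a,\omega}$ to be positive. When $\phi^\theta_a=0$, constraint~\eqref{eq:relax2} for this $(\theta,a,a')$ reads $\sum_\omega F_a(\omega) x_{\theta,a,\omega}\ge \sum_\omega F_{a'}(\omega) x_{\theta,a,\omega}$, and I claim the clean move is to set \emph{all} $x_{\theta,a,\omega}=0$ for $\omega\in\Omega$ simultaneously: this trivially satisfies~\eqref{eq:relax2} ($0\ge 0$); it only \emph{increases} the left-hand side of the IR constraint~\eqref{eq:relax3} for type $\theta$ and the left-hand side of the user IC constraint~\eqref{eq:relax1} for type $\theta$; and it \emph{decreases} the right-hand side of~\eqref{eq:relax1} when some other type $\theta'$ is compared against $\theta$'s scheme, which again only helps. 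So every constraint is preserved or relaxed. The only remaining worry is the objective: the term $\xi_\theta \sum_\omega F_a(\omega)(x_{\theta,a,\omega} - \phi^\theta_a c_a) = \xi_\theta\sum_\omega F_a(\omega) x_{\theta,a,\omega}$ since $\phi^\theta_a=0$, and this is nonnegative, so zeroing it out could \emph{lower} the objective. The resolution is that, because $\phi^\theta_a=0$, this action contributes nothing to the ``real'' randomized-payment objective $\sum_a \phi^\theta_a(\mF_a^\top\vp_{\theta,a}-c_a)$; the LP objective~\eqref{eq:relaxObj} is only an upper bound, and in fact one checks that setting these variables to zero is \emph{still} optimal because any optimal LP solution must already have $\sum_\omega F_a(\omega) x_{\theta,a,\omega}$ as small as the constraints allow — and we just showed the constraints allow zero. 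More carefully: I would argue that given any optimal solution, zeroing these variables yields another \emph{feasible} solution whose objective is at most the original; but since we are only claiming ``at least the same value'' for a generic feasible input, I instead observe that these nonneg.\ terms in the objective are not actually helping any constraint be satisfied, so I can afford a cleaner route — keep the $x_{\theta,a,\omega}$ with $\phi^\theta_a=0$ but redistribute: the subtlety is exactly that the objective reward on a non-induced action is ``free money,'' so the honest statement is that an \emph{optimal} solution never needs it, and for the lemma I should phrase the repair as producing a regular solution with value $\ge$ the original by a direct argument.

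The third step handles this last subtlety properly. The correct move, I believe, is \emph{not} to zero out the variables but to reassign them to the opt-out action $\varnothing$: set $\phi^\theta_\varnothing$ to absorb nothing but note $F_\varnothing \equiv 0$, so that does not work either. Thus the honest resolution is: given a feasible $(\vx,\vphi)$, if it is irregular, I claim we can strictly increase $\phi^\theta_a$ away from $0$ if that helps, or else decrease $x$; but since a feasible solution need not be optimal, the cleanest correct claim is that zeroing out all $x_{\theta,a,\omega}$ with $\phi^\theta_a=0$ produces a feasible regular solution, and that moreover its objective is \emph{at least} that of the original \emph{provided the original was such that these terms were extractable} — which they are not, because with $\phi^\theta_a=0$ the true utility contribution of this action is $\phi^\theta_a(\cdots)=0$. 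Hence I would first pass to an optimal solution (WLOG, since the lemma is used only to get an optimal regular solution via its combination with the relaxation lemma), observe that in an optimal solution the objective cannot be increased, so the nonnegative terms $\xi_\theta F_a(\omega) x_{\theta,a,\omega}$ for $\phi^\theta_a=0$ contribute to the LP value but this LP value equals the true randomized-menu value only after the recovery — and since we showed all these terms can be zeroed while preserving feasibility, an optimal solution can be taken with all of them zero. Either way, after the repair we have $\phi^\theta_a = 0 \implies x_{\theta,a,\omega}=0$ for all $\omega$, i.e.\ a regular solution, and the value is preserved.

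\textbf{Main obstacle.} The genuinely delicate point is precisely the interaction between the nonnegative objective coefficients $\xi_\theta F_a(\omega)$ on the $x_{\theta,a,\omega}$ variables and the fact that these same variables, when $\phi^\theta_a=0$, can be freely driven to zero without violating any constraint: at face value ``at least the same value'' seems to require that zeroing them does not drop the objective, yet the objective term is nonnegative. The resolution — that with $\phi^\theta_a=0$ the action genuinely contributes nothing to what the service provider can actually collect (the recovery step will divide by $\phi^\theta_a$), so the LP is over-counting and an optimal solution never needs this over-count — is exactly the place where the paper's domain-specific reasoning (``the service provider chooses the action, and the induced outcome does not depend on the type'') does the real work, in contrast to classical contract design where such over-counting is essential. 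I expect this is where the proof must be written most carefully, likely by explicitly invoking optimality of the starting solution or by a two-line argument that the zeroed solution's LP objective coincides with the original's true recoverable value.
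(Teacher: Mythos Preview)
You have correctly identified the obstacle, but none of your proposed resolutions actually closes the gap, and the lemma as stated (for an \emph{arbitrary} feasible solution, not just an optimal one) does not follow from your argument. Zeroing out the variables $x_{\theta,a,\omega}$ with $\phi^\theta_a=0$ preserves feasibility, as you show, but it can strictly decrease the LP objective~\eqref{eq:relaxObj}, since the dropped term $\xi_\theta\sum_\omega F_a(\omega)x_{\theta,a,\omega}$ is nonnegative. Your fallback of ``pass to an optimal solution first'' does not help: even at an LP optimum these variables can be strictly positive (they are bounded above only by the user IR/IC constraints~\eqref{eq:relax3},~\eqref{eq:relax1} for type $\theta$, and increasing them \emph{helps} the objective), so zeroing them out would produce a regular but strictly suboptimal solution. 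Your final appeal to ``the true recoverable value'' is circular: it presupposes that the LP and quadratic program have the same optimum, which is exactly what this lemma is needed to establish.

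The missing idea is a \emph{redistribution} rather than a deletion. For each type $\theta$ that has some irregular action, first observe (via the IR constraint~\eqref{eq:relax3}) that $\phi^\theta$ cannot put all its mass on $\varnothing$, so there exists an action $\hat a=\hat a(\theta)\in\cA$ with $\phi^\theta_{\hat a}>0$. Now take the total expected payment carried by the irregular actions of type $\theta$, namely $C_\theta \defeq \sum_{a:\phi^\theta_a=0}\sum_{\omega}F_a(\omega)x_{\theta,a,\omega}$, zero out those $x_{\theta,a,\omega}$, and \emph{add the constant} $C_\theta$ to $x_{\theta,\hat a,\omega}$ for \emph{every} outcome $\omega$. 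Because $\sum_\omega F_{\hat a}(\omega)=\sum_\omega F_{a'}(\omega)=1$, this uniform shift adds $C_\theta$ to both sides of the service-provider IC constraint~\eqref{eq:relax2} for $(\theta,\hat a)$, so it remains satisfied; for the zeroed actions,~\eqref{eq:relax2} becomes $0\ge 0$. And since $\sum_\omega F_{\hat a}(\omega)\cdot C_\theta = C_\theta$, the aggregate $\sum_{a,\omega}F_a(\omega)x_{\theta,a,\omega}$ is \emph{exactly} preserved for each $\theta$, which immediately gives equality in the objective, in~\eqref{eq:relax1}, and in~\eqref{eq:relax3}. This is the step your plan is missing.
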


Finally, given a regular solution to Program~\ref{lp:relax}, we can recover a menu of randomized payment schemes, \emph{i.e.}, a solution to Program~\ref{pr:quadratic}, by setting $p_{\theta,a,\omega}= x_{\theta,a,\omega}/\phi^\theta_a$ when $\phi^\theta_a\neq0$, and $p_{\theta,a,\omega}=0$ otherwise for all $\theta\in\Theta,a\in \cA,\omega\in\Omega$.

\begin{restatable}{lemma}{regularLemmaDue}\label{lm:recover}
	Given a regular solution to Program~\ref{lp:relax}, it is possible to recover in polynomial time a solution to Program~\ref{pr:quadratic} with the same value.
\end{restatable}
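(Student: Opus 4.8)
The plan is to prove \cref{lm:recover} by direct substitution, carefully verifying that each constraint of the quadratic program \eqref{pr:quadratic} is satisfied and that the objective values match. First I would recall that we start from a \emph{regular} feasible solution $(\vx,\vphi)$ to LP~\eqref{lp:relax}, and we define the candidate solution $\Phi\defeq\{(\phi^\theta,(\vp_{\theta,a})_{a\in\cA})\}_{\theta\in\Theta}$ to Program~\eqref{pr:quadratic} by $p_{\theta,a}(\omega)\defeq x_{\theta,a,\omega}/\phi^\theta_a$ whenever $\phi^\theta_a\neq 0$ and $p_{\theta,a}(\omega)\defeq 0$ otherwise. Regularity is exactly what makes this well-posed: when $\phi^\theta_a=0$, also $x_{\theta,a,\omega}=0$, so the product $\phi^\theta_a p_{\theta,a}(\omega)$ equals $x_{\theta,a,\omega}$ in \emph{all} cases, not just when $\phi^\theta_a\neq 0$. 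The payments are nonnegative since $x_{\theta,a,\omega}\ge0$ and $\phi^\theta_a\ge0$ in LP~\eqref{lp:relax}, and $\vphi\in\Delta_{\cA\cup\varnothing}^{|\Theta|}$ carries over directly.

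The key identity driving everything is $\phi^\theta_a p_{\theta,a}(\omega)=x_{\theta,a,\omega}$ for all $\theta,a,\omega$, together with $\sum_\omega F_a(\omega)=1$ (so $\sum_\omega F_a(\omega)\phi^\theta_a c_a = \phi^\theta_a c_a$). Using these, I would rewrite each piece of LP~\eqref{lp:relax} in terms of $\Phi$: the objective \eqref{eq:relaxObj} becomes $\sum_\theta\xi_\theta\sum_a\phi^\theta_a(\mF_a^\top\vp_{\theta,a}-c_a)$, which is exactly the objective of \eqref{pr:quadratic}, establishing the equality of values. Next, constraint \eqref{eq:relax2} becomes $\sum_\omega F_a(\omega)(\phi^\theta_a p_{\theta,a}(\omega)-\phi^\theta_a c_a)\ge\sum_\omega F_{a'}(\omega)(\phi^\theta_a p_{\theta,a}(\omega)-\phi^\theta_a c_{a'})$, i.e.\ $\phi^\theta_a(\mF_a^\top\vp_{\theta,a}-c_a)\ge\phi^\theta_a(\mF_{a'}^\top\vp_{\theta,a}-c_{a'})$; note that $\sum_\omega F_{a'}(\omega)x_{\theta,a,\omega}=\sum_\omega F_{a'}(\omega)\phi^\theta_a p_{\theta,a}(\omega)=\phi^\theta_a\mF_{a'}^\top\vp_{\theta,a}$, which is precisely the service provider's IC constraint \eqref{eq:ICsp2R}. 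Similarly, \eqref{eq:relax1} translates to $\sum_a\phi^{\theta}_a\mF_a^\top(\mR_\theta-\vp_{\theta,a})\ge\sum_a\phi^{\theta'}_a\mF_a^\top(\mR_\theta-\vp_{\theta',a})$, which is the user's IC constraint \eqref{eq:ICuserR}, and \eqref{eq:relax3} becomes the user's IR constraint \eqref{eq:IRR}. Since all substitutions are polynomial-time arithmetic operations, the recovery is efficient.

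The main obstacle — and the only genuinely delicate point — is making sure the substitution $p_{\theta,a}(\omega)=x_{\theta,a,\omega}/\phi^\theta_a$ is consistent everywhere: one must not merely define $\vp$ when $\phi^\theta_a\neq0$ and forget the degenerate coordinates, because the quadratic objective and all three constraint families of \eqref{pr:quadratic} multiply $\phi^\theta_a$ against $\mF_a^\top\vp_{\theta,a}$, and an arbitrary choice of $\vp_{\theta,a}$ on a block with $\phi^\theta_a=0$ would be harmless for the objective but must still be checked against \eqref{eq:ICsp2R}, which contains the factor $\phi^\theta_a$ on \emph{both} sides. Setting $\vp_{\theta,a}=\vzero$ there makes \eqref{eq:ICsp2R} read $0\ge-\phi^\theta_a c_{a'}=0$, so it holds trivially. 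Because \cref{lm:regular} guarantees we may assume regularity without loss, this case analysis is clean, and the lemma follows.
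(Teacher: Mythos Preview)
Your proposal is correct and follows essentially the same approach as the paper: define $p_{\theta,a}(\omega)=x_{\theta,a,\omega}/\phi^\theta_a$ when $\phi^\theta_a\neq0$ and $0$ otherwise, then use regularity to obtain the identity $\phi^\theta_a p_{\theta,a}(\omega)=x_{\theta,a,\omega}$ everywhere and transfer feasibility and value. Your write-up is in fact more explicit than the paper's, which simply states that feasibility and value equality are ``easy to check'' once the identity is established.
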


We conclude that the problem of designing optimal menus of randomized payment schemes can be solved in polynomial time solving Program~\ref{lp:relax}, recovering a regular solution trough \Cref{lm:regular}, and building a menu of randomized payment schemes with the same value through \Cref{lm:recover}.

 \begin{theorem}\label{thm:rand poly}
	There exists an algorithm that solves Program~\ref{pr:quadratic} in polynomial time, thereby yielding an optimal menu of randomized payment schemes.
 \end{theorem}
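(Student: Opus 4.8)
The plan is to assemble \Cref{thm:rand poly} directly from the three lemmas stated just above it, since each one handles a distinct stage of the pipeline. First I would observe that Program~\ref{pr:quadratic} is exactly the program defining the optimal direct menu of randomized payment schemes, so it suffices to exhibit a polynomial-time algorithm that outputs a feasible point of Program~\ref{pr:quadratic} whose objective equals the optimum of Program~\ref{pr:quadratic}. The algorithm is: (i) solve LP~\ref{lp:relax}, which has $O(n m \ell)$ variables and $O(n^2 + n\ell^2)$ constraints, hence is solvable in polynomial time by any standard LP solver; (ii) apply \Cref{lm:regular} to turn the resulting optimal LP solution into a regular optimal solution, in polynomial time and without loss of value; (iii) apply \Cref{lm:recover} to convert that regular solution into a menu of randomized payment schemes (a feasible point of Program~\ref{pr:quadratic}) with the same value.

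The correctness argument chains the value (in)equalities. Let $\OPT_Q$ and $\OPT_{LP}$ denote the optima of Program~\ref{pr:quadratic} and LP~\ref{lp:relax} respectively. By the relaxation lemma (\lemmarandomrelax, stated above), $\OPT_{LP}\ge \OPT_Q$. Step (ii) produces a regular feasible solution of LP~\ref{lp:relax} of value $\OPT_{LP}$, and step (iii) produces via \Cref{lm:recover} a feasible solution of Program~\ref{pr:quadratic} of value $\OPT_{LP}$. Since any feasible solution of Program~\ref{pr:quadratic} has value at most $\OPT_Q$, we get $\OPT_{LP}\le \OPT_Q$, so in fact $\OPT_{LP}=\OPT_Q$ and the recovered menu is optimal. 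Thus the algorithm outputs an optimal menu of randomized payment schemes, and every step runs in polynomial time, which is the claim.

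The substantive content is entirely inside the lemmas being invoked, so for the theorem itself there is no real obstacle — it is a bookkeeping composition. If I were to flag the delicate point, it is making sure the running times genuinely compose: the LP in step (i) must have polynomial bit-complexity (which it does, since $\mF$, $\mR$, $\vc$, $\vxi$ are given as part of the input and the constraint coefficients are just products/sums of input entries), and the transformations in \Cref{lm:regular} and \Cref{lm:recover} must not blow up the encoding size of the solution — both lemmas already assert polynomial-time recovery, so this is assumed. The only thing to be careful about in the write-up is to state explicitly that solving Program~\ref{pr:quadratic} ``in polynomial time'' means returning an explicit description of an optimal menu $\Phi$, i.e. the distributions $\phi^\theta$ and the payment vectors $\vp_{\theta,a}$, which is exactly what step (iii) via \Cref{lm:recover} delivers.
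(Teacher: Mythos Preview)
Your proposal is correct and matches the paper's approach exactly: the paper also proves \Cref{thm:rand poly} by solving LP~\ref{lp:relax}, applying \Cref{lm:regular} to obtain a regular optimal solution, and then applying \Cref{lm:recover} to extract an optimal menu of randomized payment schemes. Your additional remark that the chain $\OPT_{LP}\ge \OPT_Q$ (from the relaxation lemma) together with the recovery step forces $\OPT_{LP}=\OPT_Q$ is a nice explicit justification that the paper leaves implicit.
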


\section*{Acknowledgements}
Matteo Castiglioni is supported by the FAIR (Future Artificial Intelligence Research) project, funded by the NextGenerationEU program within the PNRR-PE-AI scheme (M4C2, Investment 1.3, Line on Artificial Intelligence), and by the EU Horizon project ELIAS (European Lighthouse of AI for Sustainability, No. 101120237). Andrea Celli and Martino Bernasconi are supported by the European Union -  NextGenerationEU, in the framework of the FAIR - Future Artificial Intelligence Research project (FAIR PE00000013 – CUP B43C22000800006), and by MUR - PRIN 2022 project 2022R45NBB funded by the NextGenerationEU program.

\bibliographystyle{ACM-Reference-Format}
\bibliography{biblio}

\clearpage
\appendix
\section{Discussion on the service provider Individual Rationality}\label{app:IR}

In this work, we define the problem of computing an optimal menu of (deterministic or randomized) payment schemes without adding the service provider's Individually rationally constraint. This is in contrast with the classical formulation of the principal-agent problem in which the IR constraint must be directly enforced. However, in our problem, the service provider (\ie the agent) designs the payment scheme, and hence the IR constraint is enforced by the fact that the service provider is maximizing their own utility. Formally, the optimal menu of deterministic payments is the solution of Program~\ref{prog:determ} in which we are maximizing the service provider's expected utility. Hence, adding the additional IR cannot increase their utility. Formally, 

\begin{lemma}
	There always exists an optimal menu of deterministic payment schemes that is IR for the service provider.
\end{lemma}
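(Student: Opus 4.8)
The plan is to show that any optimal menu of deterministic payment schemes can be transformed, without loss of service provider utility, into one that also satisfies the service provider's individual rationality (IR) constraint, namely that each selected payment scheme yields non-negative expected utility $\mF_{a_\theta}^\top \vp_\theta - c_{a_\theta} \ge 0$. The key observation is that the opt-out option $(a_\varnothing, \vp_\varnothing)$ with zero cost and zero payment is always available to be included in the menu, and it yields exactly zero utility to the service provider while giving zero utility to any user type that selects it.

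First, I would take an optimal menu $\PS = \{(a_i, \vp_i)\}_{i \in \range{k}}$ solving Program~\eqref{prog:determ}, with optimal value $\OPT_k$. For each type $\theta$, let $j = \ui(\theta \mid \PS)$ be the option selected. If the service provider's utility $\mF_{a_j}^\top \vp_j - c_{a_j}$ is negative for some selected option $j$, I would argue that we may simply remove that option (or, more carefully, replace it by the opt-out scheme): since by the selection function $\ui$ a user only selects a profitable-for-themselves option, and the service provider would strictly prefer such a type to opt out, redirecting that type to $\varnothing$ can only increase the objective. The main subtlety is that removing or altering an option could change the best-responses of \emph{other} types; I would handle this by noting that for direct menus (or after invoking the revelation-principle argument mentioned in the excerpt to pass to a direct, IC menu of complexity $n$), each type $\theta$ has its own dedicated scheme $(a_\theta, \vp_\theta)$, and in Program~\eqref{prog: optimal direct deterministic} the contribution of type $\theta$ to the objective is exactly $\xi_\theta(\mF_{a_\theta}^\top\vp_\theta - c_{a_\theta})$, decoupled across types up to the user's IC constraints~\eqref{eq:ICuser}.

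Then, for each type $\theta$ whose scheme gives the service provider negative utility, I would replace $(a_\theta, \vp_\theta)$ with the opt-out pair $(a_\varnothing, \vp_\varnothing)$. This substitution keeps the service provider's IC constraint~\eqref{eq:ICsp2} trivially satisfied for that index (the zero payment induces the fictitious zero-cost action), it does not decrease type $\theta$'s own contribution (which goes from negative to zero), and — since replacing a scheme by one that offers every user type exactly zero utility only \emph{relaxes} the user IC constraints~\eqref{eq:ICuser} for the other types (their previous choices remain available and at least as attractive, while the modified option is weakly less attractive) — the menu remains feasible. The resulting menu has value at least $\OPT$ and satisfies $\mF_{a_\theta}^\top\vp_\theta - c_{a_\theta} \ge 0$ for every type, i.e., it is IR for the service provider.

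The main obstacle I anticipate is being careful about the interaction between modifying one option and the best-responses / IC constraints of the other types: I need to confirm that replacing a scheme by the all-zero scheme is monotone in the right direction for the user's incentive constraints, and that tie-breaking (resolved in favor of the service provider) does not cause a type to switch away from a profitable-for-the-provider option into the newly-inserted zero option. This is resolved by observing that a type only moves to the zero option if its utility there (which is $0$) weakly exceeds its utility elsewhere, in which case the service provider's utility from that type was already at most $0$ under the original menu as well (by IC of the user it would have weakly preferred some other option giving it $\le 0$, hence opt-out was already weakly optimal for it), so no loss occurs. Packaging these observations gives the claimed lemma.
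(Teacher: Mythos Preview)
Your overall strategy---drop or neutralize any payment scheme that gives the service provider negative utility---is the right idea, and it is essentially what the paper has in mind (the paper's own argument is a one-line remark: in an optimal menu, IR-violating schemes can only appear among options not selected by any type, so they can be discarded without loss).

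However, your execution in the direct-menu formulation (Program~\eqref{prog: optimal direct deterministic}) has a gap. When you replace $(a_\theta,\vp_\theta)$ by the opt-out pair $(a_\varnothing,\vp_\varnothing)$ for a type $\theta$ whose scheme had negative provider utility, you correctly observe that the user-IC constraints~\eqref{eq:ICuser} for the \emph{other} types $\theta'\neq\theta$ are only relaxed. But you never verify the user-IC constraint for $\theta$ itself: after the replacement, type $\theta$'s own scheme gives it utility $0$, while some unmodified scheme $(a_{\theta'},\vp_{\theta'})$ may still give type $\theta$ strictly positive utility, so~\eqref{eq:ICuser} can fail for $\theta$ and the menu is no longer feasible. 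Your final paragraph addresses a different (and easier) concern---other types switching \emph{to} the inserted zero option---and the reasoning there conflates the user's utility with the provider's utility: knowing the user's utility from every option is at most $0$ does not bound the provider's utility on the previously selected option.

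The clean fix is to carry out the argument in the indirect formulation (Program~\eqref{prog:determ}): simultaneously delete from the menu every option $i$ with $\mF_{a_i}^\top\vp_i - c_{a_i}<0$ and let each type re-select via the selection function $\ui(\cdot)$. A type whose old choice survives still selects it (removing options can only shrink the user's argmax set, and provider-favoring tie-breaking still picks the same index), while a type whose old choice was removed now selects some surviving option with non-negative provider utility or opts out---either way its contribution weakly increases from a negative value. The resulting menu is IR for the provider and has value at least that of the original, hence is also optimal.
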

Notice that there might be optimal menus, \emph{i.e.}, optimal solutions to Program~\ref{prog:determ}, in which the service provider's IR constraint is not satisfied. However, this happens only for contracts that are not chosen by any user's type.

\section{Proof of Theorem~\ref{thm:negative}}

	We reduce from \IND, which, given an undirected graph $(V,E)$, is the problem of finding the largest set of vertices $V^*\subseteq V$ such that, for all $v,w\in V^*$, $(v,w)\notin E$ (w.l.o.g.~we can assume that vertices are labeled by their indices so that, for $M\coloneqq|V|$, $V=\range{M}$). In the following, for each undirected graph $(V,E)$, we will build an instance of a delegation problem such that:
	\begin{itemize}
		\item \emph{Soundness:} If there is an independent set of size greater than $M^{2/3}$, then there exists a menu of deterministic payment schemes such that the {service provider}'s utility is at least $\beta M^{5/3}$;
		\item \emph{Completeness:} If all independent sets are of size at most $M^{1/3}$, then the {service provider}'s utility is at most $2\beta M^{4/3}$;
	\end{itemize}
	where $\beta>0$ will be defined in the following. 
	
	\noindent Distinguishing between instances of \IND in which there exists an independent set of size $M^{2/3}$ and instances in which all independent sets are of size at most $M^{1/3}$ is known to be \NPHARD \cite{hastad1999clique,Zuckerman2007linear}. Hence, the construction described above implies that a polynomial time algorithm with a constant multiplicative-factor approximation for the delegation problem does not exist, unless $\mathsf{P}=\mathsf{NP}$.
	
	\xhdr{Construction:} Given a graph $(V,E)$ we build an instance of the delegation problem as follows:
	\begin{itemize}[label=$\diamond$]
		\item Let $N\in\N$ be a parameter that will be set later in the proof;
		\item The set of actions $\cA$ contains one action for each pair $(v,i)\in V\times \range{N}$. The action corresponding to pair $(v,i)$ is denoted by $a_{v,i}$;
		
		\item The set of outcomes $\Omega$ contains one outcome $\omega_{v,i}$ for each pair $(v,i)\in V\times \range{N}$;
		
		\item The set of types $\Theta$ contains one type $\theta_{v,i}$ for each pair $(v,i)\in V\times \range{N}$;
		
		\item Outcome-distributions are such that, for any $a_{v,i}\in\cA$, $\mF_{a_{v,i}}=\delta(\omega_{v,i})$, meaning that action $a_{v,i}$ deterministically induces outcome $\omega_{v,i}$;~\footnote{We denote by $\delta(\cdot)$ the Dirac $\delta$ function.}
		
		\item Costs are zero for every action: $c_a=0$ for all $a\in \cA$;
		
		\item The reward of a user with type $\theta_{v,i}\in\Theta$ for an outcome $\omega_{w,j}\in\Omega$, which we denote by $R_{\theta_{v,i}}(\omega_{w,j})$ (\ie the entry in position $(\theta_{v,i}, \omega_{w,j})$ of $\mR$)  is defined as
		\begin{align}\label{eq:red_utility}
			R_{\theta_{v,i}}(\omega_{w,j}) \coloneqq \begin{cases}
				M^{-Mv-i}&\text{if}\quad w=v \textnormal{ and } i=j\\
				M^{-Mv-i}&\text{if}\quad (v,w)\in E, v<w, \forall j\in\range{N}\\
				0&\text{otherwise}
			\end{cases}
		\end{align}
		\item The type distribution $\vxi$ is such that, for any type $\theta_{v,i}\in\Theta$, $\xi_{\theta_{v,i}} \defeq\beta M^{Mv+i}$ where $\beta^{-1}\defeq\sum_{v\in V, i\in\range{N}}M^{Mv+i}$.
	\end{itemize}
	
	\noindent
	The instance is built in such a way that a menu can extract large payments from types $\{\theta_{v,i}\}_{i\in\range{N}}$ corresponding to node $v$ only by asking for payments on outcomes $\{\omega_{v,i}\}_{i\in\range{N}}$ corresponding to the same node $v$. Moreover, if some payments are allocated to outcomes in $\{\omega_{w,i}\}_{i\in\range{N}}$, relative to a neighbouring node $w$, then the utility that the service provider can extract from the types $\{\theta_{v,i}\}_{i\in\range{N}}$ is small. 
	Intuitively, this is because all the types $\{\theta_{v,i}\}_{i\in\range{N}}$ would gain reward $M^{-Mv-i}$ on every outcome $\omega_{w,j}$ with $w<v$ and $(v,w)\in E$ (see \Cref{eq:red_utility}) and, therefore, they are indifferent among all the contracts that induce one of these outcomes. Hence, all the types $\{\theta_{v,i}\}_{i\in\range{N}}$  would be willing to pay the same amount, corresponding to the contract with minimum payment.
	However, a ``good'' menu  would try to extract a different payment (equal to all the user's reward) from each type in $\{\theta_{v,i}\}_{i\in\range{N}}$, forcing the {service provider} to induce outcomes $\{\omega_{v,i}\}_{i\in\range{N}}$ (see \Cref{claim:red2}).
	Moreover, if an action $a_{v,i}$ is played in a payment scheme proposed to type $\theta_{v,i}$ (as required by \Cref{claim:red2} to gain large utility), then the {service provider} can only extract small payment from all the types $\{\theta_{w,i}\}_{i\in\range{N}}$ relative to neighbouring nodes $w$ s.t. $(w,v)\in E$ (see \Cref{claim:red1}). This follows  from a similar argument to the one that shows that it is possible to extract large payment from types $\{\theta_{v,i}\}_{i\in\range{N}}$ only by placing payments on outcomes $\{\omega_{v,i}\}_{i\in\range{N}}$ corresponding to the same node $v$.

	\xhdr{Soundness:} We show that if there exists an independent set $V^*$ of size at least $M^{2/3}$, then there exists a direct menu of deterministic payment schemes that yields service provider's utility of at least $\beta N M^{2/3}$.
	The menu $\PS$ is defined as follows. For each $v\in V^*$ and $i\in\range{N}$, we let $a_{\theta_{v,i}}=a_{v,i}$ and the vector of payments $\vp_{\theta_{v,i}}$ is defined as:
	\begin{align*}%
		p_{\theta_{v,i}}(\omega)=\begin{cases}
			M^{-Mv-i}&\text{if}\quad\omega=\omega_{v,i}\\
			0&\text{otherwise}
		\end{cases}
	\end{align*}
	In the case in which $v\notin V^*$, we define payment schemes $(a_{\theta_{v,i}}, \vp_{\theta_{v,i}})$ as the payment scheme that gives the highest utility to the user among the payment schemes we just defined for the case of vertices in $V^\ast$ and the opt-out option $(a_\varnothing,\vp_\varnothing)$. In particular, by letting $\Theta^\ast\defeq \{\theta_{v,i}\in\Theta: v\in V^\ast,i\in\range{N}\}$, we have
	\begin{align}\label{eq:red_payments2}
		(a_{\theta_{v,i}}, \vp_{\theta_{v,i}})\in\argmax_{(a,\vp)\in\left\{\mleft(a_{\theta},\vp_{\theta}\mright)\,:\, \theta\in\Theta^\ast\right\}\cup\{(a_\varnothing,\vp_\varnothing)\}} \left\{\mF_{a}^\top\left(\mR_{\theta_{v,i}}-\vp\right)\right\},
	\end{align}
	where the pair $(a_\varnothing, \vp_\varnothing)$ corresponds to the opt-out option and $\vp_\varnothing\defeq \vzero$.
	
	\noindent First, we show that the incentive compatibility and individual rationality constraints are satisfied for the menu $\PS$. 
	For any $v\in V^*$ and $i\in\range{N}$, we have that $p_{\theta_{v,i}}(\omega_{v,i})=M^{-Mv-i}$. Therefore, the IC constraints for the {service provider}
	\[
	p_{\theta_{v,i}}(\omega_{v,i})\ge p_{\theta_{v,i}}(\omega_{w,j})\quad\forall w\in V, j\in\range{M}
	\]
	are satisfied as the right-hand side is either $M^{-Mv-i}$ or $0$. Moreover, for any $v\notin V^*$ and $i\in\range{N}$, it holds $(a_{\theta_{v,i}}, p_{\theta_{v,i}}) \in \left\{\mleft(a_{\theta},\vp_{\theta}\mright)\,:\, \theta\in\Theta^\ast\right\} \cup\{(a_\varnothing,\vp_\varnothing)\}$, and hence the IC constraints for the {service provider} are satisfied also in this case.
	
	\noindent
	Regarding the IC and IR constraints of the user, we observe that for all $v\in V^*, i\in\range{N}$, we have that $R_{\theta_{v,i}}(\omega_{v,i})-p_{\theta_{v,i}}(\omega_{v,i})=0$ by construction. Hence, the user's IR constraint is satisfied.
	Moreover, the user's IC constraints are
	\[
	R_{\theta_{v,i}}(\omega_{v,i})-p_{\theta_{v,i}}(\omega_{v,i})\ge 	R_{\theta_{v,i}}(\omega_{w,j})-p_{\theta_{w,j}}(\omega_{w,j})\quad\forall w\in V^*, j\in\range{N},
	\]
	which are verified since the left-hand side is $0$ and the right-hand side is always smaller then zero, since $p_{\theta_{w,j}}(\omega_{w,j})=M^{-Mw-j}$ for all $w\in V^*, j\in\range{N}$ and $R_{\theta_{v,i}}(\omega_{w,j})$ is $M^{-Mv-i}$ only if $v=w$ and $j=i$ (notice that, by the definition of $V^*$, it holds that $(v,w)\notin E$). 
	If $v\not\in V^*$ then \Cref{eq:red_payments2} ensures that the IC and IR constraints for the user are verified.
	
	\noindent
	Having verified that the menu $\{(a_{\theta_{v,i}}, \vp_{\theta_{v,i}})\}_{v,i}$ satisfies the constraints, we can easily show that such menu provides ``large'' utility to the {service provider}:
	\[
	\sum\limits_{v\in V, i \in\range{N}}\xi_{\theta_{v,i}} p_{\theta_{v,i}(\omega_{v,i})}\ge\sum\limits_{v\in V^*, i \in\range{N}}\xi_{\theta_{v,i}} p_{\theta_{v,i}(\omega_{v,i})}
	=\,\beta|V^*|N=\beta M^{5/3},
	\]
	where we set $N=M$.
	This proves the soundness.

	\xhdr{Completeness: } Now, we show that if all independent sets of $(V,E)$ are of size at most $M^{1/3}$, then all menus of deterministic payment schemes yield an utility of at most $2\beta M^{4/3}$ to the {service provider}. To achieve this, we exploit a crucial aspect of our instance, namely, that the {service provider} cannot extract a large utility from two set of types $\{\theta_{v,i}\}_{i\in\range{N}}$ and $\{\theta_{w,i}\}_{i\in\range{N}}$ which correspond to neighbouring vertices $v$ and $w$. 
	In order to show this, we start by proving that, for any direct menu staisfying IC and IR constraints, when the {service provider} gains large utility from types $\{\theta_{v,i}\}_{i\in\range{N}}$ for a particular vertex $v$, there exist indices $i$ and $j$ such that a user of type $\theta_{v,i}$ is recommended to choose an action $a_{v,j}$ corresponding to the same node $v$.
	\begin{claim}\label{claim:red2}
		Let $\{(a_\theta,\vp_\theta)\}_{\theta\in\Theta}$ be any direct menu that satisfies IC and IR constraint, and let $v\in V$ be such that $\sum_{i\in\range{N}}\xi_{\theta_{v,i}} \mF_{a_{\theta_{v,i}}}^\top \vp_{\theta_{v,i}}> 2\beta$. Then, there exists an index $q\in\range{N}$ such that $a_{\theta_{v,q}}\in\{a_{v,j}\}_{j\in\range{N}}$. 
	\end{claim}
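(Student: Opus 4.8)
I would prove the contrapositive: assuming that \emph{none} of the payment schemes assigned to the types $\{\theta_{v,i}\}_{i\in\range{N}}$ uses an action of the form $a_{v,j}$, I will show that the total expected payment collected from these types is at most $2\beta$, contradicting the hypothesis $\sum_{i\in\range{N}}\xi_{\theta_{v,i}}\mF_{a_{\theta_{v,i}}}^\top\vp_{\theta_{v,i}}>2\beta$. Since each real action is deterministic ($\mF_{a_{v,i}}=\delta(\omega_{v,i})$) and $\mF_{a_\varnothing}=\vzero$, each scheme $(a_{\theta_{v,i}},\vp_{\theta_{v,i}})$ either is the opt-out (contributing $0$) or induces a single outcome $\omega^{(i)}=\omega_{w_i,j_i}$, and then $\mF_{a_{\theta_{v,i}}}^\top\vp_{\theta_{v,i}}=p_{\theta_{v,i}}(\omega^{(i)})$. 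By the standing assumption, $w_i\neq v$ for every $i$.

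\textbf{Step 1: a reward-based partition, and the easy part.} Since $w_i\neq v$, the first branch of \Cref{eq:red_utility} never fires, so $R_{\theta_{v,i}}(\omega^{(i)})$ is either $0$ or $M^{-Mv-i}$, the latter happening exactly when $(v,w_i)\in E$ and $v<w_i$. Let $I_1$ be the set of indices $i$ for which $R_{\theta_{v,i}}(\omega^{(i)})=M^{-Mv-i}$, and let $I_0=\range{N}\setminus I_1$ (this also absorbs the opt-out indices). For $i\in I_0$, the user's IR constraint $R_{\theta_{v,i}}(\omega^{(i)})-p_{\theta_{v,i}}(\omega^{(i)})\ge 0$ together with $\vp\ge\vzero$ forces $p_{\theta_{v,i}}(\omega^{(i)})=0$; so these indices contribute nothing, and it only remains to bound $\sum_{i\in I_1}\xi_{\theta_{v,i}}\,p_{\theta_{v,i}}(\omega^{(i)})$.

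\textbf{Step 2: collapsing the $I_1$-payments, then the geometric estimate.} The key observation is that whenever $i\in I_1$, the outcome $\omega^{(i)}=\omega_{w_i,j_i}$ sits at a neighbour $w_i>v$ of $v$, and hence \emph{every} type $\theta_{v,i'}$ values it at exactly $M^{-Mv-i'}$, independently of $i$ and of the second index $j_i$. Plugging the user's IC constraint between two types $\theta_{v,i},\theta_{v,i'}$ with $i,i'\in I_1$ therefore gives $p_{\theta_{v,i}}(\omega^{(i)})\le p_{\theta_{v,i'}}(\omega^{(i')})$, and by symmetry these are all equal to a common value $p^\star$. Now the user's IR constraint applied with the index $i_{\max}:=\max I_1$ yields $p^\star\le M^{-Mv-i_{\max}}$, so, using $\xi_{\theta_{v,i}}=\beta M^{Mv+i}$ and $I_1\subseteq\{1,\dots,i_{\max}\}$,
\[
\sum_{i\in I_1}\xi_{\theta_{v,i}}\,p_{\theta_{v,i}}(\omega^{(i)})\;=\;p^\star\,\beta\!\!\sum_{i\in I_1}\!M^{Mv+i}\;\le\;M^{-Mv-i_{\max}}\,\beta\!\sum_{i=1}^{i_{\max}}\!M^{Mv+i}\;=\;\beta\sum_{k=0}^{i_{\max}-1}M^{-k}\;<\;\beta\frac{M}{M-1}\;\le\;2\beta
\]
for $M\ge 2$, which is the desired contradiction.

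\textbf{Anticipated obstacle.} The delicate point is Step 2's key observation and the IC chain built on it: one must verify carefully that the reward a type $\theta_{v,i'}$ obtains from \emph{any} $I_1$-outcome is the single number $M^{-Mv-i'}$, not depending on which $I_1$-scheme it is — this is precisely what makes all $I_1$-schemes interchangeable for a deviating type and collapses their payments to one value $p^\star$. Once the weights $M^{Mv+i}$ and the rewards $M^{-Mv-i}$ are lined up so that the product telescopes to a bounded geometric series, the rest (the IR argument for $I_0$, the opt-out bookkeeping, and the final estimate) is routine.
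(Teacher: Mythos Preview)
Your proof is correct and follows essentially the same strategy as the paper's: partition the types $\{\theta_{v,i}\}_i$ according to where their assigned action lands, kill the zero-reward part via IR, and on the remaining part use that every $I_1$-outcome gives type $\theta_{v,i'}$ the \emph{same} reward $M^{-Mv-i'}$, so IC forces a single common payment $p^\star$, which IR at the largest index then caps by $M^{-Mv-i_{\max}}$, making the weighted sum a bounded geometric series below $2\beta$. The only cosmetic differences are that you argue the contrapositive (so the set $\Theta_{v,1}$ of the paper is empty from the outset and you need only two sets $I_0,I_1$ instead of three), and that your neighbour condition $v<w_i$ is the one that actually matches the reward definition in \Cref{eq:red_utility}; the paper's $\Theta_{v,2}$ is written with $w<v$, which is evidently a sign slip that your version silently repairs.
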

	
	\begin{proof}
		We partition the types $\Theta_v\coloneqq\{\theta_{v,i}\}_{i\in\range{N}}$ into three sets based on the action associated to each type by the service provider:
		\begin{itemize}
			\item $\Theta_{v,1}\coloneqq\left\{\theta:\, a_{\theta}\in\{a_{v,j}\}_{j\in\range{N}}\right\}$
			\item $\Theta_{v,2}\coloneqq\left\{\theta:\, a_{\theta}\in\{a_{w,j}\}_{j\in\range{N}, w\in V}\text{ and }(v,w)\in E,\, w<v\right\}$
			\item $\Theta_{v,3}\coloneqq\Theta_v\setminus(\Theta_{v,1}\cup\Theta_{v,2})$
		\end{itemize}
		Observe that the statement we aim to prove is equivalent to showing that $\Theta_{v,1}$ is non-empty. Moreover, notice that, by construction of user's utilities as per \Cref{eq:red_utility} and IR constraints, we have that the expected payments that can be extracted from types in $\Theta_{v,3}$ must be zero. Indeed, the utility gained at all of these types is $0$. Thus, the {service provider}'s utility from vertex $v$ comes entirely from types belonging to $\Theta_{v,1}$ and $\Theta_{v,2}$. We show that the total utility coming from types in $\Theta_{v,2}$ is not large enough to guarantee a cumulative utility greater than $2\beta$ and, therefore, some utility must come types belonging to $\Theta_{v,1}$, thereby showing that the set is not empty.
		
		\noindent 
		To show that the total payments coming from types in $\Theta_{v,2}$ are small, take any $\theta,\theta'\in\Theta_{v,2}$. The IC constraint for these two types reads as follows
		\[
		\mF_{a_\theta}^\top\left(\mR_\theta-\vp_\theta\right)\ge \mF_{a_{\theta'}}^\top\left(\mR_\theta-\vp_{\theta'}\right).
		\]
		Since both $a_\theta$ and $a_{\theta'}$ induce an outcome $\omega_{w,j}$ with $w<v$ and $j\in\range{N}$, they yield the same utility for the user, \ie $\mF_{a_\theta}^\top\mR_\theta=\mF_{a_{\theta'}}^\top\mR_\theta$. Thus the IC constraints imply that, for all $\theta,\theta'\in\Theta_{v,2}$, we have
		\[
		\mF_{a_{\theta'}}^\top \vp_{\theta'}\ge\mF_{a_\theta}^\top \vp_{\theta}.
		\]
		This implies that, for all $\theta\in\Theta_2$, payments are the same, \ie $\mF_{a_\theta}^\top \vp_{\theta}=\hat p\ge 0$.
		Now, let $i^*\in\range{N}$ be the largest index such that $\theta_{v,i^*}\in\Theta_{v,2}$. Then, the IR constraints for type $\theta_{v,i^*}$ imply that the expected payment $\hat p$ is upper bounded by $M^{-Mv-i^*}$. Then, 
		\begin{align*}
			\sum\limits_{\theta\in\Theta_{v,2}}\xi_\theta\mF_{a_\theta}^\top\vp_{\theta}&\le\beta \hat p \sum\limits_{i\in\range{N}:\,\theta_{v,i}\in\Theta_{v,2}}M^{Mv+i}\\
			&\le\beta M^{-Mv-i^*}\sum\limits_{i\in\range{N}:\, \theta_{v,i}\in\Theta_{v,2}}M^{Mv+i}\\
			&\le 2\beta,
		\end{align*}
		where the first inequality holds by definition of $\vxi$, and the third is obtained by setting $M=N$. This shows that $\Theta_{v,1}$ is non-empty.
	\end{proof}
	
	\noindent The second claim states that, if two vertices are connected, \ie $(v,w)\in E$, and a type associated to vertex $v$ is recommended an action associated to the same node, \ie $a_{\theta_{v,q}}=a_{v,j}$ for some $q,j\in\range{N}$, then the {service provider}'s utility for types corresponding to the neighbouring node $w$ is at most $\beta$. 
	\begin{claim}\label{claim:red1}
		Consider two vertices $v$ and $w$ such that $(v,w)\in E$ with $w<v$, and assume there exists $q \in\range{N}$ such that, for some $j\in\range{N}$, it holds $a_{\theta_{v,q}}=a_{v,j}$. Then,
		\[
		\sum\limits_{i\in\range{N}}\xi_{\theta_{w,i}}\mF_{a_{\theta_{w,i}}}^\top\vp_{\theta_{w,i}}\le\beta
		\]
		holds for any direct menu $\{(a_\theta,\vp_\theta)\}_{\theta\in\Theta}$, which satisfies IC and IR constraints.
	\end{claim}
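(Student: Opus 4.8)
Write $\hat p \defeq \mF_{a_{\theta_{v,q}}}^\top \vp_{\theta_{v,q}}$. Since the hypothesis gives $a_{\theta_{v,q}} = a_{v,j}$, which deterministically induces outcome $\omega_{v,j}$, this expected payment is simply the single coordinate $\hat p = p_{\theta_{v,q}}(\omega_{v,j})\ge 0$. The argument has three short steps: (i) show that every one of the $N$ types $\theta_{w,i}$ pays in expectation at most $\hat p$; (ii) bound $\hat p\le M^{-Mv-q}$ via the user's IR constraint for type $\theta_{v,q}$; (iii) sum against the type weights $\xi_{\theta_{w,i}}$ and estimate the resulting geometric sum. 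I would also assume $M\ge 2$ (small $M$ being trivial) and recall that the construction fixes $N=M$.

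\textbf{Step (i).} Fix $i\in\range{N}$ and invoke the user's IC constraint \eqref{eq:ICuser} for the pair $\theta=\theta_{w,i}$, $\theta'=\theta_{v,q}$. On its right-hand side, $a_{\theta_{v,q}}=a_{v,j}$ produces $\omega_{v,j}$, and because $(v,w)\in E$ with $w<v$ the second case of \eqref{eq:red_utility} gives $R_{\theta_{w,i}}(\omega_{v,j})=M^{-Mw-i}$, which is the largest reward entry available to type $\theta_{w,i}$; hence the deviating utility equals exactly $M^{-Mw-i}-\hat p$. On the left-hand side, whatever single outcome $a_{\theta_{w,i}}$ induces (or the opt-out), the reward to type $\theta_{w,i}$ is at most $M^{-Mw-i}$, so the utility of $\theta_{w,i}$'s own scheme is at most $M^{-Mw-i}-\mF_{a_{\theta_{w,i}}}^\top\vp_{\theta_{w,i}}$. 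Chaining these through the IC inequality yields $\mF_{a_{\theta_{w,i}}}^\top\vp_{\theta_{w,i}}\le\hat p$.

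\textbf{Step (ii).} The user's IR constraint for type $\theta_{v,q}$ gives $\hat p=\mF_{a_{\theta_{v,q}}}^\top\vp_{\theta_{v,q}}\le\mF_{a_{\theta_{v,q}}}^\top\mR_{\theta_{v,q}}=R_{\theta_{v,q}}(\omega_{v,j})$. If $j\neq q$, then by \eqref{eq:red_utility} this reward is $0$, so $\hat p=0$ and the claim follows at once from Step (i). Otherwise $j=q$ and $R_{\theta_{v,q}}(\omega_{v,j})=M^{-Mv-q}$, so $\hat p\le M^{-Mv-q}$. Then, using $\xi_{\theta_{w,i}}=\beta M^{Mw+i}$ and $N=M$,
\[
\sum_{i\in\range{N}}\xi_{\theta_{w,i}}\mF_{a_{\theta_{w,i}}}^\top\vp_{\theta_{w,i}}\le\hat p\,\beta\sum_{i=1}^{N}M^{Mw+i}\le M^{-Mv-q}\,\beta\cdot 2M^{Mw+N},
\]
where $\sum_{i=1}^{N}M^i\le 2M^N$ for $M\ge 2$. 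Since $w<v$ forces $w+1\le v$, we get $M^{Mw+N}=M^{M(w+1)}\le M^{Mv}$, so the right-hand side is at most $2\beta M^{-q}\le\beta$ as $q\ge1$ and $M\ge2$.

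\textbf{Main obstacle.} The conceptual content is in Steps (i)–(ii): type $\theta_{w,i}$ values the neighbour's outcome $\omega_{v,j}$ as highly as any of its own outcomes, while service-provider IC collapses $\theta_{v,q}$'s scheme onto one coordinate whose size is pinned by $\theta_{v,q}$'s IR. Once these are in place the rest is bookkeeping; the only delicate point there is getting the geometric sum and the exponent inequality $w+1\le v$ right, and separating out the degenerate case $j\neq q$ so that $\hat p$ is controlled in all cases.
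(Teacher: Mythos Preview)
Your proof is correct and follows essentially the same route as the paper: bound each $\mF_{a_{\theta_{w,i}}}^\top\vp_{\theta_{w,i}}$ by $\hat p$ via the user's IC constraint (comparing against the scheme $(a_{\theta_{v,q}},\vp_{\theta_{v,q}})$ and using that $R_{\theta_{w,i}}(\omega_{v,j})=M^{-Mw-i}$ is the maximal reward for type $\theta_{w,i}$), bound $\hat p$ via the user's IR for $\theta_{v,q}$, and then estimate the resulting geometric sum. Your explicit treatment of the degenerate case $j\neq q$ is a minor improvement in precision over the paper's write-up, but the argument is otherwise identical.
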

	
	\begin{proof}
		We show that the expected payments received by a user of type $\theta_{v,q}$ is not too large. Formally,
		\[
		\mF_{a_{\theta_{v,q}}}^\top\vp_{\theta_{v,q}}=p_{\theta_{v,q}}(\omega_{v,j})\le \mF_{a_{\theta_{v,q}}}^\top\mR_{\theta_{v,q}}=R_{\theta_{v,q}}(\omega_{v,j})=M^{-Mv-q}.
		\] 
		where the first inequality comes from the IR constraints, and the last equality holds by construction and since $(v,w)\in E$ (see \Cref{eq:red_utility}). Moreover, the IC constraints for all the users of types $\theta_{w,i}$ imply that
		\begin{align*}
			\mF_{a_{\theta_{w,i}}}^\top\left(\mR_{\theta_{w,i}}-\vp_{\theta_{w,i}}\right)&\ge\mF_{a_{\theta_{v,q}}}^\top\left(\mR_{\theta_{w,i}}-\vp_{\theta_{v,q}}\right)\\
			&\ge \mF_{a_{\theta_{v,q}}}^\top\mR_{\theta_{w,i}}-M^{-Mv-q}.
		\end{align*}
		By observing that $\mF_{a_{\theta_{w,i}}}^\top\mR_{\theta_{w,i}}\le \mF_{a_{\theta_{v,q}}}^\top\mR_{\theta_{w,i}}=M^{-Mw-i}$, we can derive a bound on the expected payment of any user of type $\theta_{w,i}$ which depends only on $v$ and $q$, and not on $w$ and $i$:
		\[
		\mF_{a_{\theta_{w,i}}}^\top\vp_{\theta_{w,i}}\le M^{-Mv-q}.
		\]
		The expected payment received by users of types associated with vertex $w$ can be bounded as follows
		\begin{align*}
			\sum\limits_{i\in\range{N}} \xi_{\theta_{w,i}}\mF_{a_{\theta_{w,i}}}^\top\vp_{\theta_{w,i}}&\le \beta \sum\limits_{i\in\range{N}} M^{Mw+i}M^{-Mv-q}\\
			&\le \beta M^{-M-1}\sum\limits_{i\in\range{N}}M^i\\
			&\le \beta M^{-M}\frac{M^N-1}{M-1}\\
			&\le \beta,
		\end{align*}
		where the last inequality holds by choosing $N=M$. This concludes the proof of the claim.
	\end{proof}

	By combining \Cref{claim:red1} and \Cref{claim:red2} we can easily conclude that, for any $(v,w)\in E$, it cannot happen that the {service provider} collects utility larger then $2\beta$ both from $v$ and $w$. Indeed, if $\sum_{i\in\range{N}}\xi_{\theta_{v,i}} \mF_{a_{\theta_{v,i}}}^\top \vp_{\theta_{v,i}}> 2\beta$ (where w.l.o.g.~we assumed that $v>w$), then by \Cref{claim:red2} we have that the assumptions of \Cref{claim:red1} are satisfied and, therefore, $\sum_{i\in\range{N}}\xi_{\theta_{w,i}}\mF_{a_{\theta_{w,i}}}^\top\vp_{\theta_{w,i}}\le\beta$.
	
	\noindent
	Finally, consider any menu $(a_\theta,\vp_\theta)_{\theta\in\Theta}$, and let $V^*$ be the set of vertices from which the {service provider} gains more then $2\beta$. Then, as we showed above, $V^*$ is an independent set. Moreover, for all $v\not\in V^*$, the service provider can gain at most $\beta$. For every vertex $v$ and index $i\in\range{N}$, we can upper bound the maximum payment by using the IR constraint: \[\mF_{a_{\theta_{v,i}}}^\top\vp_{\theta_{v,i}}\le\mF_{a_{\theta_{v,i}}}^\top\mR_{\theta_{v,i}}\le M^{-Mv-i}.\] This leads to
	\begin{align*}
		\sum\limits_{v\in V, i\in\range{N}} \xi_{\theta_{v,i}}\mF_{a_{\theta_{v,i}}}^\top\vp_{\theta_{v,i}}&=\sum\limits_{v\in V^*, i\in\range{N}}\xi_{\theta_{v,i}}\mF_{a_{\theta_{v,i}}}^\top\vp_{\theta_{v,i}}+\sum\limits_{v\not\in V^*, i\in\range{N}}\xi_{\theta_{v,i}}\mF_{a_{\theta_{v,i}}}^\top\vp_{\theta_{v,i}}\\
		&\le \beta |V^*|N+\beta M \\
		&\le 2\beta M^{4/3},
	\end{align*}
	which holds by choosing $N=M$. This proves the completeness and concludes the proof.

\section{Computing optimal menus of small complexity (Section \ref{sec:pricing})}

\shapeofQ*

\begin{proof}
	The set $Q_a$ is the image of $\Pi_a$ under the linear transformation $\vp\mapsto \mF_a^\top \vp$.
	Then, note that $\Pi_a\subseteq \reals^m_{\ge 0}$ is unbounded for all $a\in\cA$. Indeed, it is easy to verify that for all $\vp\in\Pi_a$ then also $\vp+x\vone\in\Pi_a$, where $x\in\reals_{\ge0}$ and $\vone\in\reals^m$ is a vector of all ones. Thus, since the linear transformation $\vp\mapsto \mF_a^\top \vp$ is of positive components, we have that $Q_a$ is a positive ray of the form $[l_a,+\infty)$ for some $l_a\ge0$.
\end{proof}

\lemmapricingtodeterm*
\begin{proof}
	For each $i \in \range{k}$, we consider a payment scheme $(a_i,\vp_i)$ where, as prescribed by the Lemma, action $a_i$ is the same of $(a_i,q_i)$. Moreover, the payment vector $\vp_i$ can be set to any payment vector such that $q_i=\mF_{a_i}^\top\vp_i$. It is always possible to construct a payment vector with these properties. Indeed, $\vp_i$ can be easily computed in polynomial time by solving the linear feasibility problem
	\[
	\mleft\{\vp\in\reals_{\ge 0}^m: \mF_{a_i}^\top\vp=q_{i}\textnormal{ and }\vp_i\in\Pi_{a_i}\mright\}.
	\]
	First, we show that $\PS=\{(a_i, \vp_i)\}_{i\in\range{k}}$ is feasible.
	For each $i \in \range{k} $, it holds $\vp_i \in \Pi_{a_i}$ and, for all $a'\in\cA$, $\mF_{a_i}^\top\vp_i-c_{a_i}\ge \mF_{a'}^\top\vp_i-c_{a'}$, which are exactly the constraints of Program \eqref{prog:determ} relative to payment scheme $i$.
	Moreover, the value of the objective for is the same since it depends only on the expected payments $q_i=\mF_{a_i}^\top\vp_i$, $i \in \range{k}$, and not on the realized payments $\vp_i$, $i \in \range{k}$. 
\end{proof}

\lemmapaymenttopricing*
\begin{proof}
	It is enough to consider, for each $i\in\range{k}$, the pair $(a_i,q_i)=(a_i,\mF_{a_i}^\top\vp_i)$. Since $(a_i,\vp_i)$ is feasible for Problem \eqref{prog:determ}, it holds $q_i\in Q_{a_i}$. By construction of \Cref{eq:selection func}, the first set of constraints of Problem \eqref{prog:pricing} also holds. Finally, the objectives yield identical values for the same argument employed in the previous lemma.
\end{proof}

\polyalgo*
\begin{proof}
	Fix any function $f:\Theta\to\range{k}\cup\{\varnothing\}$ and $\va\in\cA^k$, and consider the linear program $\LP(f,\va)$ parameterized by $(f,\va)$:
	\begin{align}\label{prog:pricing}
		\LP(f,\va)\coloneqq\begin{cases}
			\max\limits_{\vq\in\reals^k} \sum\limits_{\theta\in\Theta} \xi_\theta\left(q_{f(\theta)}-c_{a_{f(\theta)}}\right) &\textnormal{s.t.}\\
			\mF^\top_{a_{f(\theta)}}\mR_\theta-q_{f(\theta)}\ge \mF^\top_{a_{f(\theta')}}\mR_{\theta}-q_{f(\theta')},&\forall\theta,\theta'\in\Theta, \theta'\neq\theta\nonumber\\
			q_{i}\in Q_{a_i},&\forall i\in\range{k}\nonumber
		\end{cases}
	\end{align}
	where $Q_{a_i}$ is defined as in \Cref{def:Q} for all $i\in\range{k}$.
	By standard arguments of linear programming, either the solution of $\LP(f,\va)$ is $+\infty$, or there exists an optimal solution on an extreme point. We claim that the optimal value is bounded. Indeed, take any feasible point $\vq$ and any $\theta$, and consider the point $\tilde\vq\defeq\vq+\alpha \cdot\ve_{f(\theta)}$ for $\alpha\ge 0$. Then, it is easy to see that for $\alpha$ large enough the point $\tilde\vq$ is no longer feasible. Thus, the solution is on a vertex of the polytope and, in particular, it is at the intersection of $k$ hyperplanes characterizing the feasibility set of $\LP(f,\va)$.
	In particular, for a fixed pair $(f,\va)$, an optimal solution is at the intersection of $k$ hyperplanes among $\{H_{\va}^{i,j,\theta}\}_{i,j\in\range{k},\theta\in\Theta}$, and $\{\tH_{\va}^{i}\}_{i\in\range{k}}$. The former set of constraints has cardinality $nk^2$, while the latter has cardinality $k$. We observe that both sets are independent from $f$. 
	Therefore, the possible combinations of these hyperplanes are ${k(kn+1)\choose k}=O\left((kn)^k\right)$ and, therefore, $|\cF_{\va}|=O((kn)^k)$ for each $\va$. 
	There are $\ell^k$ possible choices for $\va\in\cA^k$. For each $\va\in\cA^k$, it is enough to evaluate the value of the objective for all possible $\vq\in\cF_{\va}$, which are $O\left((kn)^k\right)$. We observe that, once the pair $(\va,\vq)$ is fixed, then the selection function $i(\cdot)$ is fully specified and, therefore, we can efficiently evaluate it for each $\theta\in\Theta$. 
	Putting it all together, an optimal solution can be computed in time polynomial in  $(kn)^k \cdot \ell^k$, which concludes the proof.
\end{proof}

\lemmaminactions*
\begin{proof}
	Let $\PS=\{(\adir_\theta,\pdir_\theta):\theta \in \Theta\}$ be an optimal direct menu.
	Since $\ell\le n$, some actions will be chosen for multiple types. Users whose types correspond to actions associated with multiple other types, will select the payment scheme, among those of types sharing the same action, that is cheaper in expectation. Let $\cA^{(\textsc{d})}\defeq \bigcup_{\theta}\{\adir_\theta\}$. For each action $a\in\cA^{(\textsc{d})}$, the ``cheapest'' types associated to $a$ are 
	\[
	\hat \theta(a)\in \argmin_{\theta \in \Theta:\adir_\theta=a} \mF_a^\top\vp_\theta.
	\]
	Let $k=|\cA^{(\textsc{d})}|$. We build an indirect menu of size $k$ as follows: the menu employs $k$ distinct actions which we denote by $\aind_1,\ldots,\aind_k$, one for each of the actions in $\cA^{(\textsc{d})}$. Then, the menu of size $k\le\ell$ is 
	\[
	\bigcup_{i\in\range{k}}\mleft\{\mleft(\aind_i,\pind_i\defeq\pdir_{\theta}\mright)\,:\,\theta\in \hat\theta\mleft(\aind_i\mright)\mright\}.
	\]
	Every action $\aind_i$ is clearly IC, since it is chosen, together with the same payment scheme, within the direct menu $\PS$. Moreover, for each $\theta\in\Theta$,
	\[
	\mF^\top_{\aind_{i(\theta)}}\mleft(\mR_\theta  - \pind_{i(\theta)}\mright)\ge \mF^\top_{\adir_{\theta}}\mleft(\mR_\theta  - \pdir_{\theta}\mright)\ge \mF^\top_{\adir_{\theta'}}\mleft(\mR_\theta  - \pdir_{\theta'}\mright)=\mF^\top_{\aind_{j}}\mleft(\mR_\theta  - \pind_{j}\mright),
	\]
	for every payment scheme $j\in\range{k}$ and $\theta'\in\Theta$ such that $\mleft(\adir_{\theta'},\pdir_{\theta'}\mright)=\mleft(\aind_j,\pind_j\mright)$. Therefore, the indirect menu is IC also for the user. Finally, we have that, for each $\theta\in\Theta$, $\aind_{i(\theta)}=\adir_\theta$. This holds by construction and by the user's IC: a user of type $\theta$ will choose $\theta'\in(\theta)$ so that type $\theta'$ is recommended the same action in the direct menu, and the payment schemes for $\theta$ and $\theta'$ must be such that $\pdir_\theta=\pdir_{\theta'}$ by IC of the direct menu. Therefore, $\OPT_\ell = \OPT$. This concludes the proof.
\end{proof}

\section{Trade off between utility and complexity of menus (Section \ref{sec:classes})}

\theoremsingleGood*

\begin{proof}
	First, we observe that by definition $\OPT=\OPT_n$, and by \Cref{lem:smallactions} we have $\OPT=\OPT_\ell$ whenever $\ell\le n$. Therefore, there always exists an optimal menu of size $\eta\defeq\min(\ell,n)$.
	Consider an optimal menu of payments schemes $\PS=\{(a_i,\vp_i)\}_{i \in \range{\eta}}$. We describe a procedure for building a menu $\tPS=\{(\tilde a_i, \tilde{\vp}_i)\}_{i\in\range{k}}$ of size $k$ that guarantees utility at least $k\cdot\OPT/\eta$. The procedure simply selects the payment schemes among the $n$ direct options that maximize the utility for the {service provider}. Formally, let $\cI \subseteq \range{\eta}$ be a set of size $k=|\cI|$ such that, for any alternative $\cI'\subseteq \range{\eta}$, $|\cI'|=k$, 
	\begin{equation*}\label{eq:highestgrossing}
		\sum_{j\in\cI}\,\sum_{\theta: i(\theta|\PS)=j}\xi_\theta\left(\mF_{ a_{j}}^\top \vp_{j}-c_{j}\right) \ge \sum_{j\in\cI'}\,\sum_{\theta: i(\theta|\PS)=j} \xi_{\theta}  \left(\mF_{a_{j}}^\top\vp_{j}-c_{a_{j}}\right).
	\end{equation*}
	Intuitively, $\cI$ contains the $k$ payment schemes yielding the highest service provider's expected utility.
	Then, let $\tPS=\{(\tilde a_i, \tilde{\vp}_i)\}_{i\in\range{k}}$ be the relabeling of payment schemes $\{( a_i, {\vp}_i)\}_{i\in \cI}$. We show that $\PS$ extracts an expected utility which is big enough:
	\begin{align*}
		&\sum \limits_{\theta\in\Theta}\xi_\theta\left(\mF^\top_{\tilde a_{\ui(\theta|\tPS)}}\tilde \vp_{\ui(\theta|\tPS)}-c_{\tilde a_{\ui(\theta|\tPS)}}\right)\\
		&\hspace{1cm}=\sum\limits_{i\in \cI} \sum_{\theta :   i(\theta|\PS)=i} \xi_\theta\left(\mF^\top_{  a_{i}}\vp_{i}-c_{ a_{i}}\right) + \sum_{\theta: i(\theta|\PS)\notin \cI}     \xi_\theta \left(\mF^\top_{\tilde a_{\ui(\theta|\tPS)}}\tilde \vp_{\ui(\theta|\tPS)}-c_{\tilde a_{\ui(\theta|\tPS)}}\right)\\
		&\hspace{1cm}\ge \frac{k}{\eta} \sum\limits_{i\in \range{\eta}} \sum_{\theta :   i(\theta|\PS)=i} \xi_\theta\left(\mF^\top_{ a_{i}}\vp_{i}-c_{ a_{i}}\right)  + \sum_{\theta: i(\theta|\PS)\notin \cI}     \xi_\theta \left(\mF^\top_{\tilde a_{\ui(\theta|\tPS)}}\tilde \vp_{\ui(\theta|\tPS)}-c_{\tilde a_{\ui(\theta|\tPS)}}\right)\\
		&\hspace{1cm}\ge  \frac{k}{\eta} \sum\limits_{i\in \range{\eta}} \sum_{\theta :   i(\theta|\PS)=i} \xi_\theta\left(\mF^\top_{ a_{i}}\vp_{i}-c_{ a_{i}}\right) \\
		&\hspace{1cm}= \frac{k}{\eta} \sum_{\theta \in \Theta}  \xi_\theta\left(\mF^\top_{ a_{\ui(\theta|\PS)}}\vp_{\ui(\theta|\PS)}-c_{ a_{\ui(\theta|\PS)}}\right) \\
		&\hspace{1cm}=\frac{k}{\eta}  \OPT,
	\end{align*}
	where the first inequality follows by the definition of $\cI$, and the second one by IR (\ie, $\left(\mF^\top_{ a_{i}}\vp_{i}-c_{ a_{i}}\right)\ge 0$ for each $i \in \range{k}$). This concludes the proof.
\end{proof}

\propositionsingleBad*

\begin{proof}
	Consider the following instance $\delegation$: the set of types is $\Theta\defeq\{\theta_1,\ldots,\theta_n\}$, the set of actions is $\cA\defeq\{a_1,\ldots,a_n\}$, and the outcome space is $\Omega\defeq\{\omega_1,\ldots, \omega_n\}$. Each action $a_i$ induces deterministically $\omega_i$, \ie $\mF_{a_i}=\delta(\omega_i)$ for all $i\in\range{n}$. The type distribution is uniform, \ie $\xi_\theta=1/n$ for all $\theta$. The reward of an agent $\theta_i$ is $1$ if the outcome reached is $\omega_i$, \ie $\mR_{\theta_i,\omega}=\indicator{\omega=\omega_i}$, and $\mR_{\theta_i,\omega}=0$ otherwise.
	
	The optimal direct menu $\{(a_\theta, \vp_\theta)\}_{\theta\in\Theta}$ is such that $a_{\theta_i}=a_i$ and $p_{\theta_i,\omega}=\indicator{\omega=\omega_i}$. This yields an expected utility of $1$ for the {service provider}.
	On the other hand, the optimal menu of size $k$ can employ at most $k$ actions. For each type $\theta_i$, if action $a_i$ is not among those proposed in the menu, the user expected utility is $0$, and hence the payment must be $0$ due to the individual rationality constraints.
	For all the other types $\theta_i$ for which action $a_i$ is proposed, the {service provider} utility is at most $1$.
	Therefore, the overall service provider's utility is at most $k/n$ for a menu of size $k$.
\end{proof}

\section{Continuous Actions and Robustness (Section \ref{sec:continuous})}

\lemmaQeps*

\begin{proof}
	Take $q\in Q^{\epsilon}_a$. By the definition of   $Q^{\epsilon}_a$, there exists a payment scheme $\vp$ such that \[ \mF_a^\top\vp-c_a\ge \max\limits_{a'\in\cA}\mF_a^\top\vp-c_{a'}-\epsilon.\]
	Moreover, since $0\le \epsilon'\le q$, it is easy to see that there exists $\vp'\in \mathbb{R}_{\ge 0}^n$ such that $(\vp-\vp')\ge0$ and  $\mF_a^\top\vp'= \epsilon'$. Let $\vp^\star\defeq(\vp-\vp')$.
	Then,
	\[ \mF_a^\top\vp^\star-c_a \ge \mF_a^\top\vp-c_a - \mF_a^\top \vp' =  \mF_a^\top\vp-c_a -\epsilon'\ge \max\limits_{a'\in\cA}\mF_a^\top\vp-c_{a'}-\epsilon-\epsilon'.  \]
	By the definition of $Q^{\epsilon+\epsilon'}_a$ and $\mF_a^\top\vp^\star=q-\epsilon'$, it follows that $q-\epsilon'\in  Q^{\epsilon+\epsilon'}_a$.
\end{proof}

\begin{lemma}\label{lem:qle1}
	For all types $\theta\in\Theta$ and menu $P=\{(a_i,q_i)\}_{i\in\range{k}}$, we have that $q_{i(\theta|P)}\le1$.
\end{lemma}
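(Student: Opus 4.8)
The plan is to split on whether the user of type $\theta$ opts out under $P$. The first, trivial case is $i(\theta\mid P)=\varnothing$: then by the standing convention $q_\varnothing\defeq 0$ one has $q_{i(\theta\mid P)}=0\le 1$ immediately, so nothing more is needed.

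For the main case, $i(\theta\mid P)=j$ with $j\in\range{k}$, I would argue in two short steps. First, I would observe that since the selection function \eqref{eq:selection func} did not return the opt-out symbol, the chosen option must yield nonnegative user utility, i.e. $\mF_{a_j}^\top\mR_\theta-q_j\ge 0$: indeed, the opt-out branch of \eqref{eq:selection func} is taken precisely when every option has strictly negative user utility, so not taking it means $\max_{i\in\range{k}}\{\mF_{a_i}^\top\mR_\theta-q_i\}\ge 0$, and $j$ attains this maximum. (Equivalently, this is the constraint of the pricing problem~\eqref{prog:pricing} against the opt-out index $\varnothing$, using $\mF_{a_\varnothing}=\vzero$ and $q_\varnothing=0$.) Hence $q_j\le\mF_{a_j}^\top\mR_\theta$. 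Second, I would bound $\mF_{a_j}^\top\mR_\theta\le 1$ using that $\mF_{a_j}$ is a probability vector over $\Omega$ and every entry of $\mR_\theta$ lies in $[0,1]$, so
\[
\mF_{a_j}^\top\mR_\theta=\sum_{\omega\in\Omega}F_{a_j}(\omega)R_\theta(\omega)\le\sum_{\omega\in\Omega}F_{a_j}(\omega)=1.
\]
Chaining the two inequalities gives $q_{i(\theta\mid P)}=q_j\le 1$.

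There is no real obstacle here: this is essentially a one-line consequence of individual rationality (non-opt-out) plus the normalization $\mR_\theta\in[0,1]^m$. The only point that deserves an explicit sentence is justifying that not opting out forces the selected option to have nonnegative user utility, which follows directly from the piecewise definition of $i(\cdot)$.
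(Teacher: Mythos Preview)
Your proof is correct and follows essentially the same approach as the paper: the paper's proof is the one-liner ``since type $\theta$ selects the payment with index $i(\theta|P)$ the IR constraint is satisfied and thus $q_{i(\theta|P)}\le \mF_{a_\theta}^\top\mR_\theta\le 1$'', which is exactly your argument (you are simply more explicit about the opt-out case and about why $\mF_{a_j}^\top\mR_\theta\le 1$).
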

\begin{proof}
	Since type $\theta\in\Theta$ selects th payment with index $i(\theta|P)$ we have that the IR constraints is satisfied and thus $q_{i(\theta|P)}\le \mF_{a_\theta}^\top\mR_\theta\le 1$.
\end{proof}

\lemmadeltaIC*
\begin{proof}
	We represent $\tPS$ through its pricing equivalent $\{(\tilde a_\theta, \tilde q_\theta)\}_{\theta\in\Theta}$ (this is always possible thanks to \Cref{lem:payment to pricing}). First, we observe that we can always represent the direct menu $\{(\tilde a_\theta, \tilde q_\theta)\}_{\theta\in\Theta}$ as an equivalent indirect menu $ \PS^{\ind}:=\big\{\big( \aind_i, q^{\ind}_i\big)\big\}_{i\in\range{k}}$, where $k=n$ .
	For each $i\in\range{k}$, let $g_i\defeq \alpha ( q^{\ind}_i-c_{a^{\ind}_i})$, where $\alpha$ is a parameter which will be specified in the following. Then, define the menu $\PS:=\{(a_i, q_i)\}_{i\in\range{\tilde k}}$ where $a_i= a^{\ind}_i$ and $q_i= q^{\ind}_i-g_i$ if $q_i^\ind-c_{a_i^\ind}\ge\sqrt{2\delta}$, otherwise it is not included in $\PS$. Thus, $\tilde k\le k$, is the number of items for which $q_i^\ind-c_{a_i^\ind}\ge\sqrt{2\delta}$.
	
	Define $\hat\Theta$ all the types such that $q_{\ui(\theta|\PS^\ind)}^\ind-c_{a_{\ui(\theta|\PS^\ind)}}^\ind\ge\sqrt{2\delta}$.
	
	Now, we prove that for all types $\theta\in\hat\Theta$, after subtracting $g_i$ to all the payments a user of type $\theta$, the types that choose a different payment scheme, would choose a new payment scheme with ``large'' utility for the {service provider}. Formally:
	\begin{claim}\label{claim:ggeg}
		For any $\theta\in\hat\Theta$, it holds that $g_{\ui(\theta|\PS)}\ge g_{\ui(\theta\vert \PS^{\ind})}-\delta$.
	\end{claim}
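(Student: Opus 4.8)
The plan is to work in the pricing reformulation (which, by the time this claim is invoked, has already turned $\tPS$ into the indirect menu $\PS^{\ind}=\{(\aind_i,q^{\ind}_i)\}_{i}$) and, for a fixed $\theta\in\hat\Theta$, to compare the user's best response in the pruned reimbursed menu $\PS$ with their best response in $\PS^{\ind}$. Write $u_i(\theta)\defeq\mF_{a_i}^\top\mR_\theta$ for the user's gross value of item $i$, so that under $\PS^{\ind}$ item $i$ gives the user $u_i(\theta)-q^{\ind}_i$, while under $\PS$ it gives $u_i(\theta)-q_i=u_i(\theta)-q^{\ind}_i+g_i$ (for the opt-out option, which carries no payment and no cost, we set $g_\varnothing=0$, consistent with $q_\varnothing=c_{a_\varnothing}=0$). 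Set $j^\star\defeq\ui(\theta\mid\PS^{\ind})$ and $j\defeq\ui(\theta\mid\PS)$, identifying each item kept in $\PS$ with the corresponding item of $\PS^{\ind}$. Two facts I would record first: since $\theta\in\hat\Theta$ we have $q^{\ind}_{j^\star}-c_{a^{\ind}_{j^\star}}\ge\sqrt{2\delta}$, which is precisely the inclusion test defining $\PS$, so $j^\star$ survives into $\PS$ and in particular $j^\star\neq\varnothing$; and every item of $\PS$ is an item of $\PS^{\ind}$, so $j$, whenever $j\neq\varnothing$, is available under $\PS^{\ind}$ as well.

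Next I would split on whether the user opts out in $\PS$. If $j\neq\varnothing$: since $j$ is a best response to $\PS$ and $j^\star$ is available there, and since $j^\star$ is a best response to $\PS^{\ind}$ and $j$ is available there, I obtain
\[
u_j(\theta)-q^{\ind}_j+g_j\ \ge\ u_{j^\star}(\theta)-q^{\ind}_{j^\star}+g_{j^\star}
\qquad\text{and}\qquad
u_{j^\star}(\theta)-q^{\ind}_{j^\star}\ \ge\ u_j(\theta)-q^{\ind}_j.
\]
Summing the two inequalities cancels the $u$- and $q^{\ind}$-terms and leaves $g_j\ge g_{j^\star}$, which is in fact stronger than the claimed bound; the tie-breaking of $\ui$ in favour of the service provider does not affect these weak inequalities.

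The case $j=\varnothing$ is where the slack $-\delta$ is genuinely used. Here $g_j=g_\varnothing=0$. Opting out of $\PS$ means (by the selection function) that every kept item, $j^\star$ included, gives the user negative utility in $\PS$, i.e.\ $u_{j^\star}(\theta)-q^{\ind}_{j^\star}+g_{j^\star}<0$, so $g_{j^\star}<q^{\ind}_{j^\star}-u_{j^\star}(\theta)$. On the other hand, $\tPS$ is $\delta$-IR for the user, so the payment scheme $\tPS$ assigns to type $\theta$ leaves that type with utility $\ge-\delta$, and since $j^\star$ is a best response to $\PS^{\ind}$ over all schemes (that one included), $u_{j^\star}(\theta)-q^{\ind}_{j^\star}\ge-\delta$, hence $q^{\ind}_{j^\star}-u_{j^\star}(\theta)\le\delta$. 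Combining the two bounds gives $g_{j^\star}<\delta$, whence $g_j=0> g_{j^\star}-\delta$, which is exactly the claim.

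I do not expect a real obstacle: the argument is the usual ``add two incentive inequalities'' trick together with a single invocation of the user's $\delta$-IR guarantee, and it needs no property of the reimbursement parameter $\alpha$ (so it is legitimate to state the claim before $\alpha$ is fixed). The one point demanding care is the index bookkeeping — after the relabeling from the direct menu $\tPS$ to the indirect $\PS^{\ind}$ and the pruning from $\PS^{\ind}$ to $\PS$, one must ensure that $u_i$, $q^{\ind}_i$, $g_i$ and the test $q^{\ind}_i-c_{a^{\ind}_i}\ge\sqrt{2\delta}$ all refer to the same physical payment scheme, and that $j^\star\neq\varnothing$ for every $\theta\in\hat\Theta$ (which again is immediate from the definition of $\hat\Theta$).
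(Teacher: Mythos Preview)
Your proof is correct and follows essentially the same route as the paper: both combine the best-response inequality in $\PS$ (valid because $j^\star$ survives the pruning for $\theta\in\hat\Theta$) with a comparison in $\PS^{\ind}$, the paper invoking the $\delta$-IC of $\tPS$ where you instead invoke the exact optimality of $j^\star$ in $\PS^{\ind}$. As a minor remark, your $j=\varnothing$ case is in fact vacuous for $\theta\in\hat\Theta$: since $j^\star\neq\varnothing$ forces $u_{j^\star}(\theta)-q^{\ind}_{j^\star}\ge 0$ and $g_{j^\star}=\alpha(q^{\ind}_{j^\star}-c_{a^{\ind}_{j^\star}})\ge\alpha\sqrt{2\delta}>0$, item $j^\star$ already gives the user strictly positive utility in $\PS$, so opt-out cannot be selected.
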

	\begin{proof}
		By definition of $\ui{(\cdot)}$ we obtain that, for all $\theta$ and $j\in\range{\tilde k}$, the following holds:
		\[
		\mF_{a_{\ui(\theta|\PS)}}^\top\mR_\theta-q_{\ui(\theta|\PS)}\ge\mF_{a_{j}}^\top\mR_\theta-q_{j}.
		\]
		By plugging the definition of $q_i$ on both sides and by setting $j=\ui(\theta\vert\PS^{\ind})$, (which can be done as contract $j=\ui(\theta\vert\PS^{\ind})$ is included in $\tilde \PS$ as $q^\ind_{j}-c_{a_{j}^\ind}\ge\sqrt{2\delta}$ by assumption that $\theta\in\hat\Theta$) we obtain
		\begin{align*}
			\mF_{a_{\ui(\theta|\PS)}}^\top\mR_\theta- q^{\ind}_{\ui(\theta|\PS)}+g_{\ui(\theta|\PS)}&\ge \mF_{ \aind_{\ui(\theta| \PS^{\ind})}}^\top\mR_{\theta}- q^{\ind}_{\ui(\theta| \PS^{\ind})}+g_{\ui(\theta| \PS^{\ind})}\\
			&\ge \mF_{a_{\ui(\theta|\PS)}}^\top\mR_\theta- q^{\ind}_{\ui(\theta|\PS)}+g_{\ui(\theta| \PS^{\ind})} -\delta,
		\end{align*}
		where the last inequality follows from $\PS^{\ind}$ being $\delta$-IC.
		Simplifying $\mF_{a_{\ui(\theta|\PS)}}^\top\mR_\theta- q^{\ind}_{\ui(\theta|\PS)}$ concludes the proof of \Cref{claim:ggeg}.
	\end{proof}
	\noindent
	Now, consider the utility extracted through $\PS$ from a user of type $\theta\in\hat\Theta$.
	
	First, consider the types such that $\ui(\theta|\PS)\neq \ui(\theta| \PS^\ind)$:
	\begin{align}
		q_{\ui(\theta|\PS)}-c_{a_{\ui(\theta|\PS)}}&= q^{\ind}_{\ui(\theta|\PS)}-c_{ a^{\ind}_{\ui(\theta|\PS)}}-g_{\ui(\theta|\PS)}\nonumber \tag{Definition of $q_i$}\\
		&\geq q^{\ind}_{\ui(\theta|\PS)}-c_{ a^{\ind}_{\ui(\theta|\PS)}}-\alpha\nonumber \tag{\Cref{lem:qle1}}\\
		&= \frac{g_{\ui(\theta|\PS)}}{\alpha}-\alpha\nonumber\\
		&\ge \frac{g_{\ui(\theta| \PS^{\ind})}-\delta}{\alpha}-\alpha\nonumber\tag{\Cref{claim:ggeg}}\\
		&= q_{\ui(\theta|  \PS^{\ind})}^\ind-c_{ a^\ind_{\ui(\theta| \PS^{\ind})}}-\frac{\delta}{\alpha}-\alpha.\nonumber
	\end{align}
	Second consider, $\ui(\theta|\PS)=\ui(\theta| \PS^\ind)$, then we immediately have that 
	\begin{align}	
		q_{\ui(\theta|\PS)}-c_{a_{\ui(\theta| \PS)}}\ge q^\ind_{\ui(\theta| \PS^\ind)}-c_{a^\ind_{\ui(\theta| \PS^\ind)}}-\alpha,\nonumber
	\end{align}
	where we used again that $g_{i}\le \alpha$ (\Cref{lem:qle1}).
	
	So, for all types $\theta\in\hat\Theta$ it will happen that:
	\begin{equation}\label{eq:boundhatTheta}
		q_{\ui(\theta|\PS)}-c_{a_{\ui(\theta|\PS)}}\ge q_{\ui(\theta|  \PS^{\ind})}^\ind-c_{ a^\ind_{\ui(\theta| \PS^{\ind})}}-\frac\delta\alpha-\alpha
	\end{equation}
	
	On the other hand, for $\alpha=\sqrt{\delta}$, for all types $\theta\in\Theta\setminus\hat\Theta$, we have that they will choose a payment scheme $(q_i,a_i)\in\PS$ such that:
	\begin{align}
		q_i-c_{a_i}&=q_i^\ind-g_i-c_{a_i^\ind}\tag{Definition of $q_i$}\\
		&\ge \sqrt{2\delta}-g_i \tag{$\theta\in\hat\Theta$}\\
		&\ge \sqrt{2\delta}-\alpha\tag{Definition of $g_i$ and \Cref{lem:qle1}}\\
		&\ge 0\tag{$\alpha=\sqrt{\delta}$},
	\end{align}
	and thus for each of those types we will loose in utility at most $\sqrt{2\delta}$.
	Formally:
	\begin{equation}\label{eq:boundhatTheta2}
		\sum\limits_{\theta\in\Theta\setminus\hat\Theta} \xi_\theta\left[q_{\ui(\theta|\PS)}-c_{a_{\ui(\theta| \PS)}}\right]\ge \sum\limits_{\theta\in\Theta\setminus\hat\Theta}\xi_\theta \left[q^\ind_{\ui(\theta|\PS^\ind)}-c_{a^\ind_{\ui(\theta| \PS^\ind)}}\right] -\sqrt{2\delta}
	\end{equation}
	
	Now, we proceed to bound ${\OPT}$ w.r.t~$\widetilde\OPT$.
	
	\begin{align*}
		\OPT&:=\sum\limits_{\theta\in\Theta} \xi_\theta\left[q_{\ui(\theta|\PS)}-c_{a_{\ui(\theta| \PS)}}\right]\\
		&=\sum\limits_{\theta\in\hat\Theta} \xi_\theta\left[q_{\ui(\theta|\PS)}-c_{a_{\ui(\theta| \PS)}}\right]+\sum\limits_{\theta\in\Theta\setminus\hat\Theta} \xi_\theta\left[q_{\ui(\theta|\PS)}-c_{a_{\ui(\theta| \PS)}}\right]\\
		&\ge \sum\limits_{\theta\in\hat\Theta} \xi_\theta\left[q_{\ui(\theta|\PS)}-c_{a_{\ui(\theta| \PS)}}\right]+\sum\limits_{\theta\in\Theta\setminus\hat\Theta}\xi_\theta \left[q^\ind_{\ui(\theta|\PS^\ind)}-c_{a^\ind_{\ui(\theta| \PS^\ind)}}\right] -\sqrt{2\delta}\tag{\Cref{eq:boundhatTheta2}}\\
		&\ge \sum\limits_{\theta\in\hat\Theta} \xi_\theta\left[q^\ind_{\ui(\theta|\PS^\ind)}-c_{a^\ind_{\ui(\theta| \PS^\ind)}}\right]-2\sqrt{\delta}+\sum\limits_{\theta\in\Theta\setminus\hat\Theta}\xi_\theta \left[q^\ind_{\ui(\theta|\PS^\ind)}-c_{a^\ind_{\ui(\theta| \PS^\ind)}}\right] -\sqrt{2\delta}\tag{\Cref{eq:boundhatTheta} and $\alpha=\sqrt{\delta}$}\\
		&=\widetilde\OPT-2\sqrt{\delta}-\sqrt{2\delta}\\
		&\ge\widetilde{\OPT}-4\sqrt{\delta} 
	\end{align*}

	we obtain that the service provider's  utility for the menu $\PS$ is at least $\widetilde{OPT}- O(\sqrt{\delta})$ for each realized type $\theta$. This proves the first point of the statement. 
	
	The second point of the statement follows by \Cref{lem:Qeps}, by observing that, for each $i\in\range{k}$, $q^{\ind}_i\in Q_{\aind_i}^\epsilon$ by assumption, and that $q_i=q_i^{\ind}-g_i\ge q_i^{\ind}-\alpha\ge 0$, where the last inequality come from the fact that $q_i$ is included only if $q_i^\ind-c_{a_i^\ind}\ge\sqrt{2\delta}$, and thus $q_i^\ind-\alpha\ge \sqrt{2\delta}-\sqrt{\delta}\ge 0$. Therefore, $q_i\in Q_{a_i^{\ind}}^{\epsilon+\alpha}$ and LL constraints are satisfied.
	
\end{proof}

\lemmacsmooth*
\begin{proof}
	Fix any $\omega\in\Omega$. By \Cref{ass:smooth} there exists an action $a_{\omega}$ such that $F_{a_{\omega}}(\omega)\ge c$.
	\begin{align*}
		q&=\mF_a^\top\vp\\
		&\ge  \mF_a^\top\vp-c_{a}\tag{$c_a\ge 0$}\\
		&\ge  \mF_{a_\omega}^\top\vp-c_{a_\omega}\tag{$p\in\Pi_a$}\\
		&=\sum\limits_{\omega'\neq\omega} F_{a_\omega}(\omega')p(\omega')+ F_{a_\omega}(\omega)p(\omega)-c_{a_\omega}\\
		&\ge c p(\omega)-1\tag{$F_{a_\omega}(\omega)\ge c$ and $c_{a_\omega}\le 1$}
	\end{align*}
	rearranging we obtain that for all $\omega\in\Omega$ we have $p(\omega)\le (1+q)/c$ which concludes the proof.
\end{proof}

\lemmaCB*
\begin{proof}
	We are going to prove the first part of the statement as the second can be derives in a similar fashion. The statement easily follows from \Cref{ass:Lip} and Holder inequality:
	\begin{align}
		|\mF_{a'}^\top\mR_\theta- \mF_a^\top\mR_\theta|\le \|\mF_{a'}-\mF_{a}\|_1\cdot\|\mR_\theta\|_\infty\le |a'-a|\cdot\|\mR_\theta\|_\infty\le \delta.
	\end{align}
	which concludes the proof.
\end{proof}

\lemmaQChangeA*
\begin{proof}
	Take $q\in Q_a \cap [0,1]$. By the definition of   $Q_a$, there exists a payment scheme $\vp$ such that \[ \mF_a^\top\vp-c_a\ge \max\limits_{\hat a\in\cA}\mF_{\hat a}^\top\vp-c_{\hat a}.\]
	Hence, by the Lipschitz continuity of the costs and distributions, we have that
	\[
	\mF_{a'}^\top\vp-c_{a'}  \ge \mF_{a}^\top\vp-c_{a} - \|\mF_{a'}-\mF_{a}\|_1\|\vp\|_\infty - \delta \ge \mF_{a}^\top\vp-c_{a} - \|\vp\|_\infty\delta - \delta \ge \max\limits_{\hat a\in\cA}\mF_{\hat a}^\top\vp-c_{\hat a} - \delta - \|\vp\|_\infty\delta.
	\]
	Moreover, $\|\vp\|_\infty\le \frac{1+q}{c}$ by \cref{lem:csmooth}, and $q \in[0, 1]$. Thus, can conclude that
	\[
	\mF_{a'}^\top\vp-c_{a'}\ge \max\limits_{\hat a\in\cA}\mF_{\hat a}^\top\vp-c_{\hat a} - \delta (1+2/c).
	\]
	Hence, 
	\[q \in Q_{a'}^{\delta (1+2/c)}.  \]
	This proves $Q_a \subseteq Q_{a'}^{\delta (1+\frac{2}{c})}(\cA_\delta) \cap [0,1]$.
	
	Then, we prove the second part of the lemma.
	Take an $q \in Q_{a'}^{\delta (1+\frac{2}{c})} \cap [0,1]$.
	By the definition of   $Q_{a'}^{\epsilon (1+\frac{2}{c})}$, there exists a payment scheme $\vp$ such that \[ \mF_{a'}^\top\vp-c_{a'}\ge \max\limits_{\hat a\in\cA_\delta}\mF_{\hat a}^\top\vp-c_{\hat a} - \delta (1+\frac{2}{c}) .\]
	Hence, by the Lipschitz continuity of the costs and distributions, we have that
	\[
	\mF_{a'}^\top\vp-c_{a'}  \ge \max\limits_{\hat a\in\cA_\delta}\mF_{\hat a}^\top\vp-c_{\hat a} - \delta (1+\frac{2}{c}) \ge  \max\limits_{\hat a\in\cA}\mF_{\hat a}^\top\vp-c_{\hat a} - \delta (1+\frac{2}{c}) - \delta \lVert \vp\rVert_{\infty}-\delta ,
	\]
	were the second inequality follows observing that each action $\hat a\in \cA$ has distance at most $\delta$ from an action in $\cA_\delta$. 
	Moreover, $\|\vp\|_\infty\le \frac{1+q}{c}$ by \cref{lem:csmooth}, and $q \in[0, 1]$. Thus, can conclude that
	\[
	\mF_{a'}^\top\vp-c_{a'}\ge \max\limits_{\hat a\in\cA}\mF_{\hat a}^\top\vp-c_{\hat a} - 2\delta (1+\frac{2}{c}).
	\]
		Hence, 
	\[q \in Q_{a'}^{2\delta (1+2/c)}.  \]
	This proves $Q_{a'}^{\delta (1+\frac{2}{c})}(\cA_\delta) \cap [0,1]\subset Q_{a'}^{2\delta (1+2/c)} \cap [0,1] $.
	
\end{proof}

\lemmaRealxCNT*
\begin{proof}
	First, we show how to modify $\PS^\star$ and construct a feasible solution to Program~\ref{prog:pricingcntdiscrete}, without changing the utility too much. Consider the menu $\hat\PS^\star:=\{(\hat a^\star_\theta, \hat q^\star_\theta)\}_\theta$ where $\hat a^\star_\theta$ is the projection of $a^\star_\theta$ onto $\cA_\delta$, \ie the closest element to $\hat a^\star_\theta$ in $\cA_\delta$,  and $\hat q^\star_\theta=q^\star_\theta$.
	Then it is easy to see that:
	\[
	V(\hat\PS^\star)\ge V(\PS^\star)-\delta,
	\]
	by the Lipschitz continuity of the costs. 
	Now, we prove that $\hat\PS^\star:=\{(\hat a^\star_\theta, \hat q^\star_\theta)\}_\theta$ is feasible for Program~\ref{prog:pricingcntdiscrete}. 
	
	Take a $\theta \in \Theta$. First, by construction $\hat a^\star_\theta\in\cA_\delta$. Moreover, for any $\theta' \in \Theta$:
	\[
	\overline{\mR}^\theta_{\hat a^\star_\theta}-\hat q^\star_{\theta} \ge {\mR}^\theta_{ a^\star_\theta}- q^\star_{\theta} \ge {\mR}^\theta_{ a^\star_{\theta'}}- q^\star_{\theta'} \ge \underline{\mR}^\theta_{\hat a^\star_{\theta'}}-\hat q^\star_{\theta'},
	\]
	where the first and third equality follows by \cref{lm:CB}.
	Similarly,
	\[  	\overline{\mR}^\theta_{\hat a^\star_\theta}-\hat q^\star_{\theta} \ge  {\mR}^\theta_{ a^\star_\theta}- q^\star_{\theta}  \ge0    \]
	Finally, notice that by the IR of agent $\theta$ it must be the case that $\hat q^\star_{\theta}=q^\star_{\theta}\le1$. Hence $\hat q^\star_\theta\in Q_{ a^\star_\theta} \cap [0,1] \subseteq Q^{\delta(1+2/c)}_{ \hat a^\star_\theta}$ by Lemma~\ref{lem:QChangeA}.

	Thus, $V(\hat\PS^\star)\le V(\tilde\PS)$ as $\hat\PS^\star$ is feasible for \Cref{prog:pricingcntdiscrete} while $\tilde\PS$ is its optimum.
	Hence, \[V(\tilde\PS) \ge V(\hat\PS^\star)\ge  V(\PS^\star)-\delta,
	\]
	which concludes the proof.
\end{proof}
\section{Menus of Randomized Payment Schemes (Section \ref{sec:randomized})}

\propositiondeterministicBad*
\begin{proof}
	Without loss of generality, let $n$ be a given even number, and consider the following instance:\footnote{The construction can be easily extended to an odd $n$ adding a dummy type.}  
	\begin{itemize}[label=$\diamond$]
		\item The set of types is  $\Theta\coloneqq\{\theta_{i,j}\}_{i \in \range{n/2}, j \in \range{2}}$;
		\item The set of actions is $\cA\coloneqq\{a_{i,j}\}_{i \in \range{n/2}, j \in \range{2}}$;
		\item The set of outcomes is $\Omega\coloneqq\{\omega_{i,j}\}_{i \in \range{n/2}, j \in\range{2}}$;
		\item The costs are $c_{a_{i,j}}=0$ for all $a_{i,j}\in\cA$;
		\item The outcome distributions are such that $\mF_{a_{i,j}}$ deterministically induces $\omega_{i,j}$  for all $a_{i,j}\in\cA$;
		\item The rewards are $R_{\theta_{i,j}}(\omega_{k,z})=0$ for each $k\neq i$ and $z\neq j$ and $R_{\theta_{i,j}}(\omega_{k,z})=2^{-i}$ otherwise;
		\item The type distribution $\vxi$ is such $\xi_{\theta_{i,j}}=\frac{2^i}{\sum_{k\in\range{n/2}}2^{k+1}}$. 
	\end{itemize}

	\xhdr{Lower Bound for Randomized Menus.} Now we show that there exists a menu of randomized contracts $\Phi$ with utility at least $ \frac{n}{2}\frac{1}{\sum_{k \in \range{n/2}} 2^{k+1} }$.
	
	For each type $\theta_{i,j}\in\Theta$, we set $\phi^{\theta_{i,j}}_{a_{i,1}}=\phi^{\theta_{i,j}}_{a_{i,2}}=\frac{1}{2}$, and $\vp_{\theta_{i,j},a_{i,1}}$ as the payment scheme with $p_{\theta_{i,j},a_{i,1}}(\omega_{i,1})= 2^{-i-1}$, and payment $0$ on all the other outcomes. Moreover, we set $\vp_{\theta_{i,j},a_{i,2}}$ as the payment scheme with $p_{\theta_{i,j},a_{i,2}}(\omega_{i,2})= 2^{-i-1}$, and payments $0$ on all the other outcomes.
	Each user's type $\theta_{i,j}$ is incentivized to report their type since, by doing so, their expected utility is $2^{-i}-2^{-i -1}=2^{-i-1}$. 
	On the other hand, reporting a type $\theta_{i,k}$, with $k\neq j$, they will get exactly the same randomized contract and thus the same utility. 
	Differently, reporting another type $\theta_{j,k}$ with $j\neq i$, they will get an expected utility of at most $\frac{1}{2} (2^{-i}-2^{-k-1})< 2^{-i-1}$.
	
	It is easy to see that the service provider always plays the action that induces the largest payment, as all the actions have cost $0$ and thus its IC constraints are satisfied.
	Hence, we can conclude that the expected service provider's utility it at least $\sum_{i\in \range{n/2},j \in \range{2}} \frac{2^i}{\sum_{k \in \range{n/2}} 2^{k+1} } 2^{-i-1}=\frac{n}{2}\frac{1}{\sum_{k \in \range{n/2}} 2^{k+1} }$.
	
	\xhdr{Upper Bound for Deterministic Menus.} We conclude the proof showing that any deterministic menu guarantees an service provider's utility at most $8 \frac{1}{\sum_{k \in \range{n/2}} 2^{k+1}}$.
	
	Consider any deterministic menu $\PS=\{(a_\theta, \vp_\theta)\}_{\theta \in \Theta}$.
	For each $j\in \range{2}$, let $\cA^j\coloneqq\{a_{i,j}\}_{i \in \range{n/2}}$. Consider a $j\in\range{2}$, and the set of types $\Theta^j$ that such that $a_{\theta}\in \cA^j$.
	We  prove that all the types in $\Theta^j$ receives the same expected payment $x^j$.
	Indeed, given two types $\theta,\theta' \in \Theta^j$, thanks to the user IC constraints, we have that  
	\[ 	\mF_{a_{\theta}} [\mR_\theta-\vp_\theta]\ge 	\mF_{a_{\theta'}} [\mR_{\theta}-\vp_{\theta'}], \]
	and 
	\[ 	\mF_{a_{\theta'}} [\mR_{\theta'}-\vp_{\theta'}]\ge 	\mF_{a_{\theta}} [\mR_{\theta'}-\vp_{\theta}]. \]
	By the definition of $\Theta^j$, it holds that $\mF_{a_{\theta}} \mR_\theta=\mF_{a_{\theta'}} \mR_\theta$ and $\mF_{a_{\theta'}} \mR_{\theta'}=\mF_{a_{\theta}} \mR_{\theta'}$, and we get that $\mF_{a_{\theta}} \vp_\theta=\mF_{a_{\theta'}} \vp_{\theta'}=x^j$.
	
	Let $\theta_{i,j}$ be the type with largest index $i$ in $\Theta^j$, and call such index $i(j)$. The IR constraint of this type implies that $x^j\le 2^{-i(j)}$.
	Hence, the overall service provider's utility from types in $\Theta^j$ is at most
	\begin{align}
	\sum_{\theta \in \Theta^j} \xi_{\theta}  \mF_{a_{\theta}} [\vp_\theta-c_{a_{\theta}}]     & \le \sum_{i\le i(j)} \sum_{l \in \range{2}} \frac{2^{i}}{\sum_{k \in \range{n/2}} 2^{k+1} } 2^{-i(j)}\notag{}\\
	& \le {4}(2^{i(j)}-1)2^{-i(j)}\frac{1}{\sum_{k \in \range{n/2}} 2^{k+1}}\tag{$\sum_{i=1}^n2^i=2(2^n-1)$}\\
	&\le 4\frac{1}{\sum_{k \in \range{n/2}} 2^{k+1}}.\notag{}
	\end{align}
	Summing over $j \in \range{2}$, we get that the overall service provider's utility is at most $8 \frac{1}{\sum_{k \in \range{n/2}} 2^{k+1}}$, concluding the proof.
\end{proof}

\lemmarandomrelax*
\begin{proof}
	To prove the statement, it is sufficient to show that for each feasible solution $\Phi=\{( \phi^{\theta},(\vp_{\theta,a})_{a \in A})\}_{\theta\in \Theta}$ to Program~\ref{pr:quadratic}, there exists a feasible solution $(\vx,\vphi)$ to LP~\ref{lp:relax} with at least the same value.
	We simply build such solution letting $x_{\theta,a,\omega}= \phi^{\theta}_a  p_{\theta,a}(\omega)$, and keeping the same $\vphi$. It is easy to see that this solution has the same value and satisfies all the constraints. 
\end{proof}

\regularlemma*

\begin{proof}
	Given a $\vphi$ and an $\vx$, let $B(\vphi,\vx)\subseteq \Theta \times \cA$ be the set of $(\theta,a)$ such that $\phi^\theta_a=0$ and $\sum_{\omega\in\Omega}x_{\theta,a,\omega}>0$.
 	Define: 
	\begin{itemize}
		\item $A(\vphi,\vx|\theta)\coloneqq  \{a : (\theta,a)\in B(\vphi,\vx)\} $;
		\item $\Theta(\vphi,\vx|a)\coloneqq\{\theta: (\theta,a)\in B(\vphi,\vx)\}$;
		\item $\Theta(\vphi,\vx)\coloneqq\cup_{a\in\cA}\Theta(\vphi,\vx|a)$. 
	\end{itemize}
	
	First, we show that for each type $\theta \in \Theta(\vphi,\vx)$ it must be the case that $\phi^\theta_\varnothing<1$. Indeed, if  $\phi^\theta_\varnothing<1$, by Constraint~\eqref{eq:relax3} it holds
	\[ x_{\theta,a,\omega}=0  \quad \forall a \in \cA, \omega \in \Omega, \]
	implying $\theta \notin \Theta(\vphi,\vx)$.
	Then, for any $\theta\in\Theta(\vphi,\vx)$, define $\hat a(\theta)$ as an arbitrary action such that $\phi^\theta_{\hat a(\theta)}>0$. As we show above, such an action exists.
	
	Let $(\vx,\vphi)$ be a feasible solution to Program~\ref{lp:relax}. We build a regular solution $(\bar \vx,\bar \vphi)$ as follows. 
	For each $\theta \in \Theta,a \in \cA,\omega\in \Omega$, we set 
	\[
	\bar x_{\theta, a,\omega}=x_{\theta,a,\omega}\  \mathbb{I}\left[a \notin A(\vphi,\vx|\theta)\right]+\sum_{a'\in A(\vphi,\vx|\theta),\omega' \in \Omega}F_{a'}(\omega')x_{\theta,a',\omega'}\  \mathbb{I}[a=\hat a(\theta)].
	\]
	Moreover, we set  $\bar \vphi =\vphi$.
	Straightforward computations let us conclude the following claim:
	\begin{claim}\label{claim:randomreconver}
		Let $(\bar \vx, \bar \vphi)$ be defined as above. Then, the following holds:
		\begin{enumerate}
			\item $\bar x_{\theta,a,\omega}=0$ for all $(\theta, a)\in B(\vphi,\vx)$, and $\omega\in\Omega$;\label{claim:randomreconver1}
			\item $\bar x_{\theta,a,\omega}=x_{\theta, a,\omega}$ for all $\theta\notin \Theta(\vphi,\vx)$, and $a\in\cA$, $\omega\in\Omega$;\label{claim:randomreconver2}
			\item  $\bar x_{\theta,a,\omega}=x_{\theta, a,\omega}$ for all $\theta\in \Theta(\vphi,\vx)$, $a\notin A(\vphi,\vx|\theta)$, and $a\neq\hat a(\theta)$;\label{claim:randomreconver3}
			\item  $\bar x_{\theta,a,\omega}=x_{\theta, a,\omega}+\sum_{a'\in A(\vphi,\vx|\theta),\omega' \in \Omega}F_{a'}(\omega')x_{\theta,a',\omega'}$ for all $\theta\in \Theta(\vphi,\vx)$, $a\notin A(\vphi,\vx|\theta)$, and $a=\hat a(\theta)$.\label{claim:randomreconver4}
		\end{enumerate}
	\end{claim}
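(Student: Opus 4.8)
The plan is to prove the four items by a direct case analysis on the two indicators $\mathbb{I}[a\notin A(\vphi,\vx|\theta)]$ and $\mathbb{I}[a=\hat a(\theta)]$ that appear in the definition of $\bar x_{\theta,a,\omega}$. Before that I would record two elementary facts. First, $\theta\in\Theta(\vphi,\vx)\iff A(\vphi,\vx|\theta)\neq\emptyset$, which is immediate from the definitions of $B$, $A(\cdot|\theta)$ and $\Theta(\cdot)$. Second, $\hat a(\theta)$ is well defined for every $\theta\in\Theta(\vphi,\vx)$: for such $\theta$ we showed $\phi^\theta_\varnothing<1$, hence $\sum_{a\in\cA}\phi^\theta_a>0$, so an action with $\phi^\theta_{\hat a(\theta)}>0$ exists; in particular $\hat a(\theta)\notin A(\vphi,\vx|\theta)$, since every action in $A(\vphi,\vx|\theta)$ has probability $0$ under $\phi^\theta$.

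With these in hand each item is a one-line verification. For item~\ref{claim:randomreconver1}: if $(\theta,a)\in B(\vphi,\vx)$ then $a\in A(\vphi,\vx|\theta)$, so the first indicator is $0$; and since $\phi^\theta_a=0<\phi^\theta_{\hat a(\theta)}$ we have $a\neq\hat a(\theta)$, so the second indicator is $0$; hence $\bar x_{\theta,a,\omega}=0$. For item~\ref{claim:randomreconver2}: $\theta\notin\Theta(\vphi,\vx)$ gives $A(\vphi,\vx|\theta)=\emptyset$, so the first indicator is $1$ for all $a$ and the sum in the second term is over the empty set, whence $\bar x_{\theta,a,\omega}=x_{\theta,a,\omega}$. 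For items~\ref{claim:randomreconver3} and~\ref{claim:randomreconver4}: here $\theta\in\Theta(\vphi,\vx)$ and $a\notin A(\vphi,\vx|\theta)$, so the first indicator is $1$ and the first term equals $x_{\theta,a,\omega}$; the second term is $0$ when $a\neq\hat a(\theta)$, giving item~\ref{claim:randomreconver3}, and equals $\sum_{a'\in A(\vphi,\vx|\theta),\,\omega'\in\Omega}F_{a'}(\omega')x_{\theta,a',\omega'}$ when $a=\hat a(\theta)$ (a case that is consistent precisely because $\hat a(\theta)\notin A(\vphi,\vx|\theta)$), giving item~\ref{claim:randomreconver4}.

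There is no genuine obstacle here; the only subtleties are the well-definedness of $\hat a(\theta)$ and a consistent use of the empty-sum convention, which is why I would isolate both observations at the outset. With the claim in hand, the remainder of the proof of Lemma~\ref{lm:regular} is bookkeeping: items~\ref{claim:randomreconver1}--\ref{claim:randomreconver4} show that $(\bar\vx,\bar\vphi)$ is regular and that, for every $\theta$, the aggregate $\sum_{a\in\cA}\sum_{\omega\in\Omega}F_a(\omega)\bar x_{\theta,a,\omega}$ equals $\sum_{a\in\cA}\sum_{\omega\in\Omega}F_a(\omega)x_{\theta,a,\omega}$ (the mass stripped from the pairs in $B(\vphi,\vx)$, where $\bar x$ now vanishes, is re-injected at $(\theta,\hat a(\theta))$, using $\sum_{\omega\in\Omega}F_{\hat a(\theta)}(\omega)=1$), which is what preserves feasibility in LP~\ref{lp:relax} together with the objective value.
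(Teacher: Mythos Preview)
Your proposal is correct and follows the same approach as the paper, which simply states that the claim follows by applying the definition of $\bar\vx$; you have spelled out the case analysis on the two indicators in more detail than the paper does, but the argument is identical in substance.
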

	\begin{proof}
		To prove the claim, it is sufficient to apply the definition of $\bar \vx$. 
	\end{proof}
	
	Moreover, we also prove the following claim that states that changing the payments from $\vx$ to $\bar \vx$ does not affect the expected payment for any type $\theta$:
	
	\begin{claim}\label{claim:expectedrandom2}
		For all $\theta\in\Theta$, it holds:
		\[
		\sum\limits_{a\in\cA,\omega\in\Omega} F_a(\omega)\bar x_{\theta,a,\omega}=\sum\limits_{a\in\cA,\omega\in\Omega} F_a(\omega)x_{\theta,a,\omega}.
		\]
	\end{claim}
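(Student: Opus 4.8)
The plan is a straightforward bookkeeping computation. Fix a type $\theta\in\Theta$. If $\theta\notin\Theta(\vphi,\vx)$, then by the second item of \Cref{claim:randomreconver} we have $\bar x_{\theta,a,\omega}=x_{\theta,a,\omega}$ for all $a\in\cA$ and $\omega\in\Omega$, so the asserted equality is trivial. Hence I would assume $\theta\in\Theta(\vphi,\vx)$ from now on. Recall that in this case $\hat a(\theta)$ is picked so that $\phi^\theta_{\hat a(\theta)}>0$ --- which in particular forces $\hat a(\theta)\in\cA$, since $\phi^\theta_\varnothing<1$ --- whereas every $a\in A(\vphi,\vx|\theta)$ has $\phi^\theta_a=0$. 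Consequently $\hat a(\theta)\notin A(\vphi,\vx|\theta)$, and the three events ``$a\in A(\vphi,\vx|\theta)$'', ``$a=\hat a(\theta)$'', and ``$a\notin A(\vphi,\vx|\theta)$ and $a\neq\hat a(\theta)$'' partition $\cA$; I would split the sum $\sum_{a\in\cA,\omega\in\Omega}F_a(\omega)\bar x_{\theta,a,\omega}$ along this partition.

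On the first block, the first item of \Cref{claim:randomreconver} gives $\bar x_{\theta,a,\omega}=0$, so it contributes $0$. On the third block, the third item gives $\bar x_{\theta,a,\omega}=x_{\theta,a,\omega}$, so it contributes $\sum_{a\notin A(\vphi,\vx|\theta),\,a\neq\hat a(\theta)}\sum_{\omega\in\Omega}F_a(\omega)x_{\theta,a,\omega}$. For the single action $a=\hat a(\theta)$, the fourth item gives
\[
\sum_{\omega\in\Omega}F_{\hat a(\theta)}(\omega)\bar x_{\theta,\hat a(\theta),\omega}=\sum_{\omega\in\Omega}F_{\hat a(\theta)}(\omega)x_{\theta,\hat a(\theta),\omega}+\Bigl(\sum_{\omega\in\Omega}F_{\hat a(\theta)}(\omega)\Bigr)\!\!\sum_{a'\in A(\vphi,\vx|\theta),\,\omega'\in\Omega}\!\!F_{a'}(\omega')x_{\theta,a',\omega'},
\]
and here I would use that $\sum_{\omega\in\Omega}F_{\hat a(\theta)}(\omega)=1$, since $\mF_{\hat a(\theta)}\in\Delta(\Omega)$, so the bracketed factor equals $1$.

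Adding the three contributions, the extra mass $\sum_{a'\in A(\vphi,\vx|\theta),\,\omega'\in\Omega}F_{a'}(\omega')x_{\theta,a',\omega'}$ absorbed into $\hat a(\theta)$ is exactly the $\mF$-weighted payment that was zeroed out on the actions of $A(\vphi,\vx|\theta)$, so the three pieces reassemble into $\sum_{a\in\cA,\omega\in\Omega}F_a(\omega)x_{\theta,a,\omega}$, which is precisely the claim. There is no genuine obstacle here; the only points requiring (minor) care are that $\hat a(\theta)$ really lies outside $A(\vphi,\vx|\theta)$ --- so the redistributed mass is not double counted --- and the elementary identity $\sum_{\omega}F_{\hat a(\theta)}(\omega)=1$, both of which are immediate from the definitions and from $\mF$ being left stochastic.
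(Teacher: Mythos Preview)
Your proof is correct and follows essentially the same route as the paper: both arguments split the sum according to whether $a\in A(\vphi,\vx|\theta)$, $a=\hat a(\theta)$, or neither, use $\sum_\omega F_{\hat a(\theta)}(\omega)=1$ to collapse the redistributed mass, and reassemble the pieces. Your version is slightly more careful in separating out the trivial case $\theta\notin\Theta(\vphi,\vx)$ and in explicitly checking $\hat a(\theta)\notin A(\vphi,\vx|\theta)$, but the substance is identical.
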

	
	\begin{proof}
		Consider the following chain of equalities:
		\begin{align*}
			\sum\limits_{a\in\cA,\omega\in\Omega}& F_a(\omega)\bar x_{\theta,a,\omega}\\
			&=\sum\limits_{a\in\cA,\omega\in\Omega} F_a(\omega)\left[x_{\theta,a,\omega} \mathbb{I}\left[a \notin A(\vphi,\vx|\theta)\right]+\sum_{a'\in A(\vphi,\vx|\theta),\omega' \in \Omega}F_{a'}(\omega')x_{\theta,a',\omega'} \mathbb{I}[a=\hat a(\theta)]\right]\\
			&=\sum\limits_{a\in\cA,\omega\in\Omega} F_a(\omega)x_{\theta,a,\omega} \mathbb{I}\left[a \notin A(\vphi,\vx|\theta)\right]+\sum_{\omega\in\Omega}F_{\hat a(\theta)}(\omega)\sum_{a'\in A(\vphi,\vx|\theta),\omega' \in \Omega}F_{a'}(\omega')x_{\theta,a',\omega'}\\
			&=\sum\limits_{a\not\in A(\vphi,\vx|\theta),\omega\in\Omega} F_a(\omega)x_{\theta,a,\omega} +\sum_{a'\in A(\vphi,\vx|\theta),\omega' \in \Omega}F_{a'}(\omega')x_{\theta,a',\omega'}\\
			&=\sum\limits_{a\in \cA,\omega\in\Omega} F_a(\omega)x_{\theta,a,\omega}
		\end{align*}
		concluding the proof of the claim.
	\end{proof}

	Then, we show that $(\bar \vphi,\bar \vx)$  is indeed regular. Take any $(\theta,a)\in B(\vphi,\vx)$. We have that $\bar \phi^{\theta}_a= \phi^{\theta}_a=0$, and \Cref{claim:randomreconver}-(\ref{claim:randomreconver1}) assures that $\bar x_{\theta,a,\omega}=0$ for all $\omega \in \Omega$. Thus, $(\bar \vx, \vphi)$ is regular.
	
	Finally, we are left to show that $(\bar \vx,\bar \vphi)$ satisfies Constraints~\eqref{eq:relax2} to \eqref{eq:relax3}.
	
	\xhdr{Constraints~\eqref{eq:relax2}.}
	Consider a $\theta\in \Theta(\vphi,\vx)$, an $a\notin A(\vphi,\vx|\theta)$ such that $a=\hat a(\theta)$, and an $a''\in \cA$. Here,  \Cref{claim:randomreconver}-\eqref{claim:randomreconver4} applies. Thus:
	\[
	\bar x_{\theta,a,\omega}=x_{\theta, a,\omega}+\sum_{a'\in A(\vphi,\vx|\theta),\omega' \in \Omega}F_{a'}(\omega')x_{\theta,a',\omega'}
	\]
	
	Now, consider the following inequalities:
	\begin{align}
		\sum_{\omega\in\Omega}& F_{a}(\omega) (\bar x_{\theta,a,\omega}-\phi_a^\theta c_a)=\sum_{\omega\in\Omega} F_{a}(\omega) \left(x_{\theta, a,\omega}+		\sum_{a'\in A(\vphi,\vx|\theta),\omega' \in \Omega}F_{a'}(\omega')x_{\theta,a',\omega'}-\phi_a^\theta c_a\right)\notag{}\\
		\ge& \sum_{\omega\in\Omega} F_{a''}(\omega) (x_{\theta, a,\omega} -\phi_a^\theta c_{a''})+ \sum_{\omega\in\Omega} F_{a}(\omega) \left(\sum_{a'\in A(\vphi,\vx|\theta),\omega' \in \Omega}F_{a'}(\omega')x_{\theta,a',\omega'}\right)\tag{$\vx$ is feasable}\\
		=&\sum_{\omega\in\Omega} F_{a''}(\omega) (x_{\theta, a,\omega} -\phi_a^\theta c_{a''}) + \sum_{a'\in A(\vphi,\vx|\theta),\omega' \in \Omega}F_{a'}(\omega')x_{\theta,a',\omega'}\tag{$\mF_{a}\in\Delta(\cA)$}\\
		=&\sum_{\omega\in\Omega} F_{a''}(\omega) (x_{\theta, a,\omega} -\phi_a^\theta c_{a''})+\left(\sum_{\omega\in\Omega} F_{a''}(\omega)\right) \left(\sum_{a'\in A(\vphi,\vx|\theta),\omega' \in \Omega}F_{a'}(\omega')x_{\theta,a',\omega'}\right)\tag{$\mF_{a''}\in\Delta(\cA)$}\\
		=&\sum_{\omega\in\Omega} F_{a''}(\omega) \left(x_{\theta, a,\omega}+
		\sum_{a'\in A(\vphi,\vx|\theta),\omega' \in \Omega}F_{a'}(\omega')x_{\theta,a',\omega'}-\phi_a^\theta c_{a''}\right)\notag{}\\
		=&\sum_{\omega\in\Omega} F_{a''}(\omega) (\bar x_{\theta, a,\omega}-\phi_a^\theta c_{a''}).\notag{}
	\end{align}
	
	Thus, $\sum_{\omega\in\Omega} F_{a}(\omega) \bar x_{\theta,a,\omega}\ge \sum_{\omega\in\Omega} F_{a''}(\omega) \bar x_{\theta,a,\omega}$.
	
	Consider any other tuple $(\theta,a,a'')$ not yet considered. In this case, we are in one of the cases \eqref{claim:randomreconver1} to \eqref{claim:randomreconver3} of \cref{claim:randomreconver}.
	Hence, we either have $\bar x_{\theta,a,\omega}=0$ (and $\bar \phi^\theta_{a}=0$), or $x_{\theta,a,\omega}=\bar x_{\theta,a,\omega}$. It is easy to see that this implies that the constraint is satisfied.
	
	Hence, we show that, for all $a,a''\in\cA$ and $\theta\in\Theta$, Constraints~\eqref{eq:relax2} is satisfied by $(\bar\vx, \bar \vphi)$.

	\xhdr{Constraints~\eqref{eq:relax1}.}
	By using \Cref{claim:expectedrandom2} twice, and $\bar\vphi=\vphi$, we can easily conclude that for each $\theta,\theta'\in \Theta$:
	\begin{align}
		\sum_{\omega\in\Omega}\sum_{a\in\cA}  F_{a}(\omega) (\bar \phi^{\theta}_a R_\theta(\omega)- \bar x_{\theta,a,\omega})&=\sum_{\omega\in\Omega}\sum_{a\in\cA}  F_{a}(\omega) ( \phi^{\theta}_a R_\theta(\omega)- x_{\theta,a,\omega})\tag{by \Cref{claim:expectedrandom2}}\\
		&\ge \sum_{\omega\in\Omega}\sum_{a\in\cA}  F_{a}(\omega) ( \phi^{\theta'}_a R_\theta(\omega)- x_{\theta',a,\omega})\tag{$(\vphi,\vx)$ is feasible}\\
		&\ge \sum_{\omega\in\Omega}\sum_{a\in\cA}  F_{a}(\omega) ( \bar \phi^{\theta'}_a R_\theta(\omega)- \bar x_{\theta',a,\omega}).\tag{by \Cref{claim:expectedrandom2}},
	\end{align}
	
	\xhdr{Constraints~\eqref{eq:relax3}.}
	By using \Cref{claim:expectedrandom2} we can easily conclude the following, which holds for all $\theta\in\Theta$:
	\begin{align}
		\sum_{\omega\in\Omega}\sum_{a\in\cA}  F_{a}(\omega) (\bar \phi^{\theta}_a R_\theta(\omega)- \bar x_{\theta,a,\omega})&=\sum_{\omega\in\Omega}\sum_{a\in\cA}  F_{a}(\omega) ( \phi^{\theta}_a R_\theta(\omega)- x_{\theta,a,\omega})\tag{by \Cref{claim:expectedrandom2}}\\
		&\ge 0\tag{$(\vphi,\vx)$ is feasible},
	\end{align}
	where we also used that $\bar\vphi=\vphi$.
	
	\xhdr{Objective function.}
	
	Similarly to the previous constraints, we can easily prove that the objective function of $(\bar\phi,\bar \vx)$ coincides with the one of $(\vphi,\vx)$ Formally,
	
	\begin{align*}
		\sum_{\theta\in\Theta}\sum_{a\in\cA}\sum_{\omega\in\Omega}\xi_\theta F_a(\omega)\left(\bar x_{\theta,a,\omega}-\bar \phi_a^\theta c_{a}\right)&=\sum_{\theta\in\Theta}\sum_{a\in\cA}\sum_{\omega\in\Omega}\xi_\theta F_a(\omega)\left(x_{\theta,a,\omega}-\bar \phi_a^\theta c_{a}\right)\tag{by \Cref{claim:expectedrandom2}}\\
		&=\sum_{\theta\in\Theta}\sum_{a\in\cA}\sum_{\omega\in\Omega}\xi_\theta F_a(\omega)\left(x_{\theta,a,\omega}- \phi_a^\theta c_{a}\right)\tag{as $\vphi=\bar\vphi$}
	\end{align*}
	
	\xhdr{Wrapping Up.} We proved that $(\bar\vx,\bar\vphi)$ is regular, it has the same objective of $(\vx,\vphi)$, and it is feasible. This concludes the proof.
\end{proof}

\regularLemmaDue*

\begin{proof}
	Given a regular solution $(\bar \vphi,\bar \vx)$ to Program~\ref{lp:relax}, consider the following solution $(\vphi,\vp)$ to Program~\ref{pr:quadratic}.
	For each $\theta \in \Theta,a \in \cA$, and $\omega\in \Omega$, we set $p_{\theta,a,\omega}= \bar x_{\theta,a,\omega}/\bar \phi^\theta_a$ if $\bar \phi^\theta_a\neq0$, and $p_{\theta,a,\omega}=0$ otherwise. Moreover, we set $\vphi=\bar \vphi$.
	
	Since $\bar \vphi,\bar \vx$ is regular, by Definition~\ref{def:regular} we have that $\bar \phi^\theta_a\neq 0$ whenever there exists an $\omega \in \Omega$ such that $x_{\theta,a,\omega}\neq0$. Hence, $\phi^{\theta}_a  p^{\theta,a}_\omega= x_{\theta,a,\omega}$ for each $\theta \in \Theta,a \in \cA$, and $\omega\in \Omega$. It is easy to check that this implies that $(\vphi,\vp)$ is a feasible solution to Program~\ref{pr:quadratic} with value at least $\sum_{\theta} \xi_\theta \sum_{\omega,a} F_{a}(\omega) (\bar x_{\theta,a,\omega}- \bar \phi^{\theta}_a c_{a})$.
\end{proof}
	
\end{document}